\documentclass[10pt,journal]{IEEEtran}

\ifCLASSINFOpdf

\else

\fi

\usepackage{color}
\usepackage{textcomp}
\usepackage[ruled]{algorithm2e}
\usepackage{graphicx}
\usepackage{subfigure}
\usepackage{amsmath,amssymb,amsfonts,mathrsfs,epsfig}
\usepackage{multirow}
\usepackage{amssymb}
\usepackage{algorithmic}
\usepackage{multirow}
\usepackage{changebar}
\usepackage{booktabs}
\usepackage{stfloats}
\usepackage{epstopdf}
\usepackage{xcolor,stfloats}
\usepackage{bm}
\usepackage[colorlinks,citecolor=blue,linkcolor=blue,urlcolor=blue]{hyperref}
\usepackage{subeqnarray}
\usepackage{cases}
\usepackage{tcolorbox}

\newtheorem{theorem}{Theorem}
\newtheorem{remark}{Remark}
\newtheorem{example}{Example}

\newtheorem{proof}{Proof}

\newtheorem{corollary}{Corollary}

\begin{document}

\title{Artificial Noise Versus Artificial Noise Elimination: Redefining Scaling Laws of Physical Layer Security}

\author{Hong Niu,
        Tuo Wu,
        Xia Lei,
        Wanbin Tang,
        M\'{e}rouane Debbah,~\IEEEmembership{Fellow, IEEE,}\\
        H. Vincent Poor,~\IEEEmembership{Life Fellow, IEEE,}
        and Chau Yuen,~\IEEEmembership{Fellow, IEEE}


\thanks{

Hong Niu, Tuo Wu, and Chau Yuen are with the School of Electrical and Electronics Engineering, Nanyang Technological University, Singapore 639798 (E-mail: $\rm \{hong.niu,tuo.wu,chau.yuen\}@ntu.edu.sg$).

X. Lei and W. Tang are with the National Key Laboratory of Wireless Communications, University of Electronic Science and Technology of China, Chengdu 611731, China (E-mail: $\rm \{leixia,wbtang.\}@uestc.edu.cn$).

M. Debbah is with the Research Institute for Digital Future, Khalifa University, 127788 Abu Dhabi, United Arab Emirates (E-mail: $\rm merouane.debbah@ku.ac.ae$).

H. V. Poor is with the Department of Electrical and Computer Engineering, Princeton University, Princeton, NJ, USA 08544 (E-mail: $\rm poor@princeton.edu$).

}
}


\markboth{}%
{Shell \MakeLowercase{\textit{et al.}}: }
\maketitle

\begin{abstract}
Artificial noise (AN) is a key physical-layer security scheme for wireless communications over multiple-input multiple-output wiretap channels. Recently, artificial noise elimination (ANE) has emerged as a strategy to mitigate the impact of AN on eavesdroppers. However, the influence of ANE on the secrecy rate when counteracting AN has not been investigated. In this paper, we address this issue by establishing scaling laws for both average and instantaneous secrecy rates in the presence of AN and ANE. Based on the scaling laws, several derived corollaries provide insights into the mutual constraints between the number of transmit antennas, receive antennas, and antennas at eavesdroppers, revealing the interplay between these factors. A key corollary reveals that when the eavesdropper possesses more than twice as many antennas as the transmitter, secure communication may no longer be guaranteed. Additionally, by comparing scenarios where ANE counteracts AN with those where AN is not employed, this study identifies sufficient conditions under which AN remains effective. Finally, the derived secrecy rates provide guidelines for system design, even in the presence of advanced ANE countermeasures implemented by the eavesdropper.
\end{abstract}

\begin{IEEEkeywords}
Artificial noise (AN), artificial noise elimination (ANE), secrecy rate, physical-layer security.
\end{IEEEkeywords}

\IEEEpeerreviewmaketitle

\section{Introduction}

\IEEEPARstart{T}{he} security of data transmission has always been a critical concern in wireless communications, primarily due to the broadcast nature of signal propagation. Although conventional cryptographic mechanisms \cite{CM1}-\cite{CM3} are effective in enhancing secrecy, they entail extra computational overhead and increased latency, arising from the redundant processes of key distribution and management. More critically, these encryption methods rely on the assumption that decryption is infeasible without knowledge of the key, an assumption that becomes increasingly vulnerable with rapid advances in computational power. Moreover, the infrastructure needed for management of such methods is not necessarily present in all wireless systems.

In recent decades, physical-layer security (PLS) \cite{PLS1}-\cite{PLS5} has gained widespread recognition as a promising approach for achieving information-theoretic security in wireless communications. Unlike conventional cryptographic methods, PLS techniques leverage the inherent characteristics of wireless channels, such as noise, fading, and channel diversity, to enhance security at the physical layer. By utilizing the spatial and temporal variability of wireless channels, PLS techniques degrade the signal quality received by an unauthorized eavesdropper (Eve) while preserving the quality of the signal for the legitimate receiver (Bob), thereby improving security. The concept of PLS was first introduced by Wyner in \cite{wyner1}, who characterized the secrecy capacity as the difference in channel capacity between Bob and Eve, showing that perfect secrecy can be achieved when the secrecy capacity is positive.

Building on this principle, artificial noise (AN) \cite{AN1}-\cite{AN6} has emerged as a prominent PLS technique, utilizing legitimate channel state information (CSI) to degrade Eve's ability to intercept transmitted messages. Specifically, AN is generated by the legitimate transmitter (Alice) to exploit the spatial degrees-of-freedom (DoFs) of the wireless channel. By injecting AN into the null space of legitimate channels, it ensures that the interference affects Eve's reception without impacting Bob's. This strategy effectively preserves Bob's channel capacity while impairing Eve's, thus improving secrecy performance \cite{AN7}-\cite{AN16}. Given its effectiveness in securing wireless communications against eavesdropping, AN has been widely studied as a means to enhance the reliability of transmission in multiple-input multiple-output (MIMO) systems \cite{AN17}-\cite{AN22}.

Consequently, numerous scaling laws for applying AN have been derived, demonstrating how different parameters influence secrecy performance. For instance, studies have shown that more power should be allocated to the AN when Eve has more antennas \cite{AN8,AN15}. Moreover, the authors of \cite{ANE2} have derived closed-form expressions for the average and instantaneous secrecy rates under the AN framework.

However, as AN methods have advanced, countermeasures have also evolved to mitigate its impact, exposing vulnerabilities in PLS. Specifically, the emergence of artificial noise elimination (ANE) techniques has challenged the effectiveness of AN-based schemes \cite{ANE1}-\cite{HC}. For instance, the zero-forcing elimination scheme proposed in \cite{ANE1} allows an Eve equipped with more antennas than Alice to replicate the Alice-Bob channel, effectively mitigating the impact of AN. Similarly, the null-space elimination scheme introduced in \cite{ANE3} uses selective projections to enhance detection quality, even with fewer antennas. Furthermore, advanced methods such as the principal component analysis-based ANE scheme \cite{Fisher1} and the hyperplane clustering algorithm \cite{HC} have demonstrated the capability to suppress the impact of AN without requiring legitimate CSI.

These developments underscore the growing physical-layer insecurity \cite{PLI1,PLI2} caused by sophisticated eavesdropping techniques, raising an important question: \textit{Do existing AN scaling laws adequately capture the challenges posed by such advanced countermeasures?} This question motivates a \textbf{\textit{re-definition of scaling laws when AN is countered by ANE }} to better understand this threat to PLS.

Against this background, this paper characterizes both average and instantaneous secrecy rates in the presence of AN and ANE.
The key contributions of this paper are threefold.

\begin{enumerate}
  \item \textbf{Redefinition of Scaling Laws}: We categorize the scenarios into two groups based on whether or not ANE can fully eliminate the impact of AN. In the case where AN is completely mitigated by ANE, we derive closed-form expressions for both the average secrecy rate and the normalized instantaneous secrecy rate. In the case where residual AN persists, we provide a semi-analytical expression for the average secrecy rate and a closed-form expression for the normalized instantaneous secrecy rate.

  \item \textbf{Insightful Corollaries}: Building upon the redefined scaling laws, we derive several straightforward yet significant corollaries. Specifically, to determine the conditions under which Eve can achieve perfect eavesdropping, we provide sufficient conditions for the secrecy rates to become zero. It is shown that the ANE scheme enables Eve to achieve perfect eavesdropping despite the use of an AN scheme, when the number of antennas at the eavesdropper satisfies the condition ${N_e} \geqslant 2{N_a}-{N_b}$, where ${N_a}$, ${N_b}$, and ${N_e}$ denote the numbers of antennas at Alice, Bob, and Eve, respectively. Furthermore, we approximate the average and instantaneous secrecy rates and present a sufficient condition for achieving the secrecy capacity using Gaussian input alphabets.

  \item \textbf{Re-examination of the Effectiveness of AN}: To fairly evaluate the benefits of introducing AN, we derive closed-form expressions for both the average secrecy rate and the normalized instantaneous secrecy rate in the absence of AN. The derived corollaries reveal that, without AN, Eve can achieve perfect eavesdropping when the number of her antennas satisfies ${N_e} \geqslant {N_b}$. Furthermore, we provide sufficient conditions under which the introduction of AN remains effective in enhancing PLS.

\end{enumerate}

This paper is organized as follows. Section II presents the system model under AN and ANE. In Section III, new scaling laws for AN versus ANE are derived, including expressions for both the average secrecy rate and the normalized instantaneous secrecy rate. Section IV provides several key corollaries to the new scaling laws. The effectiveness of AN is discussed in Section V. Finally, conclusions are drawn in Section VI.

\textit{Notation}: In the following, bold lower-case and upper-case letters denote vectors and matrices, respectively. ${\left(  \cdot  \right)^H}$, $\left|  \cdot  \right|$ and $\left\|  \cdot  \right\|$ represent the conjugate transpose, modulus and norm operation, respectively. ${\left[ x \right]^ + } = \max \left( {x,0} \right)$ returns $x$ when $x$ is positive, and
0 otherwise. $\mathcal{C}\mathcal{N}({\mu },\sigma ^2)$ represents the complex Gaussian distribution with mean ${\mu }$ and variance $\sigma ^2$. The complex number field and the order of computational complexity are denoted by $\mathbb{C}$ and $\mathcal{O}$, respectively. $\mathcal{C}\mathcal{N}({\mathbf{u }},{\mathbf{\Sigma }})$ represents the distribution of a circularly symmetric complex Gaussian random vector with mean vector ${\mathbf{u }}$ and covariance variance ${\mathbf{\Sigma }}\succeq{\mathbf{0}}$.

\section{System model}

\begin{figure}[t]
\centering
\includegraphics[width=3.5in]{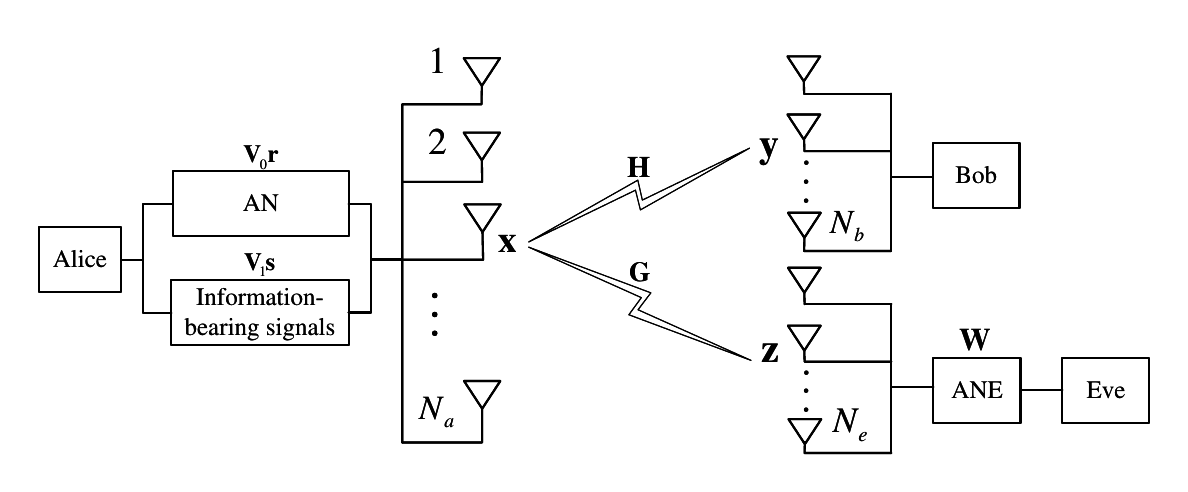}
\vspace{+3mm}
\caption{System model of AN vs. ANE in MIMO wireless communications.}
\label{fig_1}
\vspace{-0em}
\end{figure}

We consider a MIMO wireless communication system, where Alice communicates with Bob in the presence of a passive Eve. As illustrated in Fig. \ref{fig_1}, Alice, Bob, and Eve are equipped with ${N_a}$, ${N_b}$, and ${N_e}$ antennas, respectively. The received signals at Bob and Eve can be expressed as
\begin{equation}
{\mathbf{y}} = {\mathbf{Hx}} + {\mathbf{u}},
\end{equation}
\begin{equation}
{\mathbf{z}} = {\mathbf{Gx}} + {\mathbf{n}},
\end{equation}
where ${\mathbf{x}} \in {\mathbb{C}^{{N_a}}}$ denotes the transmitted signal at Alice, ${\mathbf{H}} \in {\mathbb{C}^{{N_b} \times {N_a}}}$ and ${\mathbf{G}} \in {\mathbb{C}^{{N_e} \times {N_a}}}$ represent the legitimate channel from Alice to Bob and the wiretap channel from Alice to Eve, respectively, while ${\mathbf{u}} \in {\mathbb{C}^{{N_b}}}$ and ${\mathbf{n}} \in {\mathbb{C}^{{N_e}}}$ are complex additive white Gaussian noise (AWGN) vectors with distributions $\mathcal{C}\mathcal{N}({\mathbf{0}},\sigma _u^2{{\mathbf{I}}_{{N_b}}})$ and $\mathcal{C}\mathcal{N}({\mathbf{0}},\sigma _n^2{{\mathbf{I}}_{{N_e}}})$, respectively, i.e., $\sigma _u^2$ and $\sigma _n^2$ represent the noise power at Bob and Eve, respectively. In this paper, we assume that the legitimate and wiretap channels are uncorrelated MIMO Rayleigh-fading channels, i.e., the entries of ${\mathbf{H}}$ and ${\mathbf{G}}$ are i.i.d. $\mathcal{C}\mathcal{N}(0,1)$.

Additionally, we assume AN is employed by Alice to interfere with Eve, with the goal of securing the legitimate transmission to Bob, while ANE is employed at Eve to reduce the impact of the AN.

\subsection{AN Scheme}

To enhance the security of the system, the information-bearing signal and AN interference are combined in the transmitted signal as
\begin{equation}\label{x:1}
{\mathbf{x}} = {{\mathbf{V}}_1}{\mathbf{s}} + {{\mathbf{V}}_0}{\mathbf{r}} = {\mathbf{V}}\left[ {\begin{array}{*{20}{c}}
  {\mathbf{s}} \\
  {\mathbf{r}}
\end{array}} \right],
\end{equation}
where ${\mathbf{s}} \in {\mathbb{C}^{{N_b}}}$ denotes the information-bearing signal and ${\mathbf{r}} \in {\mathbb{C}^{{N_a} - {N_b}}}$ represents a random vector to generate AN. ${{\mathbf{V}}_1} \in {\mathbb{C}^{{N_a} \times {N_b}}}$ spans the characteristic subspace corresponding to the effective information-bearing signal part of the transmission, and ${{\mathbf{V}}_0} \in {\mathbb{C}^{{N_a} \times \left( {{N_a} - {N_b}} \right)}}$ spans the null space associated with the noisy part of the transmission. Their combination, ${\mathbf{V}} = \left[ {\begin{array}{*{20}{c}}
  {{{\mathbf{V}}_1}}&{{{\mathbf{V}}_0}}
\end{array}} \right]$, forms a unitary matrix containing right singular vectors of ${{\mathbf{H}}}$, which can be obtained by applying a singular value decomposition (SVD) to ${\mathbf{H}}$ as
\begin{equation}\label{svd:1}
{\mathbf{H}} = {\mathbf{U}}\left[ {\begin{array}{*{20}{c}}
  {\mathbf{\Lambda }}&{{{\mathbf{0}}_{{N_a} - {N_b}}}}
\end{array}} \right]{{\mathbf{V}}^H},
\end{equation}
where ${\mathbf{U}} \in {\mathbb{C}^{{N_b} \times {N_b}}}$ denotes a unitary matrix consisting of left singular vectors of ${{\mathbf{H}}}$ and ${\mathbf{\Lambda }} \in {\mathbb{C}^{{N_b} \times {N_b}}}$ represents a diagonal matrix whose diagonal elements are singular values of ${{\mathbf{H}}}$.

{After the transmitted signal passes through the legitimate channel ${\mathbf{H}}$ and the wiretap channel ${\mathbf{G}}$, the received signals at Bob and Eve can be expressed as
\begin{equation}\label{A:2}
{\mathbf{y}} = {\mathbf{H}}{{\mathbf{V}}_1}{\mathbf{s}} + {\mathbf{u}}, \ \ \ \ \ \ \ \ \ \ \
\end{equation}
\begin{equation}\label{A:1}
{{\mathbf{\tilde z}}} = {\mathbf{G}}{{\mathbf{V}}_1}{\mathbf{s}} + {\mathbf{G}}{{\mathbf{V}}_0}{\mathbf{r}} + {\mathbf{n}},
\end{equation}
where ${\mathbf{y}} \in {\mathbb{C}^{{N_b}}}$ denotes the received signal at Bob and ${\mathbf{\tilde z}} \in {\mathbb{C}^{{N_e}}}$ represents the received signal at Eve without applying ANE.}

As inferred from (\ref{A:2}) and (\ref{A:1}), the AN ${{\mathbf{V}}_0}{\mathbf{r}}$ has no impact on the reception of Bob due to the null space property ${\mathbf{H}}{{\mathbf{V}}_0} = {\mathbf{0}}$, while it significantly degrades that of Eve due to the difference between the legitimate channel ${\mathbf{H}}$ and the wiretap channel ${\mathbf{G}}$.

\subsection{ANE Scheme}

The efficacy of AN in enhancing PLS has been extensively studied, often under the implicit or explicit assumption that Eve lacks the capacity to mitigate its effects. This assumption supports the common belief that AN always degrade Eve's channel quality. However, recent studies have challenged this view by demonstrating that advanced eavesdroppers can implement ANE techniques to counteract AN interference \cite{ANE1}-\cite{HC}. Specifically, sophisticated Eves can leverage techniques, such as processing multiple received vectors ${\mathbf{z}}$, to estimate the effective AN channel, represented by the composite matrix ${\mathbf{G}}{{\mathbf{V}}_0}$. With knowledge or an estimate of this matrix, Eve can subsequently apply projections or other signal processing methods to significantly suppress, or in certain conditions completely nullify, the AN interference. This capability implies that previous security analyses overlooking the potential for ANE might yield overly optimistic assessments of the secrecy rate improvements afforded by AN.

{To provide a more rigorous analysis, this paper investigates the fundamental secrecy limits under the following scenario:

1) Alice is assumed to have knowledge of the legitimate channel ${\mathbf{H}}$, enabling her to generate AN confined to Bob's null space, i.e., ${\mathbf{H}}{{\mathbf{V}}_0} = {\mathbf{0}}$;

2) Eve is assumed to have knowledge of the composite channel ${\mathbf{G}}{{\mathbf{V}}_0}$, empowering her to implement informed and potentially highly effective ANE strategies.

Note that Eve cannot acquire Bob's channel ${\mathbf{H}}$ and hence cannot directly obtain ${{\mathbf{V}}_0}$. Nevertheless, Eve may estimate the composite channel ${\mathbf{G}}{{\mathbf{V}}_0}$ by jointly processing multiple received signal vectors over a channel coherence interval. This assumption represents an idealized scenario for secrecy analysis and yields a conservative secrecy bound. In practical systems, estimation errors are inevitable, and their impact is left for future investigation.}

This scenario enables Eve to employ the following ANE countermeasure as
\begin{equation}\label{W:1}
\begin{gathered}
  \mathop {\min }\limits_{\mathbf{W}} {\text{tr}}\left( {{\mathbf{WG}}{{\mathbf{V}}_0}{\mathbf{V}}_0^H{{\mathbf{G}}^H}{{\mathbf{W}}^H}} \right) \hfill \\
  {\text{s}}{\text{.t}}{\text{. tr}}\left( {{\mathbf{W}}{{\mathbf{W}}^H}} \right) = {N_w}, \hfill \\
\end{gathered}
\end{equation}
where ${\mathbf{W}} \in {\mathbb{C}^{{N_w} \times {N_e}}}$ denotes the projection matrix for reducing the impact of AN and ${{N_w}}$ represents the dimension of the signal after projection. {In the scenario of ${N_e} > {N_a} - {N_b}$, the optimal projection matrix ${\mathbf{W}}$ is given by (\ref{optW:1}) with ${N_w} = {N_a} - {N_b} + 1$. Conversely, in the scenario of ${N_e} \leqslant {N_a} - {N_b}$, the optimal projection reduces to a vector ${{\mathbf{w}}^H}$ given in (\ref{optw:2}) with ${N_w} = 1$.}

{When Eve employs the ANE countermeasure, the processed received signal can be rewritten as
\begin{equation}
{\mathbf{z}} = {\mathbf{WG}}{{\mathbf{V}}_1}{\mathbf{s}} + {\mathbf{WG}}{{\mathbf{V}}_0}{\mathbf{r}} + {\mathbf{Wv}}.
\end{equation}}

\subsection{Average and Instantaneous Secrecy Rates}

In this paper, we use the following definitions of the average secrecy capacity ${\bar C_s}$, the average secrecy rate ${{\bar R}_s}$, and the instantaneous secrecy rate ${R_s}$:
\begin{equation}\label{cs:1}
{\bar C_s} \triangleq \mathop {\max }\limits_{p\left( {\mathbf{s}} \right)} {\bar R_s}, \ \ \ \ \ \ \ \ \ \ \ \ \ \ \ \ \ \ \ \ \ \ \ \ \ \ \ \ \ \ \ \ \ \ \ \ \ \ \ \ \ \
\end{equation}
\begin{equation}\label{vr:2}
\begin{gathered}
  {{\bar R}_s} \triangleq {\left[ {{{\bar R}_b} - {{\bar R}_e}} \right]^ + } \hfill \\
  \ \ \ \  = {\left( {{{\text{E}}_{\mathbf{H}}}\left[ {I\left( {{\mathbf{s}};{\mathbf{y}}\left| {\mathbf{H}} \right.} \right)} \right] - {{\text{E}}_{{\mathbf{H}},{\mathbf{G}}}}\left[ {I\left( {{\mathbf{s}};{\mathbf{z}}\left| {{\mathbf{H}},{\mathbf{G}}} \right.} \right)} \right]} \right)^ + }, \hfill \\
\end{gathered}
\end{equation}
\begin{equation}\label{vr:1}
{R_s} \triangleq {\left[ {{R_b} - {R_e}} \right]^ + } = {\left[ {I\left( {{\mathbf{s}};{\mathbf{y}}} \right) - I\left( {{\mathbf{s}};{\mathbf{z}}} \right)} \right]^ + }, \ \ \ \ \ \ \ \ \
\end{equation}
where {$I\left( {{\mathbf{s}};{\mathbf{y}}} \right)$ and $I\left( {{\mathbf{s}};{\mathbf{z}}} \right)$ denote the mutual information between the information-bearing signal ${\mathbf{s}}$ and the received signal vectors ${\mathbf{y}}$ at Bob and ${\mathbf{z}}$ at Eve, respectively, quantifying how much information about ${\mathbf{s}}$ can be extracted from
${\mathbf{y}}$ and ${\mathbf{z}}$}. In addition, ${{\rm E}_Z}\left[ {I\left( {X;Y} \right)\left| Z \right.} \right]$ represents the average mutual information (see the definitions of (12), (13), (14) in \cite{ANE2}).

\section{Redefinition of Scaling Laws for AN}

In this section, we present a closed-form expression for the average secrecy rate in (\ref{vr:2}). Then we derive an asymptotic analysis on the instantaneous secrecy rate in (\ref{vr:1}). To simplify the presentation of the derivation results, we define several parameters and functions.

\subsection{Definitions}

To simplify the notation, we define three system parameters: the SNR of Eve, the power ratio of AN to the information-bearing signal, and the Eve-to-Bob noise power ratio, as
\begin{equation}
\alpha  \triangleq \sigma _s^2/\sigma _n^2,
\end{equation}
\begin{equation}
\beta  \triangleq \sigma _r^2/\sigma _s^2,
\end{equation}
and
\begin{equation}
\gamma  \triangleq \sigma _n^2/\sigma _u^2,
\end{equation}
respectively.

Note that the SNR of Bob is given by $\alpha \gamma $, and the power of the information-bearing signal and AN can be expressed as
\begin{equation}
{P_s} = {\left\| {\mathbf{s}} \right\|^2} = \alpha \gamma {N_b}\sigma _u^2, \ \ \ \ \ \ \ \ \ \ \ \
\end{equation}
\begin{equation}
{P_r} = {\left\| {\mathbf{r}} \right\|^2} = \alpha \beta \gamma \left( {{N_a} - {N_b}} \right)\sigma _u^2.
\end{equation}

Next, we give a closed-form ergodic capacity integrand for the Rayleigh MIMO channel (see (8) in \cite{SR1}) as
\begin{equation}\label{func:1}
\begin{small}
\begin{gathered}
  f\left( {t,r,x} \right) \triangleq {e^{ - 1/x}}{\log _2}\left( e \right)\sum\limits_{k = 0}^{m - 1} {\sum\limits_{l = 0}^k {\sum\limits_{i = 0}^{2l} {\left\{ {\sum\limits_{j = 0}^{n - m + i} {{E_{j + 1}}\left( {1/x} \right)} } \right.} } }  \hfill \\
  \left. {\frac{{{{\left( { - 1} \right)}^i}\left( {2l} \right)!\left( {n - m + l} \right)!}}{{{2^{2k - 1}}l!i!\left( {n - m + l} \right)!}}\left( {\begin{array}{*{20}{c}}
  {2\left( {k - l} \right)} \\
  {k - l}
\end{array}} \right)\left( {\begin{array}{*{20}{c}}
  {2\left( {l + n - m} \right)} \\
  {2l - i}
\end{array}} \right)} \right\}, \hfill \\
\end{gathered}
\end{small}
\end{equation}
where $m$ and $n$ define two values based on the minimum and maximum of $t$ and $r$ as
\begin{equation}
m = \min \left\{ {t,r} \right\},n = \max \left\{ {t,r} \right\},
\end{equation}
and ${E_p}\left( z \right)$ is the exponential integral of order $p$ (see (9) in \cite{SR1}), given by
\begin{equation}
{E_p}\left( z \right) = \int_1^\infty  {{e^{ - zx}}{x^{ - p}}{\text{d}}x} .
\end{equation}

Additionally, we will use an expression of the incomplete Gamma function, shown as
\begin{equation}\label{z:1}
\Gamma \left( {a,b} \right) = \int_b^\infty  {{x^{a - 1}}{e^{ - x}}{\text{d}}x} .
\end{equation}

\subsection{Average Secrecy Rate}

The analysis of the average secrecy rate can be divided into two scenarios, i.e., ${N_e} > {N_a} - {N_b}$ and ${N_e} \leqslant {N_a} - {N_b}$. In the former scenario, the ANE scheme can perfectly eliminate the impact of AN at Eve, while in the latter scenario, using ANE only partially mitigates the impact of AN and may leave residual AN power. The presence of residual AN affects the expression for the achievable rate at Eve, thereby influencing the average secrecy rate. The key distinction lies in the mathematical formulation of Eve's average achievable rate, $\bar{R}_e$, in these two cases.

\subsubsection{${N_e} > {N_a} - {N_b}$ (Complete AN Elimination)}

In this scenario, a closed-form expression for the average secrecy rate in (\ref{vr:2}) can be obtained from (56) in \cite{ANE2}, resulting in the following theorem.

\begin{theorem}
In the case of ${N_e} > {N_a} - {N_b}$, the average secrecy rate is expressed as
\begin{equation}\label{theo:1}
\begin{aligned}
  {{\bar R}_s} &= {\left[ {{{\bar R}_b} - {{\bar R}_e}} \right]^ + } \\
               &= {\left[ {f\left( {{N_a},{N_b},\alpha \gamma {N_a}} \right) - f\left( {{N_b},{N_e} - {N_a} + {N_b},\alpha {N_b}} \right)} \right]^ + }.
\end{aligned}
\end{equation}
\end{theorem}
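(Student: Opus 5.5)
The plan is to compute $\bar R_b$ and $\bar R_e$ separately and show that each is an ergodic capacity of an i.i.d. Rayleigh MIMO channel with Gaussian input. Each can then be written through the integrand $f(t,r,x)$ in (\ref{func:1}), the closed form of \cite{SR1}. Throughout, $\mathbf{s}$ is taken as circularly symmetric Gaussian, $\mathbf{s}\sim\mathcal{CN}(\mathbf{0},\sigma_s^2\mathbf{I}_{N_b})$.

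\textbf{Bob's term.} From (\ref{A:2}) and the SVD (\ref{svd:1}), $\mathbf{H}\mathbf{V}_1=\mathbf{U}\mathbf{\Lambda}$. Hence
\[
I(\mathbf{s};\mathbf{y}\,|\,\mathbf{H})=\log_2\det\bigl(\mathbf{I}_{N_b}+\alpha\gamma\,\mathbf{\Lambda}^2\bigr)
=\log_2\det\bigl(\mathbf{I}_{N_b}+\alpha\gamma\,\mathbf{H}\mathbf{H}^H\bigr),
\]
using $\sigma_s^2/\sigma_u^2=\alpha\gamma$. This is the capacity of an $N_a\times N_b$ i.i.d. $\mathcal{CN}(0,1)$ channel. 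To match the SNR normalization of \cite{SR1} (total power $P$ spread over the $N_a$ transmit antennas, per-antenna SNR $P/N_a$), I set $P=\alpha\gamma N_a$. Averaging over $\mathbf{H}$ then gives $\bar R_b=f(N_a,N_b,\alpha\gamma N_a)$. The only care needed is the bookkeeping of which argument of $f$ carries the power normalization.

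\textbf{Eve's term.} When $N_e>N_a-N_b$, the $N_e\times(N_a-N_b)$ matrix $\mathbf{G}\mathbf{V}_0$ has a left null space of dimension $N_e-N_a+N_b$ almost surely. The optimal $\mathbf{W}$ of (\ref{W:1}) projects onto this null space, and the rank-$(N_e-N_a+N_b)$ part of the projection sets $\mathbf{W}\mathbf{G}\mathbf{V}_0=\mathbf{0}$. I will argue that no information is lost by using an orthonormal basis $\mathbf{W}$ of $\mathrm{null}\bigl((\mathbf{G}\mathbf{V}_0)^H\bigr)$. This is the idealized ANE of (56) in \cite{ANE2}, and it yields
\[
I(\mathbf{s};\mathbf{z})=\log_2\det\bigl(\mathbf{I}+\alpha\,\mathbf{W}\mathbf{G}\mathbf{V}_1\mathbf{V}_1^H\mathbf{G}^H\mathbf{W}^H\bigr).
\]
The noise stays white because $\mathbf{W}$ has orthonormal rows.

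The crucial distributional step comes next. Since $\mathbf{V}$ is unitary and depends only on $\mathbf{H}$, which is independent of $\mathbf{G}$, the matrix $\mathbf{G}\mathbf{V}=[\mathbf{G}\mathbf{V}_1,\ \mathbf{G}\mathbf{V}_0]$ is still i.i.d. $\mathcal{CN}(0,1)$. In particular, $\mathbf{G}\mathbf{V}_1$ is independent of $\mathbf{G}\mathbf{V}_0$. The projector $\mathbf{W}$ depends only on $\mathbf{G}\mathbf{V}_0$, so conditionally on it, unitary invariance shows that $\mathbf{W}\mathbf{G}\mathbf{V}_1$ is an $(N_e-N_a+N_b)\times N_b$ i.i.d. $\mathcal{CN}(0,1)$ matrix. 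Eve therefore sees an equivalent $N_b$-transmit, $(N_e-N_a+N_b)$-receive Rayleigh channel with per-stream SNR $\alpha$. In the normalization of \cite{SR1} this is total SNR $\alpha N_b$, so $\bar R_e=f(N_b,N_e-N_a+N_b,\alpha N_b)$.

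Substituting both expressions into (\ref{vr:2}) gives (\ref{theo:1}). I expect the main obstacle to be justifying that the ANE output is a sufficient statistic, or at least that it defines Eve's rate as the model intends. The paper's formulation (\ref{W:1}) nominally uses $N_w=N_a-N_b+1$, which does not coincide with the null-space dimension $N_e-N_a+N_b$. I would resolve this by adopting the convention of (56) in \cite{ANE2}: Eve's rate is the rate after exact AN nulling, with only the rows achieving zero AN power retained.
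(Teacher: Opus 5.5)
Your proposal is correct and follows the paper's Appendix A essentially step for step. For Bob, you use $\mathbf{H}\mathbf{V}_1\mathbf{V}_1^H\mathbf{H}^H=\mathbf{H}\mathbf{H}^H$ together with the closed form of \cite{SR1}. For Eve, you use the projection $\mathbf{W}=\mathbf{V}_2^H$ onto the null space of $\mathbf{V}_0^H\mathbf{G}^H$, which in (\ref{optW:1}) has $N_e-N_a+N_b$ rows, so your reading of the $N_w$ inconsistency matches the paper's proof. This gives an equivalent $(N_e-N_a+N_b)\times N_b$ i.i.d. Rayleigh channel, and your independence and unitary-invariance argument justifies what the paper only asserts.

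Drop the idea of proving that the projected output is a sufficient statistic for $\tilde{\mathbf{z}}$. That claim is false for finite $\beta$, because the AN-occupied subspace still carries signal information. It is also unnecessary, since $\bar R_e$ is defined through the post-ANE signal $\mathbf{z}$, which is exactly the convention you fall back on.
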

\begin{proof}
See Appendix A.
\end{proof}

Theorem 1 provides a closed-form expression for the average secrecy rate in the regime $N_e > N_a - N_b$. Eve's average achievable rate in this case is given by
\begin{equation}\label{areve:1}
{\bar R}_e = f\left( {{N_b},{N_e} - {N_a} + {N_b},\alpha {N_b}} \right),
\end{equation}
which utilizes the function $f(\cdot)$ defined in (\ref{func:1}) for an equivalent interference-free MIMO channel.

{Although some analytical tools employed in this paper are related to existing results in \cite{ANE2}, the problem formulation, secrecy model, and theoretical conclusions are fundamentally different. In particular, prior works assume that AN always causes non-negligible interference at Eve, regardless of its antenna configuration or signal processing capability. In contrast, this paper considers a more powerful adversary capable of performing ANE. Under this model, the average achievable rate at Eve exhibits different behavior, especially in the regime ${N_e} > {N_a} - {N_b}$, where AN can be perfectly mitigated. As a result, the derived secrecy rate expressions, including Theorem 1, cannot be obtained by a direct application of existing results.}

\begin{figure}[t]
\centering
\includegraphics[width=3.5in,height=2.8in]{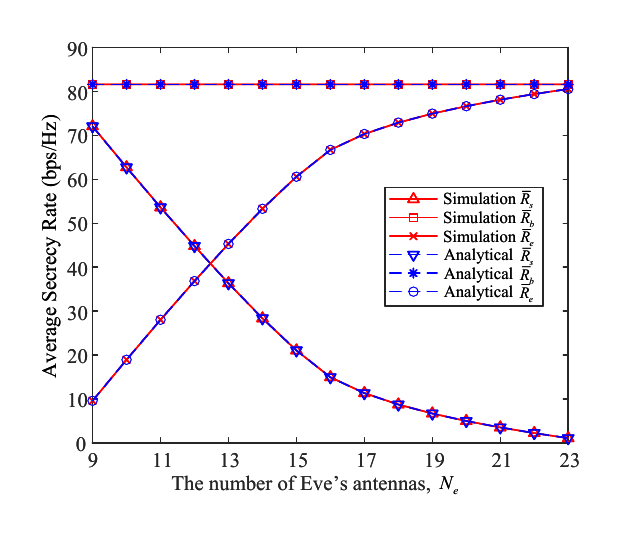}
\vspace{+3mm}
\caption{${{\bar R}_s}$, ${{\bar R}_b}$, and ${{\bar R}_e}$ versus $N_e$ with $\alpha  = 20$ dB, $\beta  = 1$, $\gamma  = 1$, $N_a=16$, and $N_b=8$.}
\label{fig_2}
\vspace{-0em}
\end{figure}

\begin{remark}
It can be observed that the AN power ratio $\beta$ is absent in (\ref{areve:1}), confirming that the ANE scheme completely eliminates the impact of AN in the case of ${N_e} > {N_a} - {N_b}$.
\end{remark}

\begin{example}
Fig. \ref{fig_2} plots the analytical average secrecy rate and its simulation results in an effort to validate the correctness of (\ref{theo:1}). The system configuration is set with parameters $\alpha  = 20$ dB, $\beta  = 1$, $\gamma  = 1$, $N_a=16$, and $N_b=8$, while $N_e$ varies from 9 to 23. As expected, the analytical average secrecy rate and achievable rates, i.e., ${{\bar R}_s}$, ${{\bar R}_b}$, and ${{\bar R}_e}$, align perfectly with the simulation results. As the number of antennas at the eavesdropper $N_e$ increases, the average achievable rate at Eve rises, leading to a corresponding decrease in the average secrecy rate.
\end{example}

\subsubsection{${N_e} \leqslant {N_a} - {N_b}$ (Residual AN Interference)}

When the number of Eve's antennas is insufficient, the impact of AN cannot be perfectly eliminated, and some residual AN power remains. In this case, the expression for the average secrecy rate in (\ref{vr:2}) is modified as follows.

\begin{theorem}
In the case of ${N_e} \leqslant {N_a} - {N_b}$, the average secrecy rate is given by
\begin{equation}\label{theo:2}
\begin{aligned}
  {{\bar R}_s} &= {\left[ {{{\bar R}_b} - {{\bar R}_e}} \right]^ + } \\
    &= \bigg[ f\left( {{N_a},{N_b},\alpha \gamma {N_a}} \right) - \\
    &\quad \int_0^\infty \int_0^\infty {\log_2 \left( {1 + \frac{{\alpha A}}{{\alpha \beta B + 1}}} \right){f_A}\left( a \right){f_B}\left( b \right){\text{d}a}{\text{d}b}} \bigg]^+,
\end{aligned}
\end{equation}
where
\begin{equation}
{f_A}\left( a \right) = \frac{{{a^{{N_b} - 1}}{e^{ - a}}}}{{\Gamma \left( {{N_b},0} \right)}}
\end{equation}
denotes the probability density distribution (PDF) of the scaled signal power $A \sim {\chi ^2}(2{N_b})/2$, and
\begin{equation}
{f_B}\left( b \right) = K\sum\limits_{p = 1}^{{N_e}} {\sum\limits_{q = 1}^{{N_e}} {{{\left( { - 1} \right)}^{p + q}}{b^{p + q - 2 + {N_a} - {N_b} - {N_e}}}{e^{ - b}}} } \left| {\mathbf{\Omega }} \right|
\end{equation}
represents the PDF of the squared minimum eigenvalue $B = \lambda_{\min}^2$ of the matrix ${{\mathbf{G}}{{\mathbf{V}}_0}{\mathbf{V}}_0^H{{\mathbf{G}}^H}}$, with
\begin{equation}
K = {\left[ {\prod _{i = 1}^{{N_e}}\left( {{N_a} - {N_b} - i} \right)!\prod _{j = 1}^{{N_e}}\left( {{N_e} - j} \right)!} \right]^{ - 1}},
\end{equation}
\begin{equation}
{\Omega _{i,j}} = \Gamma \left( {\alpha _{i,j}^{\left( p \right)\left( q \right)} + {N_a} - {N_b} - {N_e} + 1,b} \right),
\end{equation}
\begin{equation}
\alpha _{i,j}^{\left( p \right)\left( q \right)} = \left\{ \begin{gathered}
  i + j - 2,{\text{   if }}i < p{\text{ and }}j < q, \\
  i + j,{\text{ \ \ \ \ \       if }}i \geqslant p{\text{ and }}j \geqslant q, \\
  i + j - 1,{\text{    otherwise}}{\text{.}} \ \ \ \ \ \ \ \ \ \ \\
\end{gathered}  \right.
\end{equation}
\end{theorem}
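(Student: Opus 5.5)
The proof handles Bob's term and Eve's term separately. Bob's term $\bar R_b$ does not depend on the relation between $N_e$ and $N_a-N_b$. As in Theorem 1, (\ref{A:2}) shows that Bob sees an interference-free $N_b\times N_a$ Rayleigh MIMO channel with SNR $\alpha\gamma$, since ${\mathbf{H}}{\mathbf{V}}_0={\mathbf{0}}$. So $\bar R_b=f(N_a,N_b,\alpha\gamma N_a)$ follows from the same argument as in Appendix A and the ergodic-capacity integrand (\ref{func:1}). All new work is in $\bar R_e$.

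\textbf{Step 1: solve the ANE problem (\ref{W:1}) when $N_e\le N_a-N_b$.} Here ${\mathbf{G}}{\mathbf{V}}_0\in\mathbb{C}^{N_e\times(N_a-N_b)}$ has full row rank almost surely, so no nonzero ${\mathbf{W}}$ annihilates it. With the trace constraint, the minimizer is the single unit-norm vector ${\mathbf{w}}$ along the eigenvector of ${\mathbf{G}}{\mathbf{V}}_0{\mathbf{V}}_0^H{\mathbf{G}}^H$ for its minimum eigenvalue $B$ (Rayleigh quotient). Eve's processed signal is then scalar:
\[
z={\mathbf{w}}^H{\mathbf{G}}{\mathbf{V}}_1{\mathbf{s}}+{\mathbf{w}}^H{\mathbf{G}}{\mathbf{V}}_0{\mathbf{r}}+{\mathbf{w}}^H{\mathbf{n}}.
\]
Taking Gaussian ${\mathbf{s}},{\mathbf{r}}$ with covariances $\sigma_s^2{\mathbf{I}}$ and $\sigma_r^2{\mathbf{I}}$, and treating the residual AN as Gaussian noise, the residual AN power is $\sigma_r^2 B$. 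Hence
\[
R_e=\log_2\!\left(1+\frac{\alpha A}{\alpha\beta B+1}\right),\qquad A=\|{\mathbf{w}}^H{\mathbf{G}}{\mathbf{V}}_1\|^2 .
\]

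\textbf{Step 2: distributions and independence.} ${\mathbf{V}}$ depends only on ${\mathbf{H}}$, and ${\mathbf{G}}$ is i.i.d.\ Gaussian and independent of ${\mathbf{H}}$. By unitary invariance, ${\mathbf{G}}{\mathbf{V}}_1$ and ${\mathbf{G}}{\mathbf{V}}_0$ are independent i.i.d.\ $\mathcal{CN}(0,1)$ matrices that are also independent of ${\mathbf{H}}$. The vector ${\mathbf{w}}$ is a function of ${\mathbf{G}}{\mathbf{V}}_0$ alone. Conditioned on ${\mathbf{w}}$, the vector ${\mathbf{w}}^H{\mathbf{G}}{\mathbf{V}}_1$ has $N_b$ i.i.d.\ $\mathcal{CN}(0,1)$ entries, so $A\sim\chi^2(2N_b)/2$, giving $f_A$. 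Its law does not depend on ${\mathbf{w}}$, so $A$ is independent of $B$. $B$ is the smallest eigenvalue of an $N_e\times N_e$ complex Wishart matrix with $N_a-N_b$ degrees of freedom. I would obtain $f_B$ from the known finite-size smallest-eigenvalue density of complex uncorrelated Wishart matrices, written as a cofactor expansion of a determinant of incomplete Gamma functions. This gives exactly $K$, the sign pattern $(-1)^{p+q}$, the shifted indices $\alpha_{i,j}^{(p)(q)}$ from deleting row $p$ and column $q$, and the entries $\Omega_{i,j}=\Gamma(\cdot,b)$. This matches the expression used in \cite{ANE2} for the residual-noise case.

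\textbf{Step 3: combine.} Averaging $R_e$ over the independent pair $(A,B)$ gives the double integral in (\ref{theo:2}). Subtracting from $\bar R_b$ and applying $[\cdot]^+$ completes the proof.

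The main obstacle is Step 2's density $f_B$. Matching the exact determinant form and normalization $K$ requires careful bookkeeping: starting from the joint eigenvalue density, integrating out the other $N_e-1$ eigenvalues over $(b,\infty)$ via an Andréief-type identity, and differentiating the resulting CDF. The independence of $A$ and $B$, although short, is the other point that must be argued explicitly.
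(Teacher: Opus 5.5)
Your proposal is correct and follows essentially the same route as the paper's Appendix B. The steps line up one for one: the minimum-eigenvector projection of $\mathbf{G}\mathbf{V}_0\mathbf{V}_0^H\mathbf{G}^H$ gives the scalar SINR $\alpha A/(\alpha\beta B+1)$, and $A\sim\chi^2(2N_b)/2$ is independent of $B$ because $\mathbf{G}\mathbf{V}_1$ is isotropic and independent of $\mathbf{G}\mathbf{V}_0$. For $f_B$, the paper cites the Zanella--Chiani--Win smallest-eigenvalue density of an uncorrelated complex Wishart matrix rather than rederiving it. One small tightening: to conclude $A\perp B$, condition on the whole of $\mathbf{G}\mathbf{V}_0$ (which determines both $\mathbf{w}$ and $B$), not on $\mathbf{w}$ alone; the conditional law of $A$ is then $\chi^2(2N_b)/2$ for every realization, so $A$ is independent of $\mathbf{G}\mathbf{V}_0$ and hence of $B$.
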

\begin{proof}
See Appendix B.
\end{proof}

Theorem 2 shows that when $N_e \le N_a - N_b$, Eve's average achievable rate requires calculating an expectation via a double integral as
\begin{equation}\label{areve:2}
{\bar R}_e = \int_0^\infty \int_0^\infty {\log_2 \left( {1 + \frac{{\alpha A}}{{\alpha \beta B + 1}}} \right){f_A}\left( a \right){f_B}\left( b \right){{\text{d}}a}{{\text{d}}b}}.
\end{equation}
This semi-analytical formula averages the instantaneous rate over the distributions of the signal power component $A$ and the residual AN power component $B$. Unlike the complete elimination case, this expression for $\bar{R}_e$ explicitly depends on the AN power ratio $\beta$ through the SINR term $\alpha A/\left( {\alpha \beta B + 1} \right)$.

\begin{figure}[t]
\centering
\includegraphics[width=3.5in,height=2.8in]{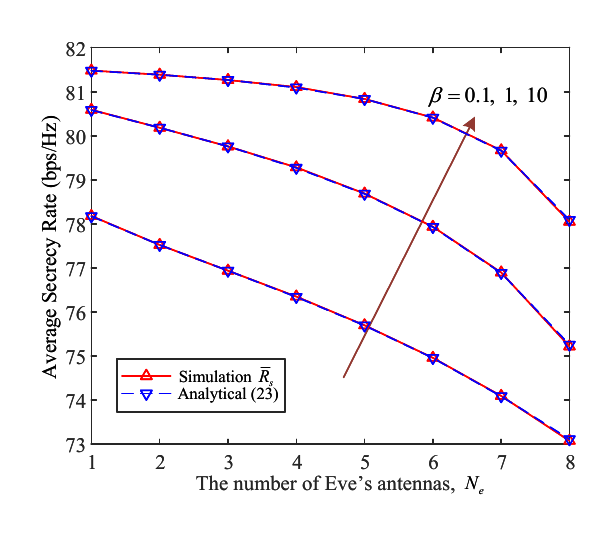}
\vspace{+3mm}
\caption{${{\bar R}_s}$ versus $N_e$ with $\alpha  = 20$ dB, $\gamma = 1$, $N_a=16$, and $N_b=8$.}
\label{fig_3}
\vspace{-0em}
\end{figure}

\begin{remark}
If Eve is equipped with only a single antenna ($N_e=1$), which naturally satisfied $N_e \le N_a - N_b$ due to the AN generation requiring ${N_a} > {N_b}$, the ANE scheme based on minimum eigenvalue projection offers no benefit, as the sole eigenvalue is the minimum eigenvalue. The AN power cannot be reduced further by projection in this specific subcase.
\end{remark}

\begin{example}
Fig. \ref{fig_3} depicts the analytical average secrecy rate from (\ref{theo:2}) alongside its simulation results for the case where ${{N_e}}  \leqslant  {{N_a}}-{{N_b}}$. The system configuration is set with $\alpha  = 20$ dB, $\gamma = 1$, $N_a=16$, and $N_b=8$, with ${{N_e}}$ varying from 1 to 8. As anticipated, the analytical average secrecy rate closely matches the simulation results. Furthermore, a larger value of $\beta$ leads to a higher secrecy rate, as the increased residual AN power further degrades Eve's channel.
\end{example}

\subsubsection{Comparison of Eve's Average Achievable Rates}

The threshold $N_e = N_a - N_b$ creates a fundamental dichotomy in the mathematical characterization of the average eavesdropper rate $\bar{R}_e$. This distinction provides critical insights into the relationship between antenna configuration and ANE effectiveness.

For $N_e > N_a - N_b$, Eve's average achievable rate takes the closed-form expression given in (\ref{areve:1}), which corresponds to the ergodic capacity of an equivalent interference-free MIMO channel with $N_b$ transmit dimensions and $N_e - N_a + N_b$ receive dimensions. This formulation is a direct consequence of Eve's ability to completely eliminate AN interference through appropriate projection. Most notably, $\bar{R}_e$ is independent of the AN power ratio $\beta$, confirming that when Eve has sufficient spatial DoFs, Alice's AN power becomes irrelevant to the system's security performance. This theoretical property is empirically validated in Fig. \ref{fig_2}, the average eavesdropper rate $\bar{R}_e$ steadily increases with $N_e$, following the closed-form expression in (\ref{areve:1}). In the regime $N_e > N_a - N_b$, increasing Eve's spatial DoFs directly translates to higher achievable rates, regardless of the AN power setting.

Conversely, for $N_e \leq N_a - N_b$, Eve's average achievable rate requires evaluation of the double integral expression in (\ref{areve:2}), which is a semi-analytical expression requiring numerical integration. This formulation arises because Eve can only partially mitigate the AN, leaving residual interference proportional to the minimum eigenvalue of ${\mathbf{G}}{{\mathbf{V}}_0}{\mathbf{V}}_0^H{{\mathbf{G}}^H}$. Unlike the complete elimination case, $\bar{R}_e$ explicitly depends on $\beta$ through the SINR term $\alpha A/\left( {\alpha \beta B + 1} \right)$, indicating that increasing AN power effectively degrades Eve's reception. This mathematical dependence on $\beta$ is clearly demonstrated in Fig. \ref{fig_3}, where increasing the AN power ratio from $\beta = 0.5$ to $\beta = 2$ results in progressively higher secrecy rates across the entire range of $N_e \leq N_a - N_b$. This confirms that when Eve lacks sufficient spatial DoFs to fully eliminate AN, the AN power allocation remains a critical design parameter for security.

The contrasting mathematical structures of $\bar{R}_e$ in these two regimes, empirically validated by Figs. \ref{fig_2} and \ref{fig_3}, demonstrate that the effectiveness of AN as a security mechanism is fundamentally determined by the relationship between Eve's spatial DoFs ($N_e$) and Alice's AN transmission dimensions ($N_a - N_b$). This insight provides clear design guidelines for secure MIMO systems operating in the presence of sophisticated eavesdroppers employing ANE techniques.

\subsection{Normalized Instantaneous Secrecy Rate}

The instantaneous secrecy rate $R_s$ in (\ref{vr:1}) depends on instantaneous channels ${\mathbf{H}}$ and ${\mathbf{G}}$. For robust secure communication design, it is crucial to characterize the asymptotic behavior of the normalized instantaneous secrecy rate $R_s/N_b$ as the number of antennas increases. We present analytical expressions that provide fundamental insights into how system security scales with antenna configurations.

\begin{theorem}
As ${N_a},{N_b},{N_e} \to \infty $ with ${N_e}/{N_b} \to {\delta _1}$, ${N_a}/{N_b} \to {\delta _2}$, and ${\delta _1} > {\delta _2} - 1$, the asymptotic behavior of the normalized instantaneous secrecy rate can be expressed as
\begin{equation}\label{theo:3}
\frac{{{R_s}}}{{{N_b}}} \to {\left[ {\Phi \left( {\alpha \gamma {N_b},{\delta _2}} \right) - \Phi \left( {\alpha {N_b},{\delta _1} - {\delta _2} + 1} \right)} \right]^ + },
\end{equation}
where $\Phi \left( {x,y} \right)$ is defined as
\begin{equation}\label{fidef:1}
\begin{aligned}
  \Phi \left( {x,y} \right) &= y\log \left( {1 + x - \frac{{\mathcal{F}\left( {x,y} \right)}}{4}} \right) + \\
   & \log \left( {1 + xy + \frac{{\mathcal{F}\left( {x,y} \right)}}{4}} \right) - \frac{{\log e}}{{4x}}\mathcal{F}\left( {x,y} \right), \\
\end{aligned}
\end{equation}
\begin{equation}\label{Fdef:1}
\begin{small}
\mathcal{F}\left( {x,y} \right) = {\left( {\sqrt {x{{\left( {1 + \sqrt y } \right)}^2} + 1}  - \sqrt {x{{\left( {1 - \sqrt y } \right)}^2} + 1} } \right)^2}.
\end{small}
\end{equation}
\end{theorem}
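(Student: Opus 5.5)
The plan is to write both mutual informations in (\ref{vr:1}) as log-determinants of Wishart-type matrices. Both then reduce to the same deterministic large-system limit, namely the Verd\'u--Shamai closed form for the normalized ergodic capacity of an i.i.d.\ Rayleigh channel. The function $\Phi(x,y)$ in (\ref{fidef:1})--(\ref{Fdef:1}) is exactly that closed form, with $x$ the SNR per receive dimension and $y$ the aspect ratio. The subtraction and the $[\cdot]^+$ then pass to the limit, since $[\cdot]^+$ is continuous.

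\textbf{Bob's term.} With Gaussian $\mathbf{s}$ of covariance $\sigma_s^2\mathbf{I}_{N_b}$ and $\mathbf{H}\mathbf{V}_1=\mathbf{U}\mathbf{\Lambda}$ from (\ref{svd:1}), we get
\[
R_b=\log\det\left(\mathbf{I}_{N_b}+\alpha\gamma\,\mathbf{\Lambda}^2\right)=\log\det\left(\mathbf{I}_{N_b}+\alpha\gamma N_b\cdot\tfrac{1}{N_b}\mathbf{H}\mathbf{H}^H\right).
\]
Here $\tfrac{1}{N_b}\mathbf{H}\mathbf{H}^H$ is an $N_b\times N_b$ Wishart matrix with $N_a$ degrees of freedom, so $N_a/N_b\to\delta_2$. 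By the Marchenko--Pastur law, the empirical eigenvalue distribution of this matrix converges almost surely. Integrating $\log(1+x\lambda)$ against the limiting law gives the Verd\'u--Shamai expression. This yields $R_b/N_b\to\Phi(\alpha\gamma N_b,\delta_2)$, with the SNR argument $\alpha\gamma N_b$ coming from the normalization of $\mathbf{H}$.

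\textbf{Eve's term.} The key structural fact is the following. $\mathbf{V}$ is unitary and is a function of $\mathbf{H}$ only, which is independent of $\mathbf{G}$. By unitary invariance of the i.i.d.\ Gaussian $\mathbf{G}$, the matrix $\mathbf{G}\mathbf{V}=[\mathbf{G}\mathbf{V}_1\ \mathbf{G}\mathbf{V}_0]$ is again i.i.d.\ $\mathcal{CN}(0,1)$ and independent of $\mathbf{H}$. In particular, $\mathbf{G}\mathbf{V}_1$ is independent of $\mathbf{G}\mathbf{V}_0$.

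Since $N_e>N_a-N_b$, the optimal ANE projection in (\ref{W:1}) attains zero objective. We take $\mathbf{W}$ to have orthonormal rows spanning the left null space of $\mathbf{G}\mathbf{V}_0$, of dimension $N_e-N_a+N_b$; this is the dimension implicit in (\ref{areve:1}). Because $\mathbf{W}$ is a function of $\mathbf{G}\mathbf{V}_0$ only, conditioned on $\mathbf{W}$ the matrix $\mathbf{W}\mathbf{G}\mathbf{V}_1$ is i.i.d.\ Gaussian of size $(N_e-N_a+N_b)\times N_b$. The projected noise $\mathbf{W}\mathbf{n}$ remains white. Hence
\[
R_e=\log\det\left(\mathbf{I}+\alpha\,\mathbf{W}\mathbf{G}\mathbf{V}_1\mathbf{V}_1^H\mathbf{G}^H\mathbf{W}^H\right),
\]
which is the rate of an interference-free $N_b$-input, $(N_e-N_a+N_b)$-output Rayleigh channel. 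This is consistent with Theorem 1. Its aspect ratio tends to $\delta_1-\delta_2+1>0$, which is where the hypothesis $\delta_1>\delta_2-1$ is needed. Normalizing by $N_b$ and applying Marchenko--Pastur exactly as for Bob gives $R_e/N_b\to\Phi(\alpha N_b,\delta_1-\delta_2+1)$.

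\textbf{Main obstacle.} The delicate point is the bookkeeping of the normalization. We must decide which dimension plays the role of $y$ in $\Phi$, and we must check that the SNR arguments $\alpha\gamma N_b$ and $\alpha N_b$ arise consistently from rescaling the unit-variance channel entries by $1/N_b$. Because the SNR argument itself grows with $N_b$, the statement must be read as a deterministic-equivalent approximation rather than a limit with fixed $x$. I would justify it in two steps:
\begin{enumerate}
\item Establish the almost-sure convergence of the normalized log-determinant for each fixed $x$.
\item Show the approximation error is uniform in $x$ on compact sets.
\end{enumerate}
For the second step, the natural tool is the Lipschitz continuity of $\lambda\mapsto\log(1+x\lambda)$ together with almost-sure convergence of the extreme eigenvalues to the edges of the Marchenko--Pastur support. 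Matching the roles of $(t,r)$ between the two Wishart matrices (min/max dimension, as in $f$) so that both terms use the same convention in $\mathcal{F}$ is what I expect to require the most care.
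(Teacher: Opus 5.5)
Your argument is correct and is essentially the paper's proof in Appendix C. Both rates are log-determinants of i.i.d.\ Rayleigh channels: $\mathbf{H}$ for Bob, and $\mathbf{G}_1=\mathbf{V}_2^H\mathbf{G}\mathbf{V}_1$ of size $(N_e-N_a+N_b)\times N_b$ for Eve. Both are then sent to the Verd\'u--Shamai limit ((68) in \cite{ANE2}).

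The only difference is in Eve's term. The paper normalizes her rate by her own dimension $N_e-N_a+N_b$, obtains $\delta_3\Phi(\alpha N_b\delta_3,1/\delta_3)$, and converts it with $\Phi(x,y)=y\Phi(xy,y^{-1})$. Your ``exactly as for Bob'' is the same step done through $\det(\mathbf{I}+\alpha\mathbf{G}_1\mathbf{G}_1^H)=\det(\mathbf{I}+\alpha\mathbf{G}_1^H\mathbf{G}_1)$, with the $N_b\times(N_e-N_a+N_b)$ matrix $\mathbf{G}_1^H$ playing the role of $\mathbf{H}$. So the $(t,r)$ bookkeeping you flagged is already settled. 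Your conditioning argument for the i.i.d.\ law of $\mathbf{W}\mathbf{G}\mathbf{V}_1$ (that $\mathbf{W}$ depends only on $\mathbf{G}\mathbf{V}_0$) is a cleaner justification than the paper's bare assertion.

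The part to revise is your ``main obstacle''. The right-hand side is a genuine limit only if $\alpha\gamma N_b=P_s/\sigma_u^2$ and $\alpha N_b=P_s/\sigma_n^2$ are held constant, i.e., total signal power is fixed. This is how Appendix F treats these quantities (``Given fixed $P_s$''), and in that reading your step 1 alone proves the theorem.

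In the reading you adopt, $\alpha$ is fixed, as the paper's figures implicitly do, so $x=\alpha\gamma N_b\to\infty$. Then uniformity on compact $x$-sets gives nothing, because $x$ eventually leaves every compact set. Your edge-plus-Lipschitz tools can still work, but only after removing the common $\log x$ growth. On $[a,b]$ with $a>0$, the map $\lambda\mapsto\log\frac{1+x\lambda}{1+xa}$ is bounded and Lipschitz uniformly in $x\ge 0$. This needs the smallest nonzero eigenvalue to stay away from $0$, so it fails when $\delta_2=1$ or $\delta_3=1$, where the Marchenko--Pastur lower edge is $0$. Moreover, for $\delta_3<1$ Eve's random rate grows like $\frac{N_e-N_a+N_b}{N_b}\log x$, while $\Phi(\alpha N_b,\delta_3)$ grows like $\delta_3\log x$. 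You would therefore also need $(N_e-N_a+N_b)/N_b-\delta_3=o(1/\log N_b)$.

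Finally, integrating $\log(1+x\lambda)$ against Marchenko--Pastur gives $-\mathcal{F}/4$ inside \emph{both} logarithms. The $+\mathcal{F}/4$ in the second logarithm of (\ref{fidef:1}) is a typo: with it, the identity $\Phi(x,y)=y\Phi(xy,y^{-1})$ used in Appendix C fails for $y\neq 1$. Your claim that $\Phi$ is ``exactly'' the Verd\'u--Shamai expression holds only for the corrected $\Phi$.
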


\begin{proof}
See Appendix C.
\end{proof}

{Theorem 3 reveals a fundamentally different physical interpretation from the results in \cite{ANE2} and \cite{ANE1}. In \cite{ANE2} and \cite{ANE1}, AN is always modeled as residual interference at Eve. In contrast, this work considers an eavesdropper equipped with ANE capability. In the regime ${N_e} > {N_a} - {N_b}$, Eve can completely mitigate AN, and its achievable rate becomes independent of the AN power.}

\begin{remark}
From a mathematical perspective, Eq. (\ref{theo:3}) reveals that the normalized secrecy rate is the difference between two terms involving the $\Phi$ function. The first term, $\Phi({\alpha \gamma {N_b},{\delta _2}})$, represents Bob's normalized achievable rate, while the second term, $\Phi({\alpha {N_b},{\delta _1} - {\delta _2} + 1})$, corresponds to Eve's normalized achievable rate after applying the ANE scheme.

The function $\Phi(x,y)$ has the following key properties:
\begin{itemize}
\item It is monotonically increasing with respect to $x$ for fixed $y$, meaning higher SNR always improves achievable rates.
\item For fixed $x$, it increases with $y$ when $y > 1$, showing the benefit of having more transmit antennas.
\item For fixed $x$, as $y \to \infty$, $\Phi(x,y) \to \infty$, indicating unbounded capacity growth with increased transmit antennas.
\end{itemize}

\end{remark}

\begin{example}
Fig. \ref{fig_4} compares the analytical expression from (\ref{theo:3}) with simulation results. The parameters are deliberately set to $\alpha = 20$ dB, $\beta = 1$, $\gamma = 5$, $\delta_1 = 2.5$, and $\delta_2 = 2$ to satisfy $\delta_1 > \delta_2 - 1$. In this scenario, the formula predicts a positive but limited normalized secrecy rate. The simulation results show excellent agreement with the theoretical prediction as $N_b$ increases, validating that Eq. (\ref{theo:3}) accurately captures the secrecy performance in this regime.

\end{example}

\begin{figure}[t]
\centering
\includegraphics[width=3.5in,height=2.8in]{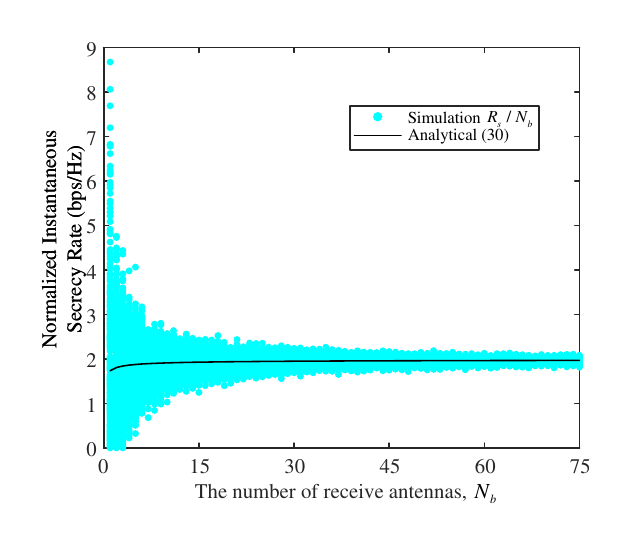}
\vspace{+3mm}
\caption{$R_s/N_b$ versus $N_b$ with $\alpha = 20$ dB, $\beta = 1$, $\gamma = 5$, $\delta_1 = 2.5$, and $\delta_2 = 2$.}
\label{fig_4}
\vspace{-0em}
\end{figure}

\begin{theorem}
As ${N_a},{N_b},{N_e} \to \infty $ with ${N_e}/{N_b} \to {\delta _1}$, ${N_a}/{N_b} \to {\delta _2}$, and ${\delta _1}  \leqslant  {\delta _2} - 1$, the normalized instantaneous secrecy rate approaches
\begin{equation}\label{theo:4}
\frac{{{R_s}}}{{{N_b}}} \to \Phi \left( {\alpha \gamma {N_b},{\delta _2}} \right).
\end{equation}
\end{theorem}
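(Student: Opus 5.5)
We split the normalized instantaneous secrecy rate as $R_s/N_b = [R_b/N_b - R_e/N_b]^+$ and treat the two terms separately. The claim then reduces to two facts: Bob's term converges to $\Phi(\alpha\gamma N_b,\delta_2)$, and Eve's term vanishes.

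\textbf{Bob's term.} Bob's term does not depend on $N_e$, so it is exactly the quantity already handled in the proof of Theorem 3 (Appendix C). There $R_b = \log_2\det(\mathbf{I}_{N_b} + \alpha\gamma\,\mathbf{H}\mathbf{H}^H)$ with Gaussian $\mathbf{s}$ of per-stream power $\sigma_s^2$ and noise $\sigma_u^2$. By the Marchenko--Pastur law for $\mathbf{H}\mathbf{H}^H/N_b$ with aspect ratio $N_a/N_b\to\delta_2$, together with the Verd\'u--Shamai closed form for the limiting log-det functional, one obtains $R_b/N_b \to \Phi(\alpha\gamma N_b,\delta_2)$. We invoke this step verbatim.

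\textbf{Eve's term: reduction to a single scalar stream.} When $N_e \leqslant N_a - N_b$, Eve cannot null the AN and her optimal ANE projection is the vector $\mathbf{w}^H$ of (\ref{optw:2}). This $\mathbf{w}$ is the unit eigenvector associated with the minimum eigenvalue of $\mathbf{G}\mathbf{V}_0\mathbf{V}_0^H\mathbf{G}^H$, so $\mathbf{z}$ is a scalar observation. Consequently $I(\mathbf{s};\mathbf{z})$ is the rate of a single-output channel with SINR $\alpha A/(\alpha\beta B+1)$, exactly as in the proof of Theorem 2. This gives
\begin{equation*}
R_e = \log_2\Bigl(1+\frac{\alpha\,\mathbf{w}^H\mathbf{G}\mathbf{V}_1\mathbf{V}_1^H\mathbf{G}^H\mathbf{w}}{\alpha\beta\,\lambda_{\min}+1}\Bigr)
\leqslant \log_2\bigl(1+\alpha\|\mathbf{G}\mathbf{V}_1\|^2\bigr),
\end{equation*}
where the upper bound drops the AN term in the denominator.

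\textbf{Eve's term: vanishing after normalization.} Since $\mathbf{w}$ depends only on $\mathbf{G}\mathbf{V}_0$, and $\mathbf{G}\mathbf{V}_1$ is independent of $\mathbf{G}\mathbf{V}_0$ by unitary invariance of i.i.d.\ Gaussian $\mathbf{G}$, the numerator is distributed as $A\sim\chi^2(2N_b)/2$. Hence $A/N_b\to 1$ almost surely. We take the SNR-type quantity $\alpha N_b$ as the fixed parameter, matching the convention of the $\Phi$ arguments, so that $\alpha A \approx \alpha N_b$. Then $R_e \leqslant \log_2(1+\alpha N_b(1+o(1)))$ almost surely. Whenever $\alpha N_b$ is held fixed, or even grows subexponentially in $N_b$, this gives $R_e/N_b\to 0$. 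The rate may still grow logarithmically but is annihilated by the normalization. Combining with Bob's limit, and noting that $\Phi(\alpha\gamma N_b,\delta_2)\geqslant 0$ removes the $[\cdot]^+$, yields (\ref{theo:4}). The crude bound suffices, so we do not need the Tracy--Widom or Marchenko--Pastur edge behaviour of $\lambda_{\min}$. At the boundary $\delta_1=\delta_2-1$, $\lambda_{\min}$ may tend to $0$, but the bound above does not use it.

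\textbf{Main obstacle.} The delicate step is justifying the reduction to a single scalar stream: we must show Eve's rate after ANE is genuinely a one-dimensional SINR expression and not a multi-stream one. This rests on the optimality of $N_w=1$ in (\ref{W:1}) for $N_e\leqslant N_a-N_b$, which follows because every nonzero direction leaks AN whose power is at least $\lambda_{\min}$. We also need the independence of $\mathbf{w}$ from $\mathbf{G}\mathbf{V}_1$ conditioned on $\mathbf{H}$. We would cite the same structural facts used in Appendix B, adding the observation that $R_e=O(\log N_b)$ is robust to the precise scaling of $\alpha$ with $N_b$.
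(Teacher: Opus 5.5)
Your proof is correct and follows the same outline as the paper's Appendix D. Bob's term is taken from the Theorem 3 argument, Eve is reduced to the scalar SINR $\alpha A/(\alpha\beta B+1)$ with $A\sim\chi^2(2N_b)/2$, and everything then hinges on $\log(1+\alpha N_b)/N_b\to 0$.

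The one real difference is how you handle the AN term. The paper asserts that $\lambda_{\min}^2\to 0$ when $\delta_1\leqslant\delta_2-1$. Using $\beta<M$, it then replaces Eve's rate by $\log(1+\alpha\mathbf{g}^H\mathbf{g})$ with equality. You instead drop $\alpha\beta B$ as a monotone upper bound. That needs no assumption on $\beta$ and nothing about the spectrum of $\mathbf{G}\mathbf{V}_0\mathbf{V}_0^H\mathbf{G}^H$.

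Your route is the sounder one. For $\delta_1<\delta_2-1$, the smallest eigenvalue of this $N_e\times N_e$ Wishart matrix with $N_a-N_b$ degrees of freedom does not vanish. It grows linearly in $N_b$, roughly like $(\sqrt{N_a-N_b}-\sqrt{N_e})^2$, and can tend to zero only on the boundary $\delta_1=\delta_2-1$, as you note. So the paper's intermediate claim is false in general. It is harmless to the conclusion, because a larger residual AN only lowers Eve's rate, and that is exactly what your bound exploits.

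The single-stream reduction you flag as the main obstacle needs no proof. It is part of the paper's model for $N_e\leqslant N_a-N_b$, where the projection is fixed to the minimum-eigenvalue vector with $N_w=1$.
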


\begin{proof}
See Appendix D.
\end{proof}

\begin{remark}
Theorem 4 presents a fundamentally different mathematical result compared to Theorem 3. When ${\delta _1} \leqslant {\delta _2} - 1$, Eve's normalized achievable rate term vanishes completely, leaving only Bob's term in the expression. The mathematical explanation lies in that the single-dimensional ANE projection cannot provide an achievable rate comparable to the ${N_b}$ dimensional normalization used for Bob.

Specifically, when ${\delta _1} \leqslant {\delta _2} - 1$, Eve can identify a projection direction that effectively nullifies the power of the AN. However, since this direction offers only one DoF, normalization over ${N_b}$ dimensions causes her normalized achievable rate to approach zero.

This mathematical condition precisely characterizes the regime in which perfect secrecy (in the asymptotic sense) can be achieved, provided a sufficient number of antennas are available. 
\end{remark}

\begin{figure}[t]
\centering
\includegraphics[width=3.5in,height=2.8in]{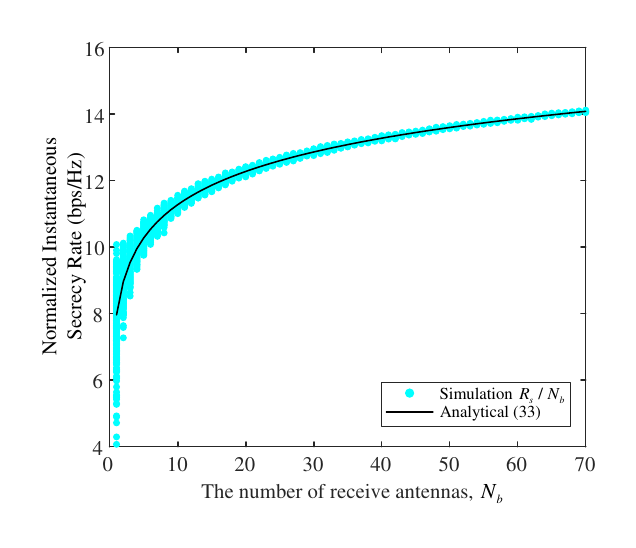}
\vspace{+3mm}
\caption{$R_s/N_b$ versus $N_b$ with $\alpha = 20$ dB, $\beta = 10$, $\gamma = 5$, $\delta_1 = 0.5$, and $\delta_2 = 1.5$.}
\label{fig_5}
\vspace{-0em}
\end{figure}

\begin{example}
	Fig. \ref{fig_5} compares the analytical results for ${R_s}/{N_b}$ from (\ref{theo:4}), with the corresponding simulation results, using the parameters $\alpha  = 20$ dB, $\beta  = 1$, $\gamma   = 1$, ${\delta _1}=1$, and ${\delta _2}=3$. As observed, as the number of receive antennas increases, the simulation results gradually converge to the analytical results.
\end{example}

When Eve adopts the ANE scheme, the combination of (\ref{theo:3}) and (\ref{theo:4}) redefines the scaling law of normalized instantaneous secrecy rate as
\begin{equation}
	\frac{{{R_s}}}{{{N_b}}} \to \left\{ {\begin{array}{*{20}{c}}
			{{{\left[ {\Phi \left( {\alpha \gamma {N_b},{\delta _2}} \right) - \Phi \left( {\alpha {N_b},{\delta _3}} \right)} \right]}^ + },{\text{ if }}{\delta _1} > {\delta _2} - 1,} \\
			{\Phi \left( {\alpha \gamma {N_b},{\delta _2}} \right), \ {\text{otherwise}}.} \ \ \ \ \ \ \ \ \ \ \ \ \ \ \ \ \ \ \ \ \ \ \ \
	\end{array}} \right.
\end{equation}

\section{Key Corollaries from Redefined Scaling Laws}

Based on the redefined scaling laws outlined above, we arrive at several straightforward corollaries. Specifically, we identify sufficient conditions for Eve to achieve perfect eavesdropping. Moreover, we approximate the average and instantaneous secrecy rates from the perspective of Alice. Furthermore, we provide a sufficient condition for achieving the secrecy capacity with Gaussian input alphabets.

\subsection{Sufficient Conditions for Perfect Eavesdropping}

\begin{tcolorbox}[colback=gray!5!white, colframe=black, title=Insights Into Eve's Antenna Threshold]

\begin{corollary}
	If Eve's AWGN power is comparable to or weaker than Bob's, i.e., $\gamma  \leqslant {N_b}/{N_a} $, and Eve is equipped with at least ${N_e} \geqslant 2{N_a} - {N_b}$ antennas, then the average secrecy rate vanishes, i.e., ${{\bar R}_s} = 0$.
\end{corollary}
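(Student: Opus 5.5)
Since $N_e \geqslant 2N_a - N_b > N_a - N_b$ (because $N_a>N_b$), we are in the complete-elimination regime. Theorem 1 therefore applies, and
\[
\bar R_s = \left[ f(N_a,N_b,\alpha\gamma N_a) - f(N_b,N_e-N_a+N_b,\alpha N_b)\right]^+ .
\]
It suffices to show that Eve's term dominates Bob's term. The plan is to compare the two ergodic-capacity expressions through their operational meaning rather than the explicit sum. By construction, $f(t,r,x)$ is the ergodic capacity of an i.i.d.\ Rayleigh $t\times r$ (or $r \times t$) channel at total SNR $x$; it is symmetric in $(t,r)$ through $m=\min$ and $n=\max$, and increasing in $x$ and in each antenna dimension.

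\textbf{Step 1 (Bob's term).} Here $\min(N_a,N_b)=N_b$ and $\max=N_a$. The SNR argument satisfies $\alpha\gamma N_a \leqslant \alpha N_b$ precisely because of the hypothesis $\gamma\leqslant N_b/N_a$. By monotonicity of $f$ in its SNR argument,
\[
f(N_a,N_b,\alpha\gamma N_a)\leqslant f(N_a,N_b,\alpha N_b).
\]

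\textbf{Step 2 (Eve's term).} The hypothesis $N_e\geqslant 2N_a-N_b$ gives $N_e-N_a+N_b\geqslant N_a$. Eve's term is thus the capacity of a channel with $m=N_b$ and $n=N_e-N_a+N_b\geqslant N_a$. Monotonicity of $f$ in $n$ for fixed $m$ and fixed SNR gives
\[
f(N_b,N_e-N_a+N_b,\alpha N_b)\geqslant f(N_b,N_a,\alpha N_b)=f(N_a,N_b,\alpha N_b).
\]
Combining Steps 1 and 2, the bracket is nonpositive, so $\bar R_s=0$.

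\textbf{Main obstacle.} The step that needs real justification is the monotonicity of $f$ in $x$ and in $n$. These properties are not transparent from the closed-form triple sum (\ref{func:1}). I would argue them through the integral representation. Using $f(t,r,x)=\mathrm{E}[\log_2\det(\mathbf I+\tfrac{x}{t'}\mathbf Q\mathbf Q^H)]$ with the appropriate normalization, monotonicity in $x$ is immediate because $\log\det$ is increasing in the SNR scaling.

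Monotonicity in $n$ follows from adding receive antennas: appending rows to $\mathbf Q$ gives $\mathbf Q'^H\mathbf Q'=\mathbf Q^H\mathbf Q+\mathbf q\mathbf q^H\succeq \mathbf Q^H\mathbf Q$. Care is needed about which dimension carries the power normalization. In Theorem 1 the SNR is normalized by the transmit dimension $N_b$ on Eve's side and by $N_a$ on Bob's side, so I must check that the equality $f(N_b,N_a,\alpha N_b)=f(N_a,N_b,\alpha N_b)$ is consistent with the definition's dependence only on $(m,n,x)$. That check holds directly from (\ref{func:1}). I would also verify that the extra dimensions enter as receive dimensions, so that the per-antenna SNR is unchanged when increasing $n$.
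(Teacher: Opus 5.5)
Your proposal is correct and matches the paper's argument in Appendix E. Bob's term is bounded above by $f(N_a,N_b,\alpha N_b)$ using SNR monotonicity and $\gamma\leqslant N_b/N_a$, and Eve's term is bounded below by the same quantity using the swap $f(t,r,x)=f(r,t,x)$ plus antenna monotonicity; you apply monotonicity in the second argument before swapping, whereas the paper swaps first and then uses monotonicity in the first argument. On the normalization worry you raise: under the transmit-normalized form $\mathrm{E}[\log_2\det(\mathbf I+\tfrac{x}{t}\mathbf Q\mathbf Q^H)]$ the swap is not an identity, but it still gives $f(N_b,N_a,\alpha N_b)\geqslant f(N_a,N_b,\alpha N_b)$ (by Sylvester's identity both are the same $\log\det$ of an i.i.d.\ Gaussian $N_a\times N_b$ matrix, at per-entry SNR $\alpha$ and $\alpha N_b/N_a<\alpha$, respectively), which is exactly the direction your chain needs, so the conclusion holds under either reading of $f$.
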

\begin{proof}
	See Appendix E.
\end{proof}

\begin{remark}
	Corollary 1 demonstrates that when the power levels of AWGN at Bob and Eve are comparable, an Eve equipped with more than $2{N_a} - {N_b}$ antennas is capable of achieving perfect wiretapping, i.e., the ANE countermeasure prevents a positive average secrecy rate.
\end{remark}

\begin{example}
	Fig. \ref{fig_8} illustrates the average secrecy rate ${{\bar R}_s}$ as a function of $N_e$ with $\alpha  = 20$ dB, $\beta  = 1$, $\gamma   = 0.5$, and $N_b=8$. For the cases of ${N_a} = 14,15$, and 16, the minimum number of $N_e$ resulting in a zero ${{\bar R}_s}$ are 20, 22, and 24, respectively. This numerical relationship between $N_a$, $N_b$, and $N_e$ further validates the accuracy of Corollary 1.
\end{example}
\end{tcolorbox}

\begin{figure}[t]
	\centering
	\includegraphics[width=3.5in,height=2.8in]{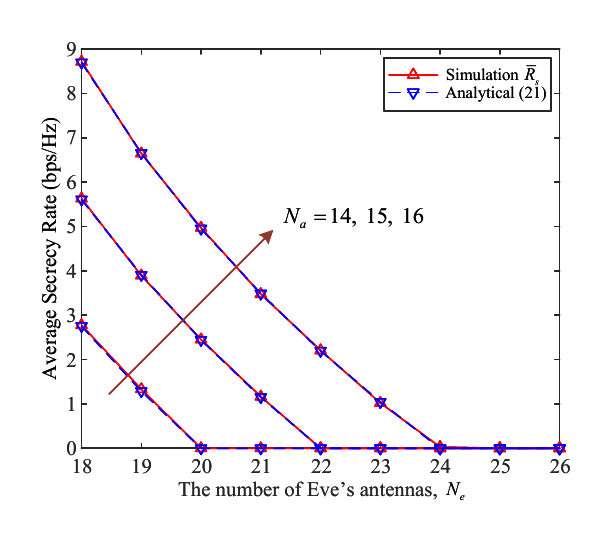}
	\vspace{+3mm}
	\caption{${{\bar R}_s}$ versus $N_e$ with $\alpha  = 20$ dB, $\beta  = 1$, $\gamma   = 0.5$, and $N_b=8$.}
	\label{fig_8}
	\vspace{-0em}
\end{figure}

\begin{corollary}
	Under the same assumptions as in Theorem 3, as ${\delta _1} = {\delta _2} \to 1$ with $\gamma  = 1$, we have ${R_s}/{N_b} \to 0$.
\end{corollary}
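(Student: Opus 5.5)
We plan to start from the limiting expression in Theorem 3, which applies because $\delta_1=\delta_2$ gives $\delta_1>\delta_2-1$. It states that
\begin{equation*}
\frac{R_s}{N_b}\to\left[\Phi\left(\alpha\gamma N_b,\delta_2\right)-\Phi\left(\alpha N_b,\delta_1-\delta_2+1\right)\right]^+ .
\end{equation*}
Setting $\gamma=1$ makes the SNR arguments of the two $\Phi$ terms coincide, both equal to $x\triangleq\alpha N_b$. Setting $\delta_1=\delta_2$ makes the second dimension argument $\delta_1-\delta_2+1=1$ exactly. The limit therefore reduces to $\left[\Phi(x,\delta_2)-\Phi(x,1)\right]^+$, and the claim becomes $\Phi(x,\delta_2)-\Phi(x,1)\to 0$ as $\delta_2\to1$.

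The key step is continuity of $\Phi(x,y)$ in $y$ at $y=1$ for fixed $x>0$. From the definition of $\mathcal{F}$ in (\ref{Fdef:1}), both square roots $\sqrt{x(1\pm\sqrt y)^2+1}$ are continuous in $y\ge0$, so $\mathcal{F}(x,y)$ is continuous in $y$. At $y=1$ it equals $\mathcal{F}(x,1)=(\sqrt{4x+1}-1)^2$.

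We then need the arguments of both logarithms in (\ref{fidef:1}) to stay positive near $y=1$, so that $\Phi$ is a composition of continuous functions. For the first, $1+x-\mathcal{F}/4$ at $y=1$ equals $\tfrac12(1+\sqrt{4x+1})>0$. This uses $(\sqrt{4x+1}-1)^2=4x+2-2\sqrt{4x+1}$, hence $1+x-\mathcal{F}/4=\tfrac12+\tfrac12\sqrt{4x+1}$. The second argument $1+xy+\mathcal{F}/4$ is trivially positive. Hence $\Phi(x,\delta_2)\to\Phi(x,1)$ as $\delta_2\to1$. The difference vanishes, and since $[\cdot]^+$ is continuous and $[0]^+=0$, we get $R_s/N_b\to0$.

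The main obstacle is the order of limits. $x=\alpha N_b$ grows with $N_b$, while Theorem 3's limit is stated with $N_b$ appearing inside $\Phi$. We would read the statement as taking $\delta_1=\delta_2\to1$ for each fixed value of the SNR argument $x$, which makes the continuity argument above suffice. If uniformity in $x$ were required, we would instead bound $|\partial\Phi/\partial y|$ near $y=1$ in $x$ and show the $\delta_2$-dependence vanishes uniformly. That derivative behaves like $\log(1+x)$ times $(\delta_2-1)$, so we would have to let $\delta_2\to1$ faster than $1/\log N_b$. We would state the pointwise-in-$x$ interpretation explicitly and remark on this.

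Interpretively, with equal noise levels ($\gamma=1$), equal antenna ratios, and $N_a\approx N_b$, Bob and Eve's post-ANE channels become statistically equivalent MIMO channels. Their normalized rates coincide, which gives zero secrecy.
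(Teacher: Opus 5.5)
Your proposal is correct and follows the paper's route. The paper simply substitutes $\delta_1=\delta_2=\gamma=1$ into (\ref{theo:3}), so both terms collapse to $\Phi(\alpha N_b,1)$, and you make explicit the continuity of $\Phi(x,\cdot)$ at $y=1$ that this substitution tacitly uses (your positivity check also holds if the second logarithm carries the standard $-\mathcal{F}/4$, since at $y=1$ that argument equals the first, $\tfrac12\left(1+\sqrt{4x+1}\right)>0$).
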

\begin{proof}
	Substituting ${\delta _1} = {\delta _2} = \gamma = 1$ into (\ref{theo:3}), the proof is completed.
\end{proof}

\begin{remark}
	Corollary 2 indicates that when the growth rates of $N_a$, $N_b$ and $N_e$ are identical, the normalized instantaneous secrecy rate ${R_s}/{N_b}$ tends to 0.
\end{remark}

\begin{example}
	Fig. \ref{fig_corollary2} plots the normalized instantaneous secrecy rate ${{R}_s}/{N_b}$ versus $N_b$ with $\alpha  = 20$ dB, $\beta  = 1$, $\gamma   = 1$, and ${N_a} = {N_e} = {N_b} + 1$. As expected, ${\delta _1}$ and $ {\delta _2}$ approach 1 with the increase of $N_b$, leading to ${{R}_s}/{N_b}$ asymptotically approaching 0.

The mathematical insight lies in the fact that when Eve has the same number of antennas as Alice, and her antenna count scales with the same growth trend as Bob's (but remains slightly larger), perfect eavesdropping becomes achievable in the asymptotic regime.
\end{example}

\begin{figure}[t]
	\centering
	\includegraphics[width=3.5in,height=2.8in]{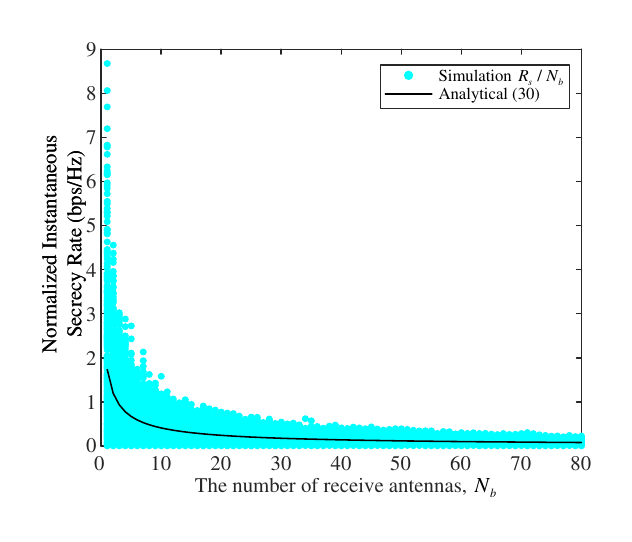}
	\vspace{+3mm}
	\caption{${{R}_s}/{N_b}$ versus $N_b$ with $\alpha  = 20$ dB, $\beta  = 1$, $\gamma   = 1$, and ${N_a} = {N_e} = {N_b} + 1$.}
	\label{fig_corollary2}
	\vspace{-0em}
\end{figure}

\begin{corollary}
	Under the same assumptions as in Theorem 3, as ${\delta _1} \to \infty $, we have ${R_s}/{N_b} \to 0$.
\end{corollary}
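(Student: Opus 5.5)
We start from the limiting expression in Theorem 3: for $\delta_1>\delta_2-1$,
\[
\frac{R_s}{N_b}\to\left[\Phi(\alpha\gamma N_b,\delta_2)-\Phi(\alpha N_b,\delta_1-\delta_2+1)\right]^+ .
\]
Letting $\delta_1\to\infty$ with $\delta_2$ fixed only changes the second argument of Eve's term, $y=\delta_1-\delta_2+1\to\infty$. Bob's term $\Phi(\alpha\gamma N_b,\delta_2)$ does not depend on $\delta_1$, so it is a finite constant. It therefore suffices to show that $\Phi(x,y)\to\infty$ as $y\to\infty$ for fixed $x=\alpha N_b>0$. Once this is shown, the bracket becomes negative for all $\delta_1$ large enough. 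The operator $[\cdot]^+$ then forces the limit to $0$, which gives $R_s/N_b\to 0$.

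The main step is the divergence of $\Phi(x,y)$ in $y$. This is the third property listed in the Remark following Theorem 3, which we may cite. For completeness, we would also verify it directly from (\ref{fidef:1})--(\ref{Fdef:1}).

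First we find the asymptotics of $\mathcal{F}$. For large $y$, write $\sqrt{x(1\pm\sqrt y)^2+1}=\sqrt{x}\,|1\pm\sqrt y|\,(1+O(1/(xy)))$. For $y>1$ the difference of the two square roots is $\sqrt{x}\big((1+\sqrt y)-(\sqrt y-1)\big)+o(1)=2\sqrt x+o(1)$. Hence $\mathcal{F}(x,y)\to 4x-c$, where $c\ge 0$ is a constant given by the next-order correction. In any case $\mathcal{F}(x,y)$ stays bounded as $y\to\infty$.

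Next we bound each term of $\Phi$. Since $\mathcal{F}$ is bounded, the third term $-\frac{\log e}{4x}\mathcal{F}$ is bounded. In the second term, $\log(1+xy+\mathcal{F}/4)\ge\log(1+xy)\to\infty$.

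The first term $y\log(1+x-\mathcal{F}/4)$ needs care, because $\mathcal{F}/4\to x$ and so its argument tends to $1$. This is where we expect the main obstacle: we must show that this term does not diverge to $-\infty$. Note first that $\mathcal{F}\le 4x$ always holds. Indeed, the difference of the square roots is at most $\sqrt{x}\big((1+\sqrt y)-|1-\sqrt y|\big)\le 2\sqrt x$, using $|\sqrt{a}-\sqrt{b}|\le\sqrt{|a-b|}$-type monotonicity. So the argument of the logarithm is at least $1$, and the first term is nonnegative.

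Combining the three bounds gives $\Phi(x,y)\ge\log(1+xy)-\log e\to\infty$. This proves the divergence, and with the first paragraph it completes the argument.
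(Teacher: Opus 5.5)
Your argument is correct and has the same skeleton as the paper's: Bob's term does not depend on $\delta_1$, Eve's term $\Phi(\alpha N_b,\delta_3)$ diverges, so the bracket is eventually negative and $[\cdot]^+$ returns $0$. What differs is how you prove the divergence.

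The paper applies the reciprocity identity (\ref{phipro:2}) to rewrite $\Phi(\alpha N_b,\delta_3)=\delta_3\Phi(\alpha N_b\delta_3,1/\delta_3)$ and expands. It then computes the exact limit $\mathcal{F}\to 4\alpha N_b$ by rationalizing the difference of square roots, and substitutes that limit into every term. You stay with $\Phi(x,y)$ and use the uniform bound $\mathcal{F}(x,y)\le 4x$ to get the non-asymptotic estimate $\Phi(x,y)\ge\log(1+xy)-\log e$.

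The paper's route exhibits the leading-order form of each term. Yours never substitutes a limit into a term multiplied by the diverging factor $\delta_3$, and it does not need (\ref{phipro:2}). That is a real advantage. The identity (\ref{phipro:2}) holds only if both logarithms in (\ref{fidef:1}) carry the same sign of $\mathcal{F}/4$; the standard Verd\'{u}--Shamai formula has $-\mathcal{F}/4$ in both. With that sign, the paper's substitution leaves an indeterminate $\delta_3\log 1$ term, which is settled precisely by your nonnegativity observation. Your bound still works in that case, since $\log(1+xy-\mathcal{F}/4)\ge\log(1+x(y-1))\to\infty$.

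Two small corrections. First, $\mathcal{F}(x,y)\to 4x$ exactly, so your constant $c$ is $0$, as (\ref{comment:3}) shows. Second, the inequality $\sqrt{\mathcal{F}}\le\sqrt{x}\,\bigl((1+\sqrt y)-|1-\sqrt y|\bigr)$ is right, but it follows from the $1$-Lipschitz property of $t\mapsto\sqrt{t^2+1}$. It does not follow from $|\sqrt a-\sqrt b|\le\sqrt{|a-b|}$, which would only give $\mathcal{F}\le 4x\sqrt y$.
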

\begin{proof}
	See Appendix F.
\end{proof}

\begin{remark}
This corollary examines the extreme case where Eve has vastly more antennas than legitimate users. From (\ref{theo:3}), when $\delta_1 \to \infty$, the term $\Phi(\alpha N_b, \delta_1 - \delta_2 + 1)$ grows without bound due to the properties of   $\Phi$ function, eventually exceeding $\Phi(\alpha\gamma N_b, \delta_2)$, which causes $R_s/N_b \to 0$.

Mathematically, this represents the case where Eve's ANE advantage becomes overwhelming, allowing her to eliminate AN while preserving sufficient dimensions for signal recovery. The result confirms that no positive secrecy rate is achievable when Eve has unlimited antenna resources, regardless of other system parameters.
\end{remark}

\begin{figure}[t]
\centering
\includegraphics[width=3.5in,height=2.8in]{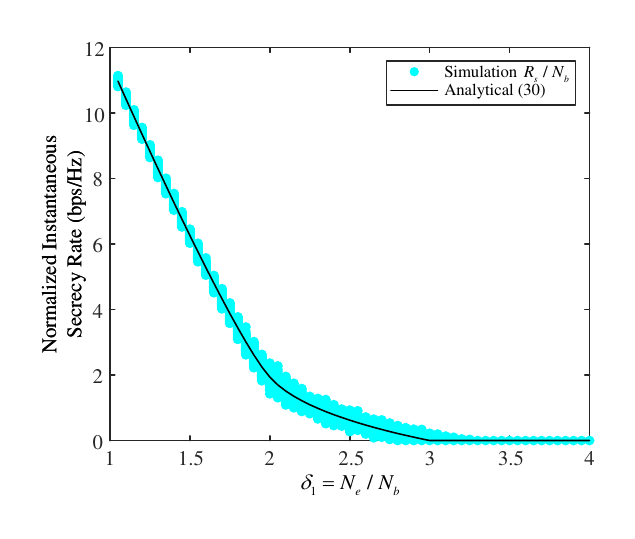}
\vspace{+3mm}
\caption{${{R}_s}/{N_b}$ versus ${\delta _1}$ with $\alpha  = 20$ dB, $\beta  = 1$, $\gamma   = 1$, ${N_a} = 40$, and ${N_b} = 20$.}
\label{fig_corollary3}
\vspace{-0em}
\end{figure}

\begin{example}
Fig. \ref{fig_corollary3} shows ${{R}_s}/{N_b}$ versus ${\delta _1}$ with $\alpha  = 20$ dB, $\beta  = 1$, $\gamma   = 1$, ${N_a} = 40$, and ${N_b} = 20$. As anticipated, ${{R}_s}/{N_b}$ asymptotically tends to 0 as ${\delta _1}$ increases, validating the accuracy of Corollary 3.
\end{example}

\begin{corollary}
Under the same assumptions as in Theorem 3, if ${\delta _1} = 2{\delta _2}-1$ and $\gamma = 1$, then ${R_s}/{N_b} \to 0$.
\end{corollary}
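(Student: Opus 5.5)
Corollary 4 can be verified by substituting directly into Theorem 3, in the same way as Corollary 2. We set $\delta_1 = 2\delta_2 - 1$ and $\gamma = 1$ in (\ref{theo:3}). First we check that Theorem 3 applies. Its hypothesis is $\delta_1 > \delta_2 - 1$, which under the substitution becomes $2\delta_2 - 1 > \delta_2 - 1$, i.e., $\delta_2 > 0$. This always holds, since $\delta_2 = \lim N_a/N_b \geqslant 1$ because AN generation requires $N_a > N_b$.

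Next we evaluate Eve's term. Its second argument is $\delta_1 - \delta_2 + 1 = (2\delta_2 - 1) - \delta_2 + 1 = \delta_2$. Its first argument is $\alpha N_b$, which coincides with Bob's first argument $\alpha\gamma N_b$ once $\gamma = 1$. The limit in (\ref{theo:3}) therefore becomes
\begin{equation*}
\frac{R_s}{N_b} \to \left[\Phi\left(\alpha N_b,\delta_2\right) - \Phi\left(\alpha N_b,\delta_2\right)\right]^+ = 0 .
\end{equation*}
This holds because $\Phi$ in (\ref{fidef:1}) and $\mathcal{F}$ in (\ref{Fdef:1}) are deterministic, single-valued functions of $(x,y)$, so identical arguments give identical values.

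There is no genuine obstacle. The only point needing care is the interpretation of the limit, because the first argument $\alpha N_b$ itself grows with $N_b$. The claim should be read as saying that the difference inside $[\cdot]^+$ in (\ref{theo:3}) is identically zero for every $N_b$, rather than that each $\Phi$ term converges individually.

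Physically, the condition $\delta_1 = 2\delta_2 - 1$ is the asymptotic analogue of the threshold $N_e = 2N_a - N_b$ in Corollary 1. Once Eve spends $N_a - N_b$ dimensions nulling the AN, she is left with an effective $N_b \times N_a$ channel. This is statistically identical to Bob's channel, so with equal noise levels ($\gamma = 1$) Eve's rate matches Bob's.
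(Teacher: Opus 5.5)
Your proposal is correct and matches the paper's proof, which simply substitutes $\delta_1 = 2\delta_2 - 1$ and $\gamma = 1$ into (\ref{theo:3}) so that Bob's and Eve's terms both become $\Phi(\alpha N_b,\delta_2)$ and cancel. Your check that Theorem 3's hypothesis $\delta_1 > \delta_2 - 1$ reduces to $\delta_2 > 0$, and your remark that the right-hand side of (\ref{theo:3}) still depends on $N_b$, are useful details the paper leaves implicit.
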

\begin{proof}
Substituting ${\delta _1} = 2{\delta _2}-1$ and $\gamma = 1$ into (\ref{theo:3}), the proof is completed.
\end{proof}

\begin{remark}
Corollary 4 reveals that when the growth rate of $N_e$ is twice that of $N_a$, the normalized instantaneous secrecy rate ${R_s}/{N_b}$ tends to 0. Additionally, Corollary 4 can be viewed as an extension of Corollary 1 in the case where the number of antennas approaches infinity.

The mathematical insight lies in identifying $\delta_1 = 2 \delta_2 - 1$ as a critical threshold, derived from equating Eve's term $\Phi({\alpha {N_b},{\delta _1} - {\delta _2} + 1})$ with Bob's term $\Phi({\alpha \gamma {N_b},{\delta _2}})$. This condition precisely marks the boundary beyond which, when $\delta_1  \geqslant  2 \delta_2 - 1$, Eve can achieve perfect eavesdropping in the asymptotic sense.
\end{remark}

\begin{example}
Fig. \ref{fig_corollary4} depicts ${{R}_s}/{N_b}$ as a function of ${\delta _1}$ and ${\delta _2}$, with the parameters set to $\alpha  = 10$ dB, $\beta  = 10$, $\gamma   = 1$, and ${N_b} = 50$. The $x$ and $y$ axes represents the values of ${\delta _1}$ and ${\delta _2}$, respectively, while the gradient corresponds to the values of ${R_s}/{N_b}$. As observed, in the region to the upper left of the line ${\delta _1} = 2{\delta _2} - 1$, ${R_s}/{N_b}$ approaches 0, indicating that the system cannot ensure secrecy performance.
\end{example}

\begin{figure}[t]
\centering
\includegraphics[width=3.5in,height=2.8in]{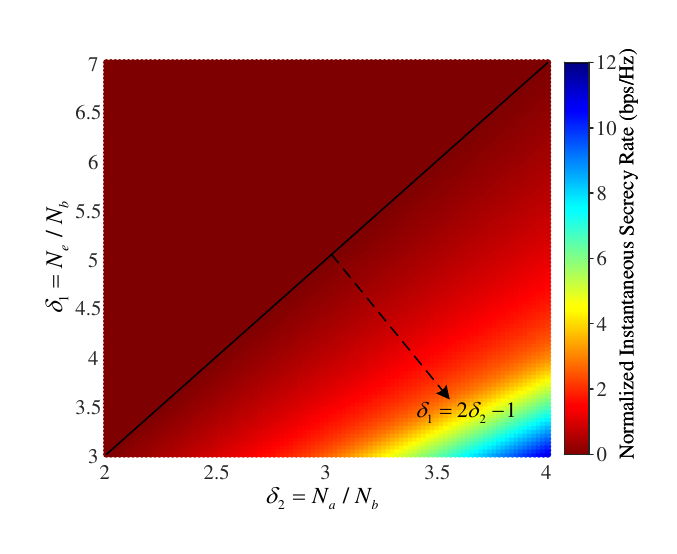}
\vspace{+3mm}
\caption{${{R}_s}/{N_b}$ versus ${\delta _1}$ and ${\delta _2}$ with $\alpha  = 10$ dB, $\beta  = 10$, $\gamma   = 1$, and ${N_b} = 50$.}
\label{fig_corollary4}
\vspace{-0em}
\end{figure}

%
%
%

\subsection{Approximation of Secrecy Rates}

For practical antenna configurations where $N_a$, $N_b$, and $N_e$ are finite, closed-form approximations of secrecy rates are useful for system design. Here, we derive mathematically tractable expressions that capture the performance with high accuracy while providing insight into the underlying mechanisms.

\begin{corollary}\label{co:1}
The average and instantaneous secrecy rates can be approximated as
\begin{equation}\label{approximation:1}
\begin{gathered}
  {{\bar R}_s},{R_s} \approx R_s^{app} = \hfill \\
   \left\{ {\begin{array}{*{20}{c}}
  {{N_b}\Phi \left( {\alpha \gamma {N_b},\frac{{{N_a}}}{{{N_b}}}} \right) - \log \left( {1 + \frac{{\alpha {N_b}}}{{\alpha \beta \mu  + 1}}} \right){\text{, if }}{N_e} \leqslant {N_a} - {N_b}} \\
  {{N_b}{{\left[ {\Phi \left( {\alpha \gamma {N_b},\frac{{{N_a}}}{{{N_b}}}} \right) - \Phi \left( {\alpha {N_b},\frac{{{N_e} - {N_a} + {N_b}}}{{{N_b}}}} \right)} \right]}^ + },{\text{otherwise}},}
\end{array}} \right. \hfill \\
\end{gathered}
\end{equation}
where $\mu  = \int_0^\infty  {B{f_B}\left( b \right){\text{d}}b}$ denotes the mean of the minimum eigenvalue
of the matrix ${\mathbf{G}}{{\mathbf{V}}_0}{\mathbf{V}}_0^H{{\mathbf{G}}^H}$.
\end{corollary}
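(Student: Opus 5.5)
We obtain this approximation by plugging finite antenna numbers into the large-system limits of Theorems 3 and 4, and by replacing the random quantities in Theorem 2 with their means. We treat the two regimes separately and handle Bob's term once, since it is common to both. For Bob, Theorem 3 (and likewise Theorem 4) gives $R_b/N_b \to \Phi(\alpha\gamma N_b,\delta_2)$ with $\delta_2 = \lim N_a/N_b$. Replacing $\delta_2$ by its finite-size value $N_a/N_b$ and multiplying by $N_b$ yields
\[
R_b \approx N_b\Phi\left(\alpha\gamma N_b,\tfrac{N_a}{N_b}\right).
\]
By the standard self-averaging (hardening) property of the mutual information of Rayleigh MIMO channels, the instantaneous rate $R_b$ concentrates around its mean $\bar R_b = f(N_a,N_b,\alpha\gamma N_a)$. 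The same expression therefore serves as an approximation for both $R_b$ and $\bar R_b$, and so for both $R_s$ and $\bar R_s$.

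\textbf{Case $N_e > N_a - N_b$.} Here Eve's processed channel, after the projection (\ref{optW:1}), is an interference-free MIMO channel with $N_b$ transmit and $N_e-N_a+N_b$ receive dimensions, as established in Theorem 1 and (\ref{areve:1}). Theorem 3 gives $R_e/N_b \to \Phi(\alpha N_b,\delta_1-\delta_2+1)$. We substitute $\delta_1-\delta_2+1 \approx (N_e-N_a+N_b)/N_b$ and multiply by $N_b$. The same concentration argument justifies using this expression for $\bar R_e$ as well. Applying the operator $[\cdot]^+$ to the difference then gives the second line of (\ref{approximation:1}).

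\textbf{Case $N_e \le N_a - N_b$.} Theorem 4 only shows that $R_e/N_b\to 0$, which is too coarse at finite size. We therefore return instead to the exact single-stream expression (\ref{areve:2}), whose integrand is $\log_2(1+\alpha A/(\alpha\beta B+1))$. The plan is to replace the random variables $A$ and $B$ by their means inside the logarithm:
\begin{itemize}
\item $\mathrm{E}[A]=N_b$, since $A\sim\chi^2(2N_b)/2$;
\item $\mathrm{E}[B]=\mu=\int_0^\infty b f_B(b)\,\mathrm{d}b$.
\end{itemize}
This yields $\bar R_e\approx\log(1+\alpha N_b/(\alpha\beta\mu+1))$. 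The map $A\mapsto\log(1+cA)$ is concave, so Jensen's inequality shows this is an upper bound in $A$ for fixed $B$. The map $B\mapsto\log(1+\alpha A/(\alpha\beta B+1))$ is convex, so replacing $B$ by its mean produces a lower bound in $B$. These two opposite biases partially cancel, which is the heuristic reason the approximation is accurate. In addition, $A$ concentrates for moderate $N_b$, with relative spread $1/\sqrt{N_b}$. Subtracting this Eve term from Bob's approximation gives the first line of (\ref{approximation:1}).

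The main obstacle is that the statement is an approximation ("$\approx$") rather than a limit, so a rigorous error bound is not really available. The delicate step is the second case. There $B$, the minimum eigenvalue of the Wishart-type matrix, does not harden: it stays an $O(1)$ random quantity with heavy mass near zero when $N_e$ is close to $N_a-N_b$. Moving its expectation inside a convex function is therefore only heuristic. We would justify this step by the Jensen sandwich described above, together with a second-order (delta-method) remark that the error is of order $\mathrm{Var}(B)$ times the curvature. We would then defer the final confirmation to the numerical comparison against the exact expressions of Theorems 1 and 2.
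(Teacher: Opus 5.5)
Your proposal follows essentially the same route as the paper's proof in Appendix G. For $N_e > N_a - N_b$ you plug the finite ratios $N_a/N_b$ and $(N_e-N_a+N_b)/N_b$ into Theorem 3 and multiply by $N_b$; for $N_e \leqslant N_a - N_b$ you keep Bob's term from (\ref{infi:1}) and replace the random quantities in Eve's single-stream rate (\ref{appd:1}) by their means, namely $\lambda_{\min}^2 \to \mu$ and $\mathbf{g}^H\mathbf{g} \to N_b$, which the paper leaves implicit. Your concentration and Jensen remarks are extra heuristic support the paper does not give, but they are not a genuine sandwich: the true $\bar R_e$ and the approximation are both bounded above by the same quantity $\mathrm{E}_B[\log(1+\alpha N_b/(\alpha\beta B+1))]$, so they explain plausibility rather than bound the error.
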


\begin{proof}
See Appendix G.
\end{proof}

{
Although (\ref{approximation:1}) is presented for both the average and instantaneous secrecy rates, it should be interpreted as an approximation with different meanings. The instantaneous secrecy rate is a random variable that varies across channel realizations, and thus any closed-form expression can only approximate its typical behavior rather than provide an exact value for each realization.

The average secrecy rate corresponds to the expectation of the instantaneous secrecy rate and therefore reflects its central tendency. Since the approximation in (\ref{approximation:1}) characterizes the dominant deterministic component of the instantaneous secrecy rate, it naturally lies within the variation range of the instantaneous secrecy rate and closely approximates its average value.

Therefore, the same expression can be employed as a practical approximation for both the instantaneous and average secrecy rates, which explains the applicability of Corollary \ref{co:1} to both metrics.

Note that Corollary \ref{co:1} is presented as an approximation, as it extends an asymptotic characterization to the finite-antenna regime. Although deriving an explicit error bound is analytically challenging and thus left for future work, the subsequent numerical results demonstrate that the proposed approximation provides an accurate characterization of the average behavior.}

\begin{remark}
The approximation in (\ref{approximation:1}) bridges the finite-antenna case with the asymptotic results in Theorems 3 and 4. Mathematically, this approximation has the following properties:

\begin{itemize}
\item When ${N_e} > {N_a} - {N_b}$, the expression directly adopts the asymptotic form from Theorem 3, applying the finite-antenna values to the $\Phi $ function.
\item In the regime of ${N_e} \leqslant {N_a} - {N_b}$, the achievable rate at Eve is represented by a modified logarithmic term $\log \left( {1 + \alpha {N_b}/\left( {\alpha \beta \mu  + 1} \right)} \right)$ instead of a $\Phi$ function.
\item The parameter $\mu$ captures the average minimum eigenvalue effect, accounting for the random channel variations that are averaged out in the asymptotic case.
\end{itemize}

This approximation successfully transforms the complex statistical expectation in (\ref{isz:3}) into a deterministic function, making performance prediction and system optimization tractable.
\end{remark}

\begin{example}
Figs. \ref{fig_6} and \ref{fig_7} compare the analytical approximations from Eq. (\ref{approximation:1}) with simulation results in two different antenna regimes. Fig. \ref{fig_6} addresses the ${N_e} > {N_a} - {N_b}$ case with $\alpha = 20$ dB, $\beta = 1$, $N_a=32$, $N_b=16$, and $N_e=40$. In this configuration, the antenna condition ${N_e} > {N_a} - {N_b}$ is satisfied, placing the system in the regime where Eve can effectively apply ANE.

The simulation results confirm the theoretical prediction across the entire range of $\gamma$ values, with both the average and instantaneous secrecy rates closely matching the analytical curve. This validation demonstrates that the approximation in (\ref{approximation:1}) accurately captures the system behavior in this challenging scenario where Eve possesses significant antenna advantage.
\end{example}

\begin{figure}[t]
\centering
\includegraphics[width=3.5in,height=2.8in]{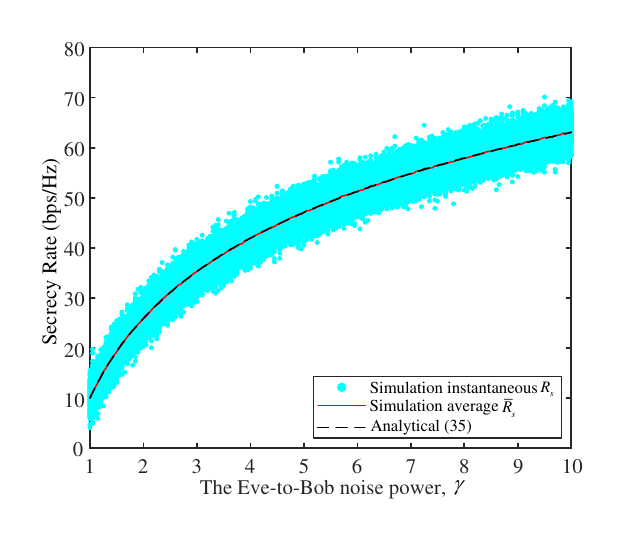}
\vspace{+3mm}
\caption{${{\bar R}_s},{R_s}$ versus $\gamma$ with $\alpha = 20$ dB, $\beta = 1$, $N_a=32$, $N_b=16$, and $N_e=40$ (${N_e} > {N_a} - {N_b}$ regime).}
\label{fig_6}
\vspace{-0em}
\end{figure}

\begin{remark}
The structure of (\ref{approximation:1}) reveals a key insight: when ${N_e} \leqslant {N_a} - {N_b}$, the parameter $\beta$ (AN power ratio) appears in the denominator of Eve's term as $\alpha \beta \mu + 1$. This confirms mathematically that increasing AN power always improves secrecy performance in this regime. Conversely, when ${N_e} > {N_a} - {N_b}$, $\beta$ is absent in Eve's term, indicating that the ANE scheme perfectly mitigates the influence of AN.
\end{remark}

\begin{example}
Fig. \ref{fig_7} examines the ${N_e} \leqslant {N_a} - {N_b}$ case with $\beta = 1$, $\gamma = 1$, $N_a=32$, $N_b=16$, and $N_e=12$. As shown in Fig. \ref{fig_7}, both average and instantaneous secrecy rates increase monotonically with $\alpha$, confirming that higher transmit power consistently improves secrecy performance in this antenna regime. The analytical approximation maintains high accuracy across the entire SNR range, demonstrating its effectiveness for practical system design.
\end{example}

\begin{figure}[t]
\centering
\includegraphics[width=3.5in,height=2.8in]{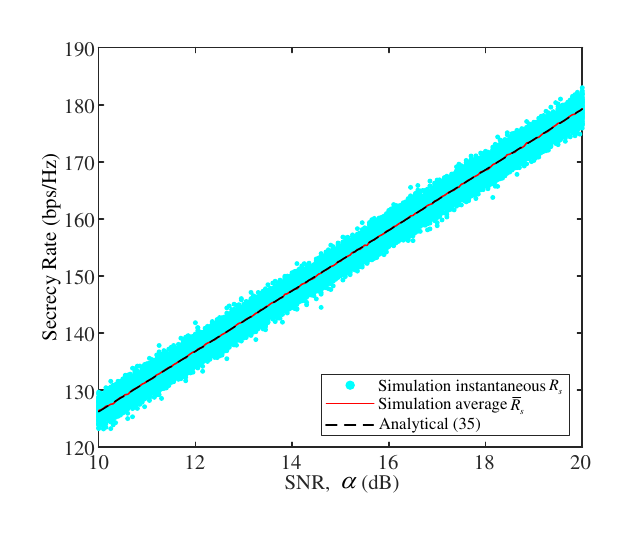}
\vspace{+3mm}
\caption{${{\bar R}_s},{R_s}$ versus $\alpha$ with $\beta = 1$, $\gamma = 1$, $N_a=32$, $N_b=16$, and $N_e=12$ (${N_e} \leqslant {N_a} - {N_b}$ regime).}
\label{fig_7}
\vspace{-0em}
\end{figure}

\subsection{Achieving the Average Secrecy Capacity}

The following corollary presents a sufficient condition for achieving the average secrecy capacity using Gaussian input alphabets, providing a theoretical upper bound on the secrecy performance.

\begin{corollary}
If $\beta \to \infty$ and ${N_e} \leqslant {N_a} - {N_b}$ are satisfied, then the inequality ${\bar R_s} \leqslant {\bar C_s} \leqslant {\bar C_b}$ asymptotically becomes an equality, where ${\bar C_b}$ denotes the average channel capacity for Bob.
\end{corollary}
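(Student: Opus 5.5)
The plan is to sandwich the secrecy rate between Bob's rate and Eve's rate. The chain $\bar R_s \leqslant \bar C_s \leqslant \bar C_b$ holds trivially: $\bar C_s$ is the maximum of $\bar R_s$ over input distributions by (\ref{cs:1}), and for any $p(\mathbf{s})$ we have $\bar R_s \leqslant \bar R_b \leqslant \bar C_b$ because $\bar R_e \geqslant 0$. It therefore suffices to show that, for Gaussian inputs $\mathbf{s}\sim\mathcal{CN}(\mathbf{0},\sigma_s^2\mathbf{I}_{N_b})$, we have $\bar R_b \to \bar C_b$ and $\bar R_e \to 0$ as $\beta\to\infty$ with $N_e \leqslant N_a-N_b$. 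Then $\bar R_s \to \bar C_b$ and both inequalities become equalities in the limit.

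\textbf{Bob's term.} Because $\mathbf{H}\mathbf{V}_0=\mathbf{0}$, Bob's observation (\ref{A:2}) is an AN-free MIMO channel $\mathbf{y}=\mathbf{H}\mathbf{V}_1\mathbf{s}+\mathbf{u}$. With isotropic Gaussian input this gives the ergodic rate $f(N_a,N_b,\alpha\gamma N_a)$ appearing in Theorem 2, and this quantity does not depend on $\beta$. I take $\bar C_b$ as the average capacity of this same channel with the same i.i.d. Gaussian input, which is capacity-achieving without transmitter CSI at a fixed power. I then need $\bar R_b=\bar C_b$ for every $\beta$, not merely in the limit. If instead $\bar C_b$ allowed water-filling, I would only claim equality relative to the equal-power Gaussian benchmark; I would state this convention explicitly.

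\textbf{Eve's term.} I use the semi-analytical expression (\ref{areve:2}) from Theorem 2. For fixed $a,b>0$ the integrand $\log_2\!\left(1+\alpha a/(\alpha\beta b+1)\right)$ decreases monotonically to $0$ as $\beta\to\infty$. It is also dominated by the $\beta$-free function $\log_2(1+\alpha a)$, which is integrable against $f_A$ since $A$ has finite mean. Dominated (or monotone) convergence then gives $\bar R_e\to 0$, provided $B>0$ almost surely. This positivity is the main obstacle. It requires that the $N_e\times N_e$ matrix $\mathbf{G}\mathbf{V}_0\mathbf{V}_0^H\mathbf{G}^H$ be almost surely full rank when $N_e\leqslant N_a-N_b$. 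The argument is as follows: $\mathbf{V}_0$ depends only on $\mathbf{H}$, which is independent of $\mathbf{G}$, and right-multiplying an i.i.d. Gaussian matrix by a matrix with orthonormal columns preserves its distribution. Hence $\mathbf{G}\mathbf{V}_0$ is an $N_e\times(N_a-N_b)$ i.i.d. $\mathcal{CN}(0,1)$ matrix, which has full row rank almost surely. Equivalently, $f_B$ in Theorem 2 is a genuine density, so it places no mass at $b=0$. This step is exactly where the hypothesis $N_e\leqslant N_a-N_b$ is used. When $N_e>N_a-N_b$, the minimum eigenvalue is identically zero and, consistent with Theorem 1, $\bar R_e$ stays independent of $\beta$ and bounded away from zero.

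\textbf{Conclusion.} Combining the two limits, $\bar R_s=[\bar R_b-\bar R_e]^+\to\bar C_b$. Since $\bar R_s\leqslant\bar C_s\leqslant\bar C_b$ holds for every $\beta$, squeezing yields $\bar C_s\to\bar C_b$ as well, so both inequalities become equalities asymptotically, with Gaussian input achieving the secrecy capacity. As a side check, the approximation in Corollary \ref{co:1} is consistent with this: the Eve term $\log(1+\alpha N_b/(\alpha\beta\mu+1))$ vanishes as $\beta\to\infty$ because $\mu>0$.
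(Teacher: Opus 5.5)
Your proposal is correct and follows the same route as the paper's proof in Appendix H. There, Eve's average rate is driven to zero as $\beta\to\infty$ because $\mathbf{G}\mathbf{V}_0\mathbf{V}_0^H\mathbf{G}^H$ is full rank when $N_e\leqslant N_a-N_b$, so that $\bar R_s$ equals Bob's $\beta$-independent Gaussian-input rate $\bar C_b$, and the chain $\bar R_s\leqslant\bar C_s\leqslant\bar C_b$ is then closed. Your version is more careful than the paper's: you justify the limit--expectation interchange by dominated convergence, prove $B>0$ almost surely from the i.i.d. Gaussian law of $\mathbf{G}\mathbf{V}_0$, and state explicitly the equal-power Gaussian-input convention for $\bar C_b$ that the paper adopts only implicitly through its citation of Telatar.
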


\begin{proof}
See Appendix H.
\end{proof}

\begin{remark}
From a mathematical perspective, this corollary establishes that under specific conditions, the achievable secrecy rate approaches the fundamental limit. The condition $\beta \to \infty$ implies unlimited AN power, which drives Eve's achievable rate to zero according to
\begin{equation}\label{secrecy_cap:1}
\begin{gathered}
  \ \ \ \mathop {\lim }\limits_{\beta  \to \infty } {{\text{E}}_{{\mathbf{H}},{\mathbf{G}}}}\left[ {I\left( {{\mathbf{s}};{\mathbf{z}}|{\mathbf{H}},{\mathbf{G}}} \right)} \right] \hfill \\
   = \mathop {\lim }\limits_{\beta  \to \infty } {{\text{E}}_{\mathbf{g}}}\left( {\log \left( {1 + \frac{{\alpha {{\mathbf{g}}^H}{\mathbf{g}}}}{{\alpha \beta {\lambda _{{{\min }^2}}} + 1}}} \right)} \right) = 0, \hfill \\
\end{gathered}
\end{equation}
where $\lambda_{\min}$ is the minimum eigenvalue of ${\mathbf{V}}_0^H{{\mathbf{G}}^H}$. When ${N_e} \leqslant {N_a} - {N_b}$, this matrix has full rank with $\lambda_{\min} \neq 0$, ensuring that $\alpha {{\mathbf{g}}^H}{\mathbf{g}}/\left( {\alpha \beta \lambda _{\min }^2 + 1} \right) \to 0$ as $\beta \to \infty$.

Consequently, the secrecy rate simplifies to
\begin{equation}\label{secrecy_cap:2}
\begin{gathered}
  {{\bar R}_s} = {{\text{E}}_{\mathbf{H}}}\left[ {I\left( {{\mathbf{s}};{\mathbf{y}}|{\mathbf{H}}} \right)} \right] - {{\text{E}}_{{\mathbf{H}},{\mathbf{G}}}}\left[ {I\left( {{\mathbf{s}};{\mathbf{z}}|{\mathbf{H}},{\mathbf{G}}} \right)} \right] \hfill \\
  \ \ \ \ = {{\text{E}}_{\mathbf{H}}}\left[ {I\left( {{\mathbf{s}};{\mathbf{y}}|{\mathbf{H}}} \right)} \right] = {{\bar C}_b}, \hfill \\
\end{gathered}
\end{equation}
indicating that Alice can transmit to Bob at the full channel capacity with perfect secrecy. This represents the optimal operating point from an information-theoretic perspective, where Gaussian input alphabets achieve the secrecy capacity.
\end{remark}

\begin{figure}[t]
\centering
\includegraphics[width=3.5in,height=2.8in]{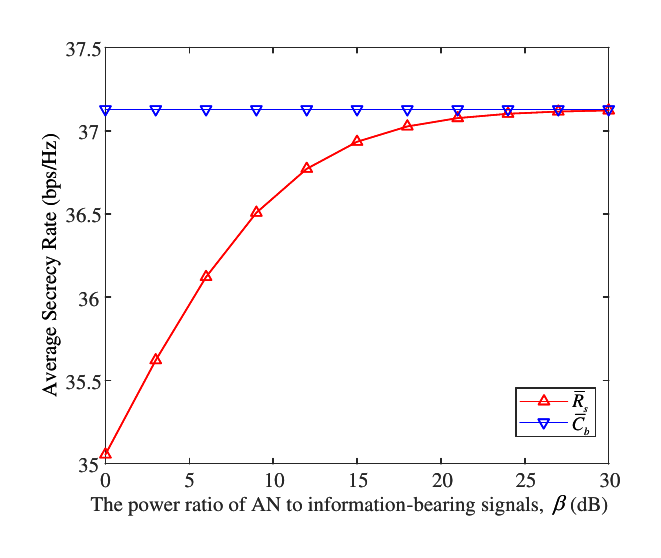}
\vspace{+3mm}
\caption{${{\bar R}_s}$ and ${{\bar C}_b}$ versus $\beta$ (dB) with $\alpha = 3$ dB, $\gamma = 1$, ${N_a} = 16$, ${N_b} = 8$, and ${N_e} = 4$ (${N_e} \leqslant {N_a} - {N_b}$ regime).}
\label{fig_corollary6}
\vspace{-0em}
\end{figure}

\begin{example}
Fig. \ref{fig_corollary6} demonstrates the convergence of ${{\bar R}_s}$ to ${{\bar C}_b}$ as $\beta$ increases. The parameters are set to $\alpha = 3$ dB, $\gamma = 1$, ${N_a} = 16$, ${N_b} = 8$, and ${N_e} = 4$, satisfying the crucial antenna condition ${N_e} < {N_a} - {N_b}$.

The mathematical prediction is that as $\beta$ increases, the secrecy rate converges to Bob's channel capacity. Based on equation (\ref{secrecy_cap:1}), the gap between ${{\bar R}_s}$ and ${{\bar C}_b}$ diminishes progressively as $\beta$ increases, with the rate of convergence determined by
\begin{align}\label{secrecy_cap:3}
{\bar C_b} - {\bar R_s} &= {{\rm E}_{\mathbf{g},\lambda_{\min}}}\left({\log \left( {1 + \frac{{\alpha {{\mathbf{g}}^H}{\mathbf{g}}}}{{\alpha \beta \lambda_{\min}^2 + 1}}} \right)}\right) \nonumber\\
&\to 0, \text{ as } \beta \to \infty.
\end{align}

This result demonstrates a fundamental information-theoretic principle: with sufficient AN power in the regime of ${N_e} \leqslant {N_a} - {N_b}$, near-optimal secrecy performance can be achieved. The simulation validates that the theoretical conditions in Corollary 6 are sufficient for approaching the secrecy capacity bound.
\end{example}

%

\section{Performance Without the AN Scheme}

The preceding discussions have demonstrated that while the AN scheme offers significant security benefits, the ANE scheme can mitigate the impact of AN under certain conditions, particularly when Eve's hardware capabilities significantly exceed those of legitimate parties. Having characterized the performance boundaries when AN is used, a natural question arises: \textit{Does the emergence of the ANE scheme render AN ineffective in enhancing PLS?} To address this fundamental question, we now provide a mathematical analysis of secrecy rates in the absence of the AN scheme, establishing a benchmark for comparison with our earlier results.

\subsection{Average Secrecy Rate Analysis}

Building on the same system model described in Section II, but removing the AN component, we establish the following result:

\begin{theorem}
In the absence of the AN scheme, the average secrecy rate can be formulated as
\begin{equation}\label{theo:5}
\bar R_s^a = {\left[ {f\left( {{N_a},{N_b},\eta } \right) - f\left( {{N_a},{N_e},\eta /\gamma } \right)} \right]^ + },
\end{equation}
where $\eta = \alpha \gamma {N_a}\left( {1 + \beta \left( {{N_a} - {N_b}} \right)/{N_b}} \right)$ denotes the equivalent SNR, and $f(t,r,x)$ is defined in (\ref{func:1}).
\end{theorem}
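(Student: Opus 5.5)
The plan is to repeat the Theorem 1 argument with the AN component removed and Alice's whole power budget put into the information-bearing signal. For a fair comparison with Section III, the total transmit power stays at $P_s+P_r=\alpha\gamma\sigma_u^2\left(N_b+\beta(N_a-N_b)\right)$. Without AN there is no reason to confine the signal to the $N_b$-dimensional subspace ${\mathbf{V}}_1$. I will therefore take ${\mathbf{x}}={\mathbf{s}}'$ with ${\mathbf{s}}'\in\mathbb{C}^{N_a}$ an isotropic Gaussian vector, so that ${\mathbf{y}}={\mathbf{H}}{\mathbf{s}}'+{\mathbf{u}}$ and ${\mathbf{z}}={\mathbf{G}}{\mathbf{s}}'+{\mathbf{n}}$. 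Eve has no interference to remove, so the projection ${\mathbf{W}}$ in (\ref{W:1}) is unnecessary and I take ${\mathbf{W}}={\mathbf{I}}_{N_e}$.

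The first step is to write both average rates as ergodic log-det capacities of i.i.d. Rayleigh channels:
\begin{equation*}
\bar R_b={\rm E}_{\mathbf{H}}\left[\log_2\det\left({\mathbf{I}}+\tfrac{\sigma_{x}^2}{\sigma_u^2}{\mathbf{H}}{\mathbf{H}}^H\right)\right],\quad
\bar R_e={\rm E}_{\mathbf{G}}\left[\log_2\det\left({\mathbf{I}}+\tfrac{\sigma_{x}^2}{\sigma_n^2}{\mathbf{G}}{\mathbf{G}}^H\right)\right].
\end{equation*}
Here $\sigma_x^2$ is the per-entry transmit variance, i.e.\ $\sigma_s^2$ scaled by the factor $1+\beta(N_a-N_b)/N_b$ obtained from folding $P_r$ into the signal. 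The second step is to map these expressions onto the closed form $f(t,r,x)$ of (\ref{func:1}).

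For this mapping I will use the SNR normalization of $f$ already fixed in Theorem 1 (equivalently (56) in \cite{ANE2}), where Bob's rate with per-entry signal SNR $\alpha\gamma$ over an $N_b\times N_a$ channel reads $f(N_a,N_b,\alpha\gamma N_a)$. Replacing $\sigma_s^2$ by the inflated variance multiplies the SNR argument by $1+\beta(N_a-N_b)/N_b$. This gives $\bar R_b=f(N_a,N_b,\eta)$ with $\eta$ as defined in the statement. For Eve, the channel is $N_e\times N_a$ with the same transmit covariance, and the noise variance is $\sigma_n^2=\gamma\sigma_u^2$. The SNR argument is therefore $\eta/\gamma$, which gives $\bar R_e=f(N_a,N_e,\eta/\gamma)$. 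Because $f$ depends on $(t,r)$ only through $m=\min\{t,r\}$ and $n=\max\{t,r\}$, both regimes $N_e\lessgtr N_a$ are covered by a single formula, and no case split on $N_e$ is needed, unlike in Theorems 1 and 2. Substituting both rates into (\ref{vr:2}) then yields (\ref{theo:5}).

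The main obstacle is the power bookkeeping rather than any random-matrix computation, since the latter is inherited entirely from \cite{SR1}. Specifically, one has to check that the scaling convention relating the third argument of $f$ to the physical per-antenna SNR, fixed implicitly in Theorem 1, is consistent with redistributing $P_s+P_r$ uniformly over the $N_a$ transmit dimensions. That consistency is what produces exactly the factor $N_a\left(1+\beta(N_a-N_b)/N_b\right)$ in $\eta$. I would settle this first by checking the $\beta=0$ case, in which $\eta$ must reduce to the argument $\alpha\gamma N_a$ of Theorem 1. Only after that check passes would I carry the same factor through to Eve's term.
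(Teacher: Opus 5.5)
There is a genuine gap, and it is in the power bookkeeping you flagged yourself. For $N_a>N_b$ your three premises cannot all hold. Suppose ${\mathbf{x}}={\mathbf{s}}'$ is isotropic over all $N_a$ antennas and ${\rm E}\|{\mathbf{s}}'\|^2=P_s+P_r=\sigma_s^2\left(N_b+\beta(N_a-N_b)\right)$. Then the per-entry variance is $\sigma_x^2=(P_s+P_r)/N_a=\frac{N_b}{N_a}\sigma_s^2\left(1+\beta(N_a-N_b)/N_b\right)$, not $\sigma_s^2\left(1+\beta(N_a-N_b)/N_b\right)$. The factor $1+\beta(N_a-N_b)/N_b=(P_s+P_r)/P_s$ is the per-stream inflation only when the budget is shared by $N_b$ streams.

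Now use the normalization of $f$ fixed in Theorem 1: ${\rm E}\log\det({\mathbf{I}}_r+c{\mathbf{M}}{\mathbf{M}}^H)=f(t,r,ct)$ for an $r\times t$ i.i.d. Rayleigh ${\mathbf{M}}$. Enforcing your own budget, your model gives $\bar R_b=f(N_a,N_b,\eta N_b/N_a)$ and $\bar R_e=f(N_a,N_e,\eta N_b/(\gamma N_a))$. You reach $\eta$ only by implicitly transmitting $(N_a/N_b)(P_s+P_r)$.

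Your $\beta=0$ check would catch this if you ran it on the rate rather than on the expression for $\eta$. Isotropic input with total power $P_s=N_b\sigma_s^2$ gives Bob ${\rm E}\log\det\left({\mathbf{I}}+\alpha\gamma\frac{N_b}{N_a}{\mathbf{H}}{\mathbf{H}}^H\right)=f(N_a,N_b,\alpha\gamma N_b)$, not $f(N_a,N_b,\alpha\gamma N_a)$. The missing $N_a/N_b$ is the beamforming gain of ${\mathbf{V}}_1$, which you discarded when you decided there was no reason to confine the signal to ${\mathbf{V}}_1$.

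The paper (Appendix I) keeps ${\mathbf{x}}={\mathbf{V}}_1{\mathbf{s}}$ and loads the whole budget $P_s+P_r$ onto the $N_b$ streams, each with variance $\sigma_s^2(1+\beta(N_a-N_b)/N_b)$. The identity ${\mathbf{H}}{\mathbf{V}}_1{\mathbf{V}}_1^H{\mathbf{H}}^H={\mathbf{H}}{\mathbf{H}}^H$ from Appendix A then gives $\bar R_b=f(N_a,N_b,\eta)$ under the fair budget. That is the idea you need for Bob's term.

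It does not, however, deliver Eve's term as stated. With ${\mathbf{V}}_1$ precoding, Eve's log-det involves ${\mathbf{G}}{\mathbf{V}}_1$, an $N_e\times N_b$ matrix with i.i.d. $\mathcal{C}\mathcal{N}(0,1)$ entries. In the same normalization this evaluates to $f\left(N_b,N_e,\alpha N_b(1+\beta(N_a-N_b)/N_b)\right)$. The paper writes exactly this log-det in (\ref{noansc:2}) and then equates it to $f(N_a,N_e,\eta/\gamma)$ without justification. The identity ${\mathbf{H}}{\mathbf{V}}_1{\mathbf{V}}_1^H{\mathbf{H}}^H={\mathbf{H}}{\mathbf{H}}^H$ has no analogue for ${\mathbf{G}}$.

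Both terms of (\ref{theo:5}) arise together only under the model your computation actually uses: isotropic input with per-antenna variance $\sigma_s^2(1+\beta(N_a-N_b)/N_b)$. To make your sketch a proof, state that model explicitly and drop the claim that the total power is $P_s+P_r$; it is $N_a/N_b$ times larger. If you keep the fair budget instead, what you can prove is a different formula, not (\ref{theo:5}).
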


\begin{proof}
See Appendix I.
\end{proof}

\begin{remark}
The mathematical structure of (\ref{theo:5}) reveals several key insights:

\begin{itemize}
\item The secrecy rate is expressed as the difference between Bob's and Eve's achievable rates, with both terms using the same function $f(\cdot)$ but with different parameters.
\item The equivalent SNR $\eta$ incorporates the total power that would otherwise be split between information signal and AN.
\item The term $\eta/\gamma$ in Eve's achievable rate reflects her channel quality relative to Bob's.
\item When $f({{N_a},{N_e},\eta /\gamma}) \geq f({{N_a},{N_b},\eta})$, the secrecy rate becomes zero, indicating perfect eavesdropping.
\end{itemize}

For a fair comparison with the AN scheme, Eq. (\ref{theo:5}) assumes that the total power, initially allocated for transmitting both the information-bearing signal and AN, is now solely used for transmitting the information-bearing signal.
\end{remark}

\begin{figure}[t]
\centering
\includegraphics[width=3.5in,height=2.8in]{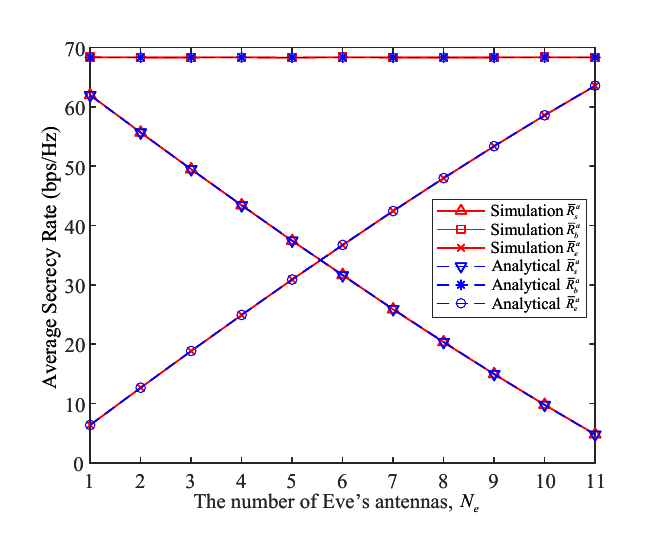}
\vspace{+3mm}
\caption{$\bar R_s^a$, $\bar R_b^a$, and $\bar R_e^a$ versus $N_e$ with $\alpha = 6$ dB, $\beta = 1$, $\gamma = 1$, ${N_a} = 16$, and ${N_b} = 12$.}
\label{fig_theorem5}
\vspace{-0em}
\end{figure}

\begin{example}
Fig. \ref{fig_theorem5} plots the analytical expressions for $\bar R_s^a$, $\bar R_b^a$, and $\bar R_e^a$ from (\ref{theo:5}) alongside the corresponding simulation results. The parameters are set to $\alpha = 6$ dB, $\beta = 1$, $\gamma = 1$, ${N_a} = 16$, and ${N_b} = 12$.

The mathematical behavior observed in Fig. \ref{fig_theorem5} confirms that as $N_e$ increases, $\bar R_e^a$ grows monotonically while $\bar R_b^a$ remains constant, resulting in a diminishing $\bar R_s^a$. As expected, the analytical average secrecy rate and achievable rates, i.e., $\bar R_s^a$, $\bar R_b^a$, and $\bar R_e^a$, are in close agreement with the simulated values.
\end{example}

\subsection{Asymptotic Behavior Analysis}

Similar to our approach in Section III. B., we now examine the asymptotic behavior of the secrecy rate without AN to enable direct comparison with the AN-based results in Theorems 3 and 4:

\begin{theorem}
As ${N_a},{N_b},{N_e} \to \infty $ with ${N_e}/{N_b} \to {\delta _1}$ and ${N_a}/{N_b} \to {\delta _2}$, the normalized instantaneous secrecy rate without AN approaches
\begin{equation}\label{theo:6}
\frac{{R_s^a}}{{{N_b}}} \to {\left[ {\Phi \left( {\frac{\eta }{{{\delta _2}}},{\delta _2}} \right) - {\delta _1}\Phi \left( {\frac{{\eta {\delta _1}}}{{\gamma {\delta _2}}},\frac{{{\delta _2}}}{{{\delta _1}}}} \right)} \right]^ + },
\end{equation}
where $\Phi \left( {x,y} \right)$ is defined in (\ref{fidef:1}) and $\eta$ represents the equivalent SNR.
\end{theorem}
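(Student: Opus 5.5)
Without AN, Alice sends $\mathbf{x}=\mathbf{V}_1\mathbf{s}$ (or equivalently a precoded Gaussian vector) with the total power $P_s+P_r$ now devoted to the information signal. The plan is to write both Bob's and Eve's instantaneous rates as log-determinants of Wishart-type matrices and then apply the large-system Marchenko--Pastur limit already used for Theorem 3 (Appendix C), which is the source of the function $\Phi(x,y)$ in (\ref{fidef:1}).

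First, we express the rates. With total transmit power spread isotropically, Bob's rate is $R_b^a=\log\det(\mathbf{I}_{N_b}+\frac{\eta}{N_a}\mathbf{H}\mathbf{H}^H)$. Here $\eta=\alpha\gamma N_a(1+\beta(N_a-N_b)/N_b)$ is exactly $(P_s+P_r)/\sigma_u^2$ rescaled per transmit antenna; this is the same normalization that yields $f(N_a,N_b,\eta)$ in Theorem 5. Eve's rate is $R_e^a=\log\det(\mathbf{I}_{N_e}+\frac{\eta}{\gamma N_a}\mathbf{G}\mathbf{G}^H)$, because her noise is $\gamma$ times larger relative to Bob's.

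Second, we invoke the known deterministic equivalent. Let $\mathbf{X}$ be an $r\times t$ matrix with i.i.d. $\mathcal{CN}(0,1)$ entries and suppose $t/r\to y$. Then
\[
\frac{1}{r}\log\det\left(\mathbf{I}_r+\frac{\rho}{t}\mathbf{X}\mathbf{X}^H\right)
\]
converges almost surely to a closed form in $\rho$ and $y$, the Verd\'u--Shamai formula. We will check that this limit equals $\Phi(\rho/y,y)$ as written in (\ref{fidef:1}) and (\ref{Fdef:1}); this identification is implicit in Theorem 3, where Bob's term $\Phi(\alpha\gamma N_b,\delta_2)$ arises in precisely this way.

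For Bob we take $r=N_b$ and $t=N_a$, so $y=\delta_2$ and $\rho=\eta$. Normalizing by $N_b$ then gives $\Phi(\eta/\delta_2,\delta_2)$.

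For Eve we take $r=N_e$ and $t=N_a$, so $y=\delta_2/\delta_1$ and $\rho=\eta/\gamma$. The normalization by $N_e$ gives $\Phi\bigl(\eta\delta_1/(\gamma\delta_2),\delta_2/\delta_1\bigr)$. Since $N_e/N_b\to\delta_1$, rescaling to a normalization by $N_b$ multiplies this by $\delta_1$, which produces the second term of (\ref{theo:6}).

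Third, we combine the two limits. Almost-sure convergence of each term implies convergence of the difference, and $[\cdot]^+$ is continuous, so we obtain (\ref{theo:6}). No case split on $\delta_1$ versus $\delta_2$ is needed, because there is no ANE projection; this is the main structural difference from Theorems 3 and 4.

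The main obstacle is bookkeeping, not depth. We must verify that the argument scaling in $\Phi$ (SNR per receive versus per transmit dimension) matches the convention of (\ref{fidef:1}). In particular, we need $\Phi(x,y)$ to take $x$ as the SNR normalized so that Bob's term reads $\Phi(\eta/\delta_2,\delta_2)$. A further subtlety is that $\eta$ itself grows with $N_b$, just as $\alpha\gamma N_b$ does in Theorem 3, so the limit must be read as the deterministic equivalent at each $N_b$. We would handle this exactly as in Appendix C, by tracking the scaled SNR through the Stieltjes-transform fixed point.
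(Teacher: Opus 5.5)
Your proposal follows the paper's proof (Appendix J) step for step. Both write $R_b^a$ and $R_e^a$ as log-determinants with coefficients $\eta/N_a$ and $\eta/(\gamma N_a)$, apply the large-system (Verd\'u--Shamai) limit that the paper takes from (68) of \cite{ANE2}, and convert Eve's $1/N_e$ normalization to $1/N_b$ via $N_e/N_b\to\delta_1$.

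The one difference works in your favor. The paper writes Eve's Gram matrix as $\mathbf{G}\mathbf{V}_1\mathbf{V}_1^H\mathbf{G}^H$. Since $\mathbf{G}\mathbf{V}_1$ is $N_e\times N_b$ Gaussian, that matrix would give second argument $1/\delta_1$, whereas the stated $\delta_2/\delta_1$ requires the $N_a$-dimensional $\mathbf{G}\mathbf{G}^H$ you use. One caveat on your justification: your two coefficients correspond to an isotropic input of total power $\eta\sigma_u^2=(N_a/N_b)(P_s+P_r)$, not to spreading $P_s+P_r$ isotropically as your first paragraph says.
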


\begin{proof}
See Appendix J.
\end{proof}

\begin{remark}
Eq. (\ref{theo:6}) provides critical insight into the asymptotic security performance without AN. Comparing this result with Theorems 3 and 4, we observe a fundamental difference: the mathematical structure differs significantly from the AN case, particularly in how Eve's channel contribution is represented. Unlike the AN case where Eve's term vanishes when $\delta_1 \leq \delta_2 - 1$, here the second term ${\delta _1}\Phi (\eta {\delta _1}/\left( {\gamma {\delta _2}} \right),{\delta _2}/{\delta _1})$ is always positive for any non-zero $\delta_1$. This difference highlights a key security advantage of the AN scheme.

\begin{itemize}
\item The first term $\Phi (\eta /{\delta _2},{\delta _2})$ represents Bob's normalized achievable rate, similar to the AN case.
\item The second term ${\delta _1}\Phi (\eta {\delta _1}/\left( {\gamma {\delta _2}} \right),{\delta _2}/{\delta _1})$ represents Eve's normalized achievable rate, scaled by her antenna ratio $\delta_1$.
\item Unlike the AN case with ANE, there is no critical threshold that eliminates Eve's term regardless of $\delta_1$.
\end{itemize}

This theorem explicitly establishes that without AN, Eve can always achieve a positive normalized achievable rate when $\delta_1 > 0$, contrasting sharply with the AN case where Eve's achievable rate vanishes when $\delta_1 \leq \delta_2 - 1$ as shown in Theorem 4.
\end{remark}

\begin{figure}[t]
\centering
\includegraphics[width=3.5in,height=2.8in]{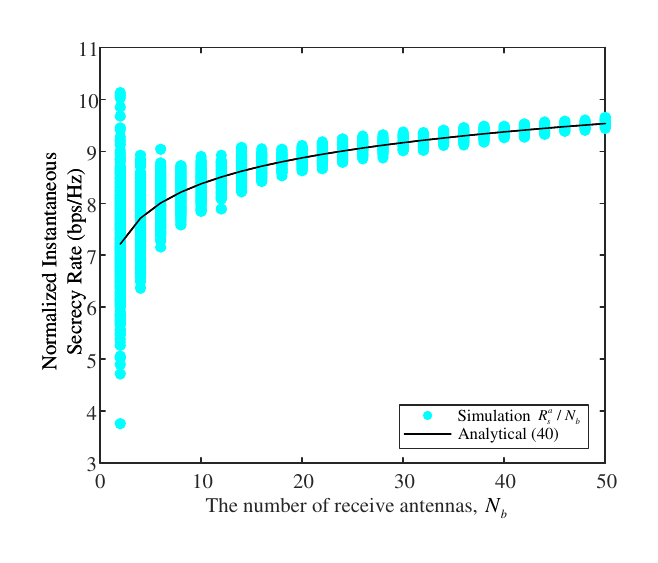}
\vspace{+3mm}
\caption{$R_s^a/{N_b}$ versus $N_b$ with $\alpha = 20$ dB, $\beta = 10$, $\gamma = 5$, ${\delta_1} = 0.5$, and ${\delta_2} = 1.5$.}
\label{fig_theorem6}
\vspace{-0em}
\end{figure}

\begin{example}
Fig. \ref{fig_theorem6} compares the analytical expression for $R_s^a/{N_b}$ from (\ref{theo:6}) with simulation results. The parameters are selected as $\alpha = 20$ dB, $\beta = 10$, $\gamma = 5$, ${\delta_1} = 0.5$, and ${\delta_2} = 1.5$.
The simulation confirms the theoretical prediction, with convergence becoming more precise as $N_b$ increases.
\end{example}

\subsection{Conditions for Perfect Eavesdropping}

To gain deeper insight into when security completely fails without AN, we identify mathematical conditions that result in zero secrecy rate and compare them with the conditions for failure derived in Corollary 1 for the AN case:

\begin{corollary}
If $\gamma \leqslant 1$ and $N_e \geqslant N_b$ are satisfied, then $\bar R_s^a = 0$.
\end{corollary}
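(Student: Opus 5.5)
We work directly from Theorem 5, which gives $\bar R_s^a = [f(N_a,N_b,\eta) - f(N_a,N_e,\eta/\gamma)]^+$. It therefore suffices to show
\[
f(N_a,N_e,\eta/\gamma) \geqslant f(N_a,N_b,\eta)
\]
whenever $\gamma\leqslant 1$ and $N_e\geqslant N_b$.

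Rather than manipulate the closed form (\ref{func:1}), we read $f(t,r,x)$ by its meaning. It is the ergodic capacity $\mathrm{E}[\log_2\det(\mathbf{I}+\frac{x}{t}\mathbf{K}\mathbf{K}^H)]$ of a $t\times r$ i.i.d. Rayleigh MIMO channel with equal-power Gaussian inputs and total SNR $x$. In Appendix I, Bob's and Eve's rates without AN are exactly these expectations, with channels $\mathbf{H}\in\mathbb{C}^{N_b\times N_a}$ and $\mathbf{G}\in\mathbb{C}^{N_e\times N_a}$, the same transmit covariance, and noise powers $\sigma_u^2$ and $\sigma_n^2=\gamma\sigma_u^2$. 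So the effective SNRs are $\eta$ and $\eta/\gamma$.

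The monotonicity is then proved in two steps.

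\textbf{Step 1 (SNR).} Because $\gamma\leqslant1$, we have $\eta/\gamma\geqslant\eta$. The map $x\mapsto\log_2\det(\mathbf{I}+\frac{x}{N_a}\mathbf{G}\mathbf{G}^H)$ is nondecreasing for each realization, since every eigenvalue term $\log_2(1+x\lambda_i/N_a)$ is nondecreasing in $x$. Taking expectations gives $f(N_a,N_e,\eta/\gamma)\geqslant f(N_a,N_e,\eta)$.

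\textbf{Step 2 (receive antennas).} Since $N_e\geqslant N_b$, the $N_e\times N_a$ Gaussian matrix $\mathbf{G}$ contains a top $N_b\times N_a$ submatrix $\mathbf{G}_1$ with the same distribution as $\mathbf{H}$. Write $\mathbf{G}=[\mathbf{G}_1^H\ \mathbf{G}_2^H]^H$. By Sylvester's identity,
\[
\log_2\det\!\left(\mathbf{I}_{N_e}+c\,\mathbf{G}\mathbf{G}^H\right)=\log_2\det\!\left(\mathbf{I}_{N_a}+c\,\mathbf{G}_1^H\mathbf{G}_1+c\,\mathbf{G}_2^H\mathbf{G}_2\right)\geqslant\log_2\det\!\left(\mathbf{I}_{N_a}+c\,\mathbf{G}_1^H\mathbf{G}_1\right).
\]
The inequality holds because $\log\det$ is monotone in the Loewner order and $\mathbf{G}_2^H\mathbf{G}_2\succeq\mathbf{0}$. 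Equivalently, this is the data-processing inequality, since discarding receive antennas cannot increase mutual information. Taking expectations gives $f(N_a,N_e,\eta)\geqslant f(N_a,N_b,\eta)$.

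Chaining the two steps yields $f(N_a,N_e,\eta/\gamma)\geqslant f(N_a,N_b,\eta)$, so the bracket in Theorem 5 is nonpositive and $\bar R_s^a=0$.

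The main obstacle is justifying that $f$, defined through the combinatorial expression (\ref{func:1}), really equals the ergodic log-det capacity. Without that identification, monotonicity in $r$ and $x$ is not visible from the formula. We would settle this by invoking the derivation of Theorem 5 in Appendix I, where both terms arise as such expectations (citing \cite{SR1}). We would also check that the argument order $f(N_a,\cdot,\cdot)$ is consistent: because $f$ depends on $t$ and $r$ only through $m$ and $n$, the expression is symmetric in those two arguments. However, the SNR normalization per transmit antenna, $x/t$, must be the same for both terms, and it is, because both use $t=N_a$.
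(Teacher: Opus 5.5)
Your proof is correct and follows essentially the same route as the paper. The paper likewise reduces the claim via Theorem 5 to $f(N_a,N_e,\eta/\gamma)\geqslant f(N_a,N_b,\eta)$ and chains monotonicity of $f$ in the SNR argument (\ref{pp:2}) with monotonicity in the antenna count (\ref{pp:3}). The difference is that you justify these monotonicities, eigenvalue-wise for the SNR and via Sylvester's identity plus the Loewner order for the receive antennas, where the paper only asserts them. You also work directly in the second argument with $t=N_a$ fixed, which avoids the symmetry (\ref{pp:1}) that the paper's appeal to (\ref{pp:3}) implicitly needs and that conflicts with the $x/t$ normalization.
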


\begin{proof}
Substituting $\gamma = 1$ and $N_e = N_b$ into (\ref{theo:5}), and applying the properties (\ref{pp:2}) and (\ref{pp:3}), we obtain
\begin{align}
f(N_a,N_e,\eta/\gamma) &= f(N_a,N_b,\eta), \ \text{for } \gamma = 1, N_e = N_b, \nonumber\\
f(N_a,N_e,\eta/\gamma) &\geq f(N_a,N_b,\eta), \ \text{for } \gamma \leq 1, N_e \geq N_b,
\end{align}
resulting in $\bar R_s^a = 0$.
\end{proof}

\begin{remark}
This corollary provides a simple but powerful mathematical characterization: when the power levels of AWGN at Bob and Eve are similar ($\gamma \leq 1$), Eve requires only $N_e \geq N_b$ antennas to achieve perfect eavesdropping.

Comparing this with Corollary 1 for the AN case, where Eve requires $N_e \geq 2N_a - N_b$ antennas, we observe a significant security advantage of the AN scheme. The introduction of AN increases the antenna threshold from $N_b$ to $2N_a - N_b$, creating an antenna gap of $2{N_a} - 2{N_b}$ that Eve must overcome (note that ${N_a} > {N_b}$ naturally arises from the requirement for generating AN). This mathematical comparison quantitatively demonstrates the effectiveness of AN in enhancing security, directly addressing the question posed at the beginning of this section.
\end{remark}

\begin{figure}[t]
\centering
\includegraphics[width=3.5in,height=2.8in]{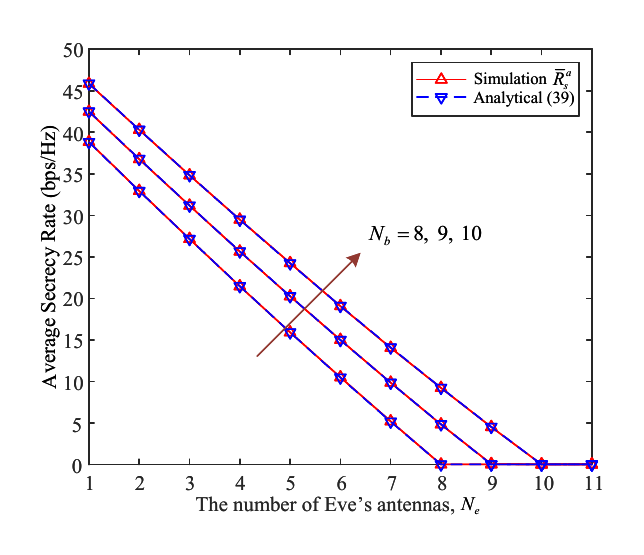}
\vspace{+3mm}
\caption{$\bar R_s^a$ versus $N_e$ with $\alpha = 3$ dB, $\beta = 1$, $\gamma = 1$, and ${N_a} = 16$ for different values of $N_b$.}
\label{fig_corollary7}
\vspace{-0em}
\end{figure}

\begin{example}
Fig. \ref{fig_corollary7} illustrates the average secrecy rate $\bar R_s^a$ as a function of $N_e$ with parameters $\alpha = 3$ dB, $\beta = 1$, $\gamma = 1$, and ${N_a} = 16$. Three curves are shown for ${N_b} = 8$, $9,$ and $10$.

The results confirm that for each case, $\bar R_s^a$ becomes exactly zero when $N_e$ reaches $N_b$. Specifically, the critical points occur at $N_e = 8$, $9,$ and $10$ respectively, validating the mathematical prediction of Corollary 7. The abrupt transition to zero secrecy rate demonstrates that without AN, security is completely compromised once Eve's antenna count reaches the threshold $N_b$.
\end{example}

\begin{corollary}
Under the same assumption of Theorem 6, as ${\delta _1} \to 1$ with $\gamma = 1$, we have $R_s^a/{N_b} \to 0$.
\end{corollary}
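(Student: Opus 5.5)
The plan is to work directly from the limiting expression in Theorem 6, in the same way that Corollaries 2 and 4 were obtained from (\ref{theo:3}). Setting $\gamma=1$ in (\ref{theo:6}) gives
\begin{equation}
\frac{R_s^a}{N_b} \to \left[ \Phi\left(\frac{\eta}{\delta_2},\delta_2\right) - \delta_1 \Phi\left(\frac{\eta\delta_1}{\delta_2},\frac{\delta_2}{\delta_1}\right)\right]^+ .
\end{equation}
I will denote the bracketed quantity by $D(\delta_1)$, with $\eta$ and $\delta_2$ held as the fixed parameters of Theorem 6.

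The first step is the substitution $\delta_1=1$. The Eve term becomes $1\cdot\Phi(\eta/\delta_2,\delta_2)$, which is identical to Bob's term, so $D(1)=0$. This matches the interpretation in the remark after Theorem 6: with $\gamma=1$ and equal antenna growth, Eve's equivalent MIMO channel has the same statistics as Bob's.

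The second step is to justify passing from $\delta_1=1$ to $\delta_1\to 1$, which requires continuity of $D$ at $\delta_1=1$. From (\ref{Fdef:1}), $\mathcal{F}(x,y)$ is a composition of square roots of quantities bounded below by $1$, hence continuous for $x>0$, $y>0$. In (\ref{fidef:1}), the logarithm arguments $1+x-\mathcal{F}/4$ and $1+xy+\mathcal{F}/4$ stay strictly positive, since they are the Marchenko--Pastur capacity terms. Finally, $-\frac{\log e}{4x}\mathcal{F}$ is continuous for $x>0$. It follows that $\Phi$ is jointly continuous on $(0,\infty)^2$. The maps $\delta_1\mapsto \eta\delta_1/\delta_2$, $\delta_1\mapsto \delta_2/\delta_1$ and multiplication by $\delta_1$ are continuous near $\delta_1=1$, and $[\cdot]^+$ is continuous. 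Hence $[D(\delta_1)]^+\to[D(1)]^+=0$.

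The main obstacle I expect is the step where the logarithm arguments must be shown to be strictly positive near $y=\delta_2/\delta_1\approx\delta_2$. For this I would use the identity
\begin{equation}
1+x-\frac{\mathcal{F}}{4}=\frac{1}{4}\left(\sqrt{x(1+\sqrt y)^2+1}+\sqrt{x(1-\sqrt y)^2+1}\right)^2\cdot\frac{1}{\text{(positive)}},
\end{equation}
or, more simply, the standard bound $\mathcal{F}(x,y)\le 4x\min\{1,y\}$, which gives $1+x-\mathcal{F}/4\ge 1$. A secondary subtlety is that $\eta$ grows with $N_a$. I will follow Theorem 6 in treating $\eta$ as the fixed SNR parameter of the limiting formula. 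Alternatively, the continuity argument is uniform in $\eta$ on compact sets, so the conclusion $R_s^a/N_b\to 0$ holds for every fixed $\eta$.
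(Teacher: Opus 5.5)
Your proposal is correct and matches the paper's proof, which simply substitutes $\delta_1=\gamma=1$ into (\ref{theo:6}) so that Eve's term $\delta_1\Phi(\eta\delta_1/(\gamma\delta_2),\delta_2/\delta_1)$ collapses to Bob's term $\Phi(\eta/\delta_2,\delta_2)$. Your continuity argument is an extra step the paper leaves implicit, needed only if $\delta_1\to 1$ is read as an iterated limit; for it, rely on the bound $\mathcal{F}(x,y)\le 4x\min\{1,y\}$, which is valid, rather than the displayed ``identity'', whose unspecified positive factor proves nothing by itself.
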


\begin{proof}
Substituting ${\delta _1} = 1$ and $\gamma = 1$ into (\ref{theo:6}), the proof is completed.
\end{proof}

\begin{remark}
Corollary 8 extends the finite antenna result to the asymptotic case: when the growth rate of $N_e$ approaches that of $N_b$ with equal noise power levels, the normalized instantaneous secrecy rate vanishes. This establishes a fundamental limit on secrecy performance in massive MIMO systems without AN, regardless of how large the antenna arrays become.
\end{remark}

\begin{figure}[t]
\centering
\includegraphics[width=3.5in,height=2.8in]{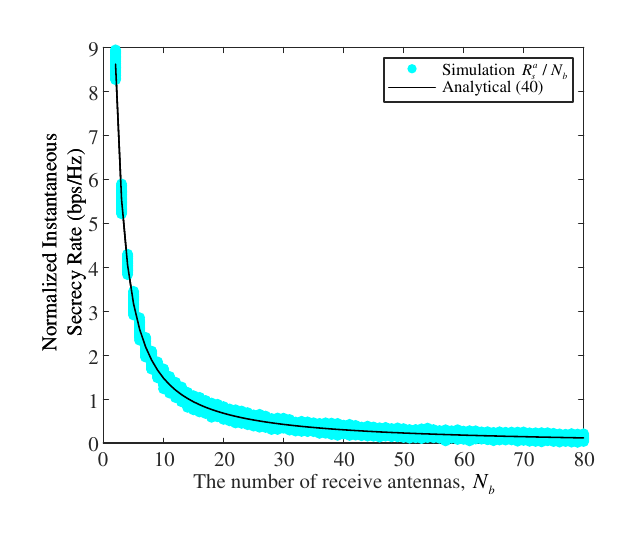}
\vspace{+3mm}
\caption{$R_s^a/{N_b}$ versus $N_b$ with $\alpha = 15$ dB, $\beta = 1$, $\gamma = 1$, ${N_a} = 100$, and ${N_e} = {N_b}-1$.}
\label{fig_corollary8}
\vspace{-0em}
\end{figure}

\begin{example}
Fig. \ref{fig_corollary8} depicts $R_s^a/{N_b}$ versus $N_b$ with parameters $\alpha = 15$ dB, $\beta = 1$, $\gamma = 1$, ${N_a} = 100$, and ${N_e} = {N_b}-1$. As $N_b$ increases, $\delta_1 = N_e/N_b = (N_b-1)/N_b$ converges to 1, causing the normalized instantaneous secrecy rate to approach zero. This confirms the mathematical prediction of Corollary 8 and illustrates how even a slight numerical advantage in antenna count becomes irrelevant in the asymptotic regime.
\end{example}

\subsection{Approximation of Secrecy Rates}

Following our approach in Section IV. B. for the AN case, we now develop a comparable approximation for the no-AN scenario to facilitate direct performance comparisons:

\begin{corollary}
The average and instantaneous secrecy rates without the AN scheme can be approximated as
\begin{equation}\label{corollary:7}
R_{s,app}^a = {\left[ {{N_b}\Phi \left( {\frac{{\eta {N_b}}}{{{N_a}}},\frac{{{N_a}}}{{{N_b}}}} \right) - {N_e}\Phi \left( {\frac{{\eta {N_e}}}{{\gamma {N_a}}},\frac{{{N_a}}}{{{N_e}}}} \right)} \right]^ + }.
\end{equation}
\end{corollary}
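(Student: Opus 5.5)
The plan mirrors the derivation of Corollary~\ref{co:1}: take the large-system limit of Theorem~6, undo the normalization by $N_b$, and substitute the finite antenna numbers for the limiting ratios $\delta_1\to N_e/N_b$ and $\delta_2\to N_a/N_b$.

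\textbf{Step 1 (Bob's term).} Theorem~6 gives
\begin{equation*}
R_s^a/N_b \to \left[\Phi(\eta/\delta_2,\delta_2)-\delta_1\Phi(\eta\delta_1/(\gamma\delta_2),\delta_2/\delta_1)\right]^+ .
\end{equation*}
I will treat the right-hand side as a deterministic equivalent for finite dimensions and multiply through by $N_b$. With $\delta_2=N_a/N_b$, Bob's term becomes
\begin{equation*}
N_b\Phi(\eta N_b/N_a,N_a/N_b),
\end{equation*}
which is the first term of (\ref{corollary:7}).

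\textbf{Step 2 (Eve's term).} Substituting $\delta_1=N_e/N_b$ gives $N_b\delta_1=N_e$. The SNR argument becomes
\begin{equation*}
\eta\delta_1/(\gamma\delta_2)=\eta N_e/(\gamma N_a),
\end{equation*}
and the ratio argument becomes $\delta_2/\delta_1=N_a/N_e$. So Eve's term is $N_e\Phi(\eta N_e/(\gamma N_a),N_a/N_e)$. Since $N_b>0$ commutes with $[\cdot]^+$, these two substitutions yield (\ref{corollary:7}) exactly.

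\textbf{Step 3 (justifying the approximation).} For the instantaneous rate, Theorem~6 comes from Marchenko--Pastur-type convergence of $\frac1{N}\log\det(\mathbf I+\frac{\rho}{N_a}\mathbf H\mathbf H^H)$ in Appendix~J. The finite-$N$ deviation of $\log\det$ from its deterministic equivalent is $O(1)$, so it becomes negligible relative to the $O(N_b)$ rates.

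For the average rate, I will compare with Theorem~5. The quantities $f(N_a,N_b,\eta)$ and $f(N_a,N_e,\eta/\gamma)$ are exact ergodic capacities of $N_a$-input Rayleigh channels with $N_b$ and $N_e$ outputs. Their per-antenna limits coincide with the same $\Phi$ expressions, because Bob's and Eve's channels each have the form $\log\det(\mathbf I+\frac{x}{t}\mathbf H\mathbf H^H)$ with $\mathbf H$ having i.i.d.\ $\mathcal{CN}(0,1)$ entries. Expectation and almost-sure limit can be interchanged by dominated convergence, since $\frac1{N_b}\log\det$ is bounded by $\frac{1}{N_b}\mathrm{tr}$ terms with uniformly bounded moments. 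Hence $\bar R_s^a$ and $R_s^a$ share the same first-order behaviour, and the single expression serves for both, exactly as argued after Corollary~\ref{co:1}.

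\textbf{Main obstacle.} The hardest point is that the statement is an approximation without an error bound. Making it rigorous would require a uniform $O(1)$ (or $o(N_b)$) bound on the gap between $f(\cdot)$ and $N\Phi(\cdot)$ for finite dimensions. I would only sketch this, citing the standard $O(1/N^2)$ bias of $\frac1N\mathrm{E}\log\det$ for Wishart matrices. Beyond that I would rely on the asymptotic argument and numerical validation, as the paper does for Corollary~\ref{co:1}.
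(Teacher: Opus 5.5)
Your Steps~1--2 are the paper's proof: it substitutes $\delta_1=N_e/N_b$ and $\delta_2=N_a/N_b$ into Theorem~6 and rescales by $N_b$, and your algebra for each $\Phi$ argument and for pulling $N_b$ through $[\cdot]^+$ is correct. Step~3 is extra heuristic support that the paper does not give, and it has one slip: under the paper's scaling $\eta$ grows linearly in $N_a$, so $\frac{1}{N_b}\mathrm{tr}(\frac{\eta}{N_a}\mathbf{H}\mathbf{H}^H)$ has mean $\eta$ and no uniformly bounded moments, and the expectation argument should instead show that the gap to the $\Phi(\cdot)$ expression vanishes, not interchange limits with a fixed limit.
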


\begin{proof}
The proof follows directly from (\ref{theo:6}) by applying finite antenna values.
\end{proof}

\begin{remark}
Eq. (\ref{corollary:7}) transforms the complex function $f(\cdot)$ into the more tractable $\Phi(\cdot)$ function, offering significant computational advantages. The structure of this approximation reveals that:

\begin{itemize}
\item Bob's achievable rate scales with $N_b$, while Eve's scales with $N_e$.
\item The ratio ${N_a}/{N_b}$ determines Bob's spatial multiplexing gain.
\item The ratio ${N_a}/{N_e}$ determines Eve's spatial multiplexing gain.
\end{itemize}
\end{remark}

\begin{figure}[t]
\centering
\includegraphics[width=3.5in,height=2.8in]{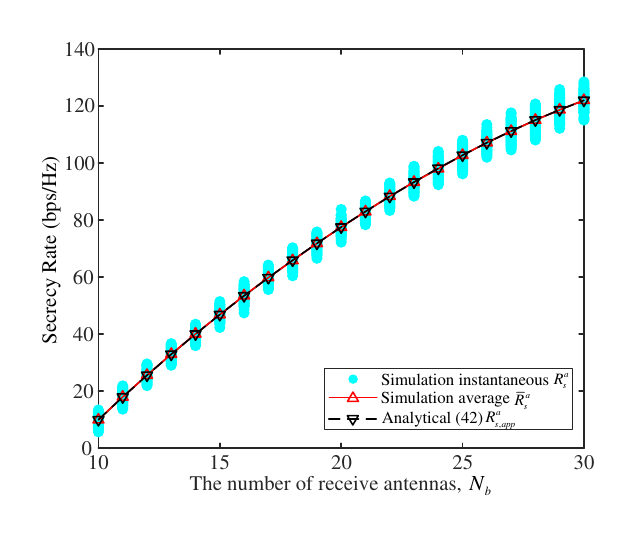}
\vspace{+3mm}
\caption{$\bar R_s^a$ and $R_s^a$ versus $N_b$ with $\alpha = 3$ dB, $\beta = 1$, $\gamma = 2$, ${N_a} = 32$, and ${N_e} = 10$.}
\label{fig_corollary9}
\vspace{-0em}
\end{figure}

\begin{example}
Fig. \ref{fig_corollary9} plots the analytical approximations $R_{s,app}^a$ from (\ref{corollary:7}) alongside the simulated secrecy rates $\bar R_s^a$ and $R_s^a$. The parameters are set to $\alpha = 3$ dB, $\beta = 1$, $\gamma = 2$, ${N_a} = 32$, and ${N_e} = 10$.

The close match between the approximation and simulation across the entire range of $N_b$ values confirms the accuracy of (\ref{corollary:7}). Notably, both the average and instantaneous secrecy rates increase with $N_b$, as a larger receive antenna array at Bob strengthens his advantage over Eve. The approximation successfully captures this trend, providing a computationally efficient tool for system designers.
\end{example}

\subsection{Comparative Analysis: With vs. Without AN}

To quantify the security benefits of AN, we identify precise conditions under which the AN scheme outperforms the no-AN approach:

\begin{corollary}
${N_b} \leqslant {N_e} < 2{N_a} - {N_b}$ combined with ${N_b}/{N_a} < \gamma \leqslant 1$ is a sufficient condition for $ 0 = \bar R_s^a < {\bar R_s}$.
\end{corollary}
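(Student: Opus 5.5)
The plan is to establish the two parts of the claim separately: first $\bar R_s^a = 0$, then $\bar R_s > 0$. Both parts rest on the closed forms in Theorems 1, 2 and 5 and on the monotonicity properties of $f(t,r,x)$, namely that $f$ is increasing in the SNR argument $x$ and in the antenna arguments (the properties (\ref{pp:2}) and (\ref{pp:3}) already used in the proof of Corollary 7).

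\emph{Step 1: no AN.} The hypotheses $\gamma \leqslant 1$ and $N_e \geqslant N_b$ are exactly those of Corollary 7, so $\bar R_s^a = 0$ follows immediately.

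\emph{Step 2: with AN.} Split according to whether $N_e > N_a - N_b$ or $N_e \leqslant N_a - N_b$.

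In the regime $N_e \leqslant N_a - N_b$, use Theorem 2. Since $N_e \geqslant N_b$, Eve has at least $N_b$ antennas. Even so, the residual AN term $\alpha\beta B$ in the SINR $\alpha A/(\alpha\beta B + 1)$ is almost surely positive, because $\mathbf{G}\mathbf{V}_0$ then has full row rank. Moreover, after the one-dimensional projection $\mathbf{w}^H$, Eve's signal is a single stream. I would bound
\[
\bar R_e \leqslant \mathrm{E}\left[\log_2(1+\alpha A)\right],
\]
which is a single-stream rate, and compare it with $\bar R_b = f(N_a,N_b,\alpha\gamma N_a)$. The latter is the rate of an $N_b$-stream MIMO channel with effective SNR per receive dimension $\alpha\gamma N_a/N_b$. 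The condition $\gamma > N_b/N_a$ makes this effective SNR exceed $\alpha$, which is the per-stream SNR that Eve enjoys. Monotonicity of $f$ in its SNR argument together with its additivity over streams should then give $\bar R_b > \bar R_e$.

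In the regime $N_e > N_a - N_b$, use Theorem 1. I need
\[
f(N_a,N_b,\alpha\gamma N_a) > f(N_b,N_e-N_a+N_b,\alpha N_b).
\]
The condition $N_e < 2N_a - N_b$ gives $N_e - N_a + N_b < N_a$, so Eve's equivalent channel has strictly fewer receive dimensions than Bob's has transmit dimensions. The condition $\gamma > N_b/N_a$ gives $\alpha\gamma N_a > \alpha N_b$, so Bob's SNR argument strictly exceeds Eve's. I would then chain inequalities using the properties of $f$:
\[
f(N_b,N_e-N_a+N_b,\alpha N_b) < f(N_b,N_e-N_a+N_b,\alpha\gamma N_a) \leqslant f(N_a,N_b,\alpha\gamma N_a).
\]
The first inequality is strict monotonicity in $x$. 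The second is where the antenna counts must be compared.

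\emph{Main obstacle.} The delicate point is the antenna comparison in the second inequality. Bob's pair is $(N_a,N_b)$, so $m = N_b$ and $n = N_a$. Eve's pair is $(N_b, N_e - N_a + N_b)$, with $m' = \min\{N_b, N_e - N_a + N_b\} \leqslant N_b$ and $n' = \max\{\cdot\} \leqslant N_a$. Here $n' \leqslant N_a$ uses both $N_b \leqslant N_a$ and $N_e - N_a + N_b < N_a$. Since $f$ depends on $(t,r)$ only through $(m,n)$ and is nondecreasing in each of $m$ and $n$, the inequality follows.

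I would justify this monotonicity either from the cited property (\ref{pp:3}) or directly. For the direct route, view $f(t,r,x)$ as $\mathrm{E}\log\det(\mathbf{I} + (x/t)\mathbf{H}\mathbf{H}^H)$ and note that adding antennas corresponds to appending rows or columns of a Gaussian matrix. This makes the log-determinant nondecreasing after accounting for how the power normalization scales with $t$. Normalizing the SNR per transmit antenna is exactly why Bob's argument is $\alpha\gamma N_a$ and Eve's is $\alpha N_b$.

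With both regimes showing $\bar R_b > \bar R_e$, we get $\bar R_s > 0 = \bar R_s^a$, which completes the plan.
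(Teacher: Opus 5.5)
\textbf{Gap in the regime $N_e>N_a-N_b$.} Your Step 1 and your treatment of $N_e>N_a-N_b$ follow the paper's Appendix K. However, the step you flag as the main obstacle does not hold.

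Write $k=N_e-N_a+N_b$. For the direct route you use $f(t,r,x)=\mathrm{E}\log\det(\mathbf{I}_r+(x/t)\mathbf{H}\mathbf{H}^H)$ with $\mathbf{H}\in\mathbb{C}^{r\times t}$. This is the reading under which (\ref{isy:1}) and (\ref{isz:2}) hold. Under it, $f$ is not a function of $(m,n,x)$ alone:
\begin{itemize}
\item $f(N_b,k,\alpha\gamma N_a)$ is a $k\times N_b$ channel at per-entry SNR $\alpha\gamma N_a/N_b$;
\item $f(N_a,N_b,\alpha\gamma N_a)$ is an $N_b\times N_a$ channel at per-entry SNR $\alpha\gamma$;
\item exchanging $t$ and $r$ therefore changes the SNR.
\end{itemize}
Monotonicity in $t$ and then in $r$ gives only $f(N_b,k,x)\leqslant f(N_a,k,x)\leqslant f(N_a,N_b,x)$. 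The second inequality requires $k\leqslant N_b$, i.e.\ $N_e\leqslant N_a$.

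For $N_a<N_e<2N_a-N_b$ the step cannot be repaired, because the target inequality itself can fail. Take $N_a=16$, $N_b=8$, $N_e=23$, $\gamma=0.9$. Then $\bar R_b=\mathrm{E}\log\det(\mathbf{I}_8+0.9\alpha\mathbf{H}\mathbf{H}^H)$ and $\bar R_e=\mathrm{E}\log\det(\mathbf{I}_8+\alpha\mathbf{G}_1^H\mathbf{G}_1)$ involve $8\times 8$ complex Wishart matrices with $16$ and $15$ degrees of freedom. As $\alpha\to\infty$, $\bar R_b-\bar R_e\to(8\ln 0.9+\sum_{j=8}^{15}1/j)\log_2 e\approx-0.17$ bits, and it is already negative at $\alpha=20$ dB. As $\alpha\to 0$, $\bar R_b-\bar R_e\approx 8\alpha(0.9\cdot 16-15)\log_2 e<0$. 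Hence $\bar R_s=0=\bar R_s^a$ in this configuration.

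The paper's Appendix K makes exactly the same move, implicitly via the symmetry (\ref{pp:1}) combined with (\ref{pp:3}) as in Appendix E. So you have inherited its gap rather than created a new one. What the chain does prove in this regime is:
\begin{itemize}
\item the sub-case $N_a-N_b<N_e\leqslant N_a$;
\item the whole range $N_e<2N_a-N_b$ when $\gamma=1$, since then Bob's Gram matrix is a sum of $N_a$ i.i.d.\ rank-one terms and Eve's of only $k<N_a$, at the same SNR.
\end{itemize}

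\textbf{The regime $N_e\leqslant N_a-N_b$.} Here you do depart from the paper, which merely asserts that $\bar R_s$ decreases with $N_e$ and extends the Theorem 1 conclusion downward. Your single-stream bound is the cleaner route, and its conclusion is correct; it needs neither $\gamma\leqslant 1$ nor $N_e\geqslant N_b$. However, $f$ is not additive over streams, so the comparison should be made explicit:
\begin{itemize}
\item $\log\det(\mathbf{I}+\alpha\gamma\mathbf{H}\mathbf{H}^H)\geqslant\log(1+\alpha\gamma\|\mathbf{h}\|^2)$ for any row $\mathbf{h}$ of $\mathbf{H}$.
\item $\|\mathbf{h}\|^2$ is a sum of $N_a$ i.i.d.\ unit exponentials and $A$ is a sum of $N_b$, so $\|\mathbf{h}\|^2/N_a$ is dominated by $A/N_b$ in the convex order.
\item Since $\gamma N_a>N_b$, concavity of $u\mapsto\log(1+\alpha N_b u)$ gives $\mathrm{E}\log(1+\alpha\gamma\|\mathbf{h}\|^2)>\mathrm{E}\log(1+\alpha A)\geqslant\bar R_e$.
\end{itemize}
Note that $\mathrm{E}\log(1+\alpha A)=f(N_b,1,\alpha N_b)$ is Eve's rate at $N_e=N_a-N_b+1$, where $k=1\leqslant N_b$. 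Your bound is therefore exactly what makes the paper's monotonicity remark valid, and at that point the two proofs coincide.
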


\begin{proof}
See Appendix K.
\end{proof}

\begin{remark}
This corollary identifies a critical region in the parameter space where AN provides a decisive security advantage. When ${N_b} \leqslant {N_e} < 2{N_a} - {N_b}$ and ${N_b}/{N_a} < \gamma \leqslant 1$, the no-AN scheme results in zero secrecy rate (perfect eavesdropping), while employing AN ensures a positive secrecy rate despite Eve's ANE countermeasures.

This region constitutes the ``AN advantage zone" where even imperfect AN implementation outperforms the complete absence of AN. The size of this zone grows with the ratio $N_a/N_b$, highlighting that the security benefit of AN increases with the transmit-to-receive antenna ratio at the legitimate parties.
\end{remark}

\begin{figure}[t]
\centering
\includegraphics[width=3.5in,height=2.8in]{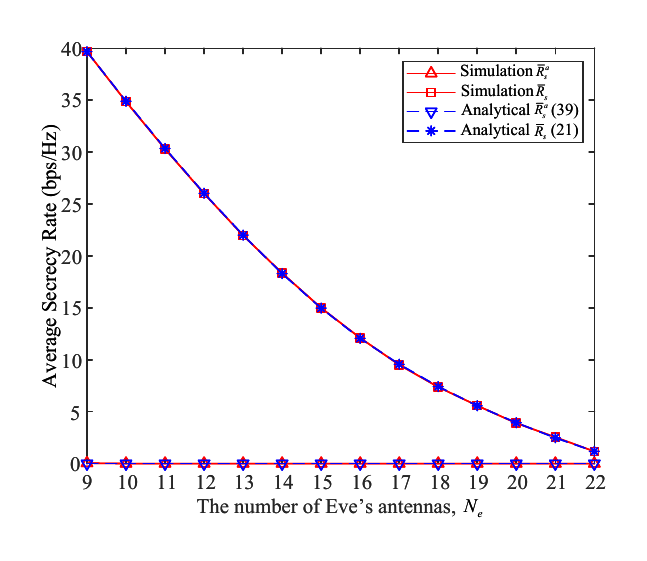}
\vspace{+3mm}
\caption{$\bar R_s^a$ and $\bar R_s$ versus $N_e$ with $\alpha = 6$ dB, $\beta = 1$, $\gamma = 1$, ${N_a} = 16$, and ${N_b} = 9$.}
\label{fig_corollary10}
\vspace{-0em}
\end{figure}

\begin{example}
Fig. \ref{fig_corollary10} provides a direct comparison between $\bar R_s^a$ (without AN) and $\bar R_s$ (with AN) as functions of $N_e$. The parameters are set to $\alpha = 6$ dB, $\beta = 1$, $\gamma = 1$, ${N_a} = 16$, and ${N_b} = 9$, with $N_e$ varying from 9 to 22, ensuring ${N_b} \leqslant {N_e} < 2{N_a} - {N_b}$.

As shown in Fig. \ref{fig_corollary10}, $\bar R_s^a$ remains zero across all cases, representing complete security failure without AN. In contrast, $\bar R_s$ maintains a positive value that gradually decreases as $N_e$ increases. This dramatic difference confirms the mathematical prediction of Corollary 10 and demonstrates the profound security advantage of AN in this parameter region.
\end{example}

\begin{corollary}
${N_e} < {N_b}$ combined with $\beta \to 0$ and $\gamma > 1$ is a sufficient condition for $0 < \bar R_s^a < {\bar R_s}$.
\end{corollary}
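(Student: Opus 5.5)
The plan is to compare Theorem 5 (no AN) with Theorems 1 and 2 (AN with ANE) in the limit $\beta \to 0$. In that limit the equivalent SNR becomes $\eta = \alpha\gamma N_a\left(1+\beta(N_a-N_b)/N_b\right) \to \alpha\gamma N_a$. Bob's term in (\ref{theo:5}) then coincides with Bob's term $f(N_a,N_b,\alpha\gamma N_a)$ in (\ref{theo:1}) and (\ref{theo:2}). This is expected: for $\beta\to 0$ both schemes devote essentially all power to the information-bearing signal, and Bob's reception is unaffected by the null-space AN. So $\bar R_b^a \to \bar R_b$, and the comparison reduces to comparing Eve's average rates, $\bar R_e$ with AN and ANE versus $\bar R_e^a = f(N_a,N_e,\alpha N_a)$ without AN.

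\emph{Step 1: show $0<\bar R_s^a$.} With $N_e<N_b$ and $\gamma>1$, I will invoke the monotonicity properties of $f$ used in the proof of Corollary 7 (properties (\ref{pp:2}) and (\ref{pp:3})): $f$ is increasing in the SNR argument and increasing in the receive-dimension argument. Then
\begin{equation}
f(N_a,N_e,\eta/\gamma) < f(N_a,N_e,\eta) < f(N_a,N_b,\eta),
\end{equation}
where the first inequality is strict because $\gamma>1$ and the second because $N_e<N_b$. Hence $\bar R_s^a>0$ for every $\beta$, including the limit $\beta\to0$.

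\emph{Step 2: show $\bar R_e < \bar R_e^a$.} I will argue operationally rather than through the closed forms. Without AN, Eve observes $\mathbf{Gx}+\mathbf n$ with full signal power across all $N_a$ transmit dimensions. With AN, Eve's post-ANE observation $\mathbf{W}\tilde{\mathbf z}$ is a processed version of $\mathbf{G}\mathbf V_1\mathbf s+\mathbf G\mathbf V_0\mathbf r+\mathbf n$. Its signal part uses only the $N_b$-dimensional subspace $\mathbf V_1$, with per-stream power no larger than the equivalent no-AN power. By the data-processing inequality, together with the fact that nonnegative interference does not help, we get $\bar R_e \le \bar R_e^a$.

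For the strict inequality I will use the closed forms. If $N_e\le N_a-N_b$, (\ref{theo:2}) gives a single-stream rate $\mathrm E\log_2(1+\alpha A)$ as $\beta\to0$ with $A\sim\chi^2(2N_b)/2$. By Jensen's inequality this is at most $\log_2(1+\alpha N_b)$. Meanwhile $f(N_a,N_e,\alpha N_a)$ is an $N_e$-stream ergodic capacity with total receive SNR $\alpha N_a N_e$, which is strictly larger. If $N_e>N_a-N_b$, (\ref{theo:1}) gives $f(N_b,N_e-N_a+N_b,\alpha N_b)$. Compared with $f(N_a,N_e,\alpha N_a)$, both the number of transmit dimensions and the receive dimensions are smaller ($N_e-N_a+N_b<N_e$), so strict monotonicity of $f$ in its dimension arguments gives strict inequality. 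Step 1 shows $\bar R_b-\bar R_e^a>0$, so combining the two steps yields $\bar R_s = \bar R_b-\bar R_e > \bar R_b-\bar R_e^a = \bar R_s^a>0$.

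The main obstacle is the strictness and monotonicity of $f(t,r,x)$ in its dimension arguments. The sum formula (\ref{func:1}) is opaque, so rather than manipulating it I would prove this through its meaning as $\mathrm E\log\det(\mathbf I+\frac{x}{t}\mathbf{HH}^H)$ with i.i.d. Rayleigh entries. Adding a receive antenna appends a row and strictly increases the log-determinant almost surely. For the transmit side, I would rescale the SNR argument so the per-antenna power is fixed. A secondary subtlety is the limit $\beta\to0$: I would justify interchanging the limit with the expectations by dominated convergence, since all integrands are bounded by $\log_2(1+\alpha A)$ or by the no-AN rate.
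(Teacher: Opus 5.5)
Your skeleton matches the paper's proof. As $\beta\to0$ the two Bob rates coincide, $\bar R_s^a>0$ follows from monotonicity of $f$ under $N_e<N_b$ and $\gamma>1$, and everything reduces to $\bar R_e<\bar R_e^a$ in the regimes $N_e>N_a-N_b$ and $N_e\le N_a-N_b$. Step 1 and the branch $N_e>N_a-N_b$ are fine: at fixed per-antenna SNR, $\log\det(\mathbf I+\alpha\mathbf G\mathbf G^H)$ strictly increases when rows or columns are appended.

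The gap is the strict inequality in the branch $N_e\le N_a-N_b$. Jensen correctly gives $\mathrm E\log_2(1+\alpha A)\le\log_2(1+\alpha N_b)$. You then need the \emph{lower} bound $f(N_a,N_e,\alpha N_a)>\log_2(1+\alpha N_b)$, and ``total receive SNR $\alpha N_aN_e$'' does not supply one: by the same Jensen argument, an ergodic rate sits \emph{below} the logarithm of its mean SNR. In the extreme admissible case $N_e=1$, $N_a=N_b+1$, your claim is exactly $\mathrm E\log_2(1+\alpha X)>\log_2(1+\alpha N_b)$ with $X\sim\chi^2(2N_b+2)/2$. That is true, but it is a quantitative fact about the Gamma law that needs its own proof. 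Once $\bar R_b^a-\bar R_b\to0$, this limiting Eve gap is the only source of strictness, so it cannot rest on a heuristic.

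The paper avoids Jensen and uses the chain $\bar R_e<\mathrm E_{\mathbf g}\log(1+\alpha\mathbf g^H\mathbf g)<\bar R_e^a$. As $\beta\to0$ the first inequality degenerates to equality, so the second must carry the strictness, and you can make it rigorous pathwise. Let $\mathbf w$ be the unit ANE vector, $A=\|\mathbf w^H\mathbf G\mathbf V_1\|^2$, and $B=\|\mathbf w^H\mathbf G\mathbf V_0\|^2=\lambda_{\min}(\mathbf G\mathbf V_0\mathbf V_0^H\mathbf G^H)$. Since $\mathbf V_1\mathbf V_1^H+\mathbf V_0\mathbf V_0^H=\mathbf I$, almost surely $\log_2\det(\mathbf I_{N_e}+\alpha\mathbf G\mathbf G^H)\ge\log_2(1+\alpha\mathbf w^H\mathbf G\mathbf G^H\mathbf w)=\log_2(1+\alpha(A+B))>\log_2(1+\alpha A)$, because $B>0$ a.s. when $N_e\le N_a-N_b$. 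Taking expectations gives a strictly positive limiting gap. Continuity in $\beta$ (your dominated-convergence remark) then gives the strict inequality for all sufficiently small $\beta>0$.

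This strictness relies on the $\mathbf G\mathbf G^H$ form of Eve's term as stated in Theorem 5. With the $\mathbf G\mathbf V_1\mathbf V_1^H\mathbf G^H$ form written in Appendix I, the pathwise bound is an equality when $N_e=1$. You should therefore say explicitly which expression for $\bar R_e^a$ you are using.
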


\begin{proof}
See Appendix L.
\end{proof}

\begin{remark}
This corollary complements Corollary 10 by addressing the case where Eve has fewer antennas than Bob. It establishes that a marginal amount of AN power ($\beta \to 0$) enhances security when Bob's channel quality exceeds that of Eve ($\gamma > 1$). This is particularly noteworthy as it highlights that AN contributes to PLS across a larger parameter space, not just in the critical region identified by Corollary 10.
\end{remark}

\begin{figure}[t]
\centering
\includegraphics[width=3.5in,height=2.8in]{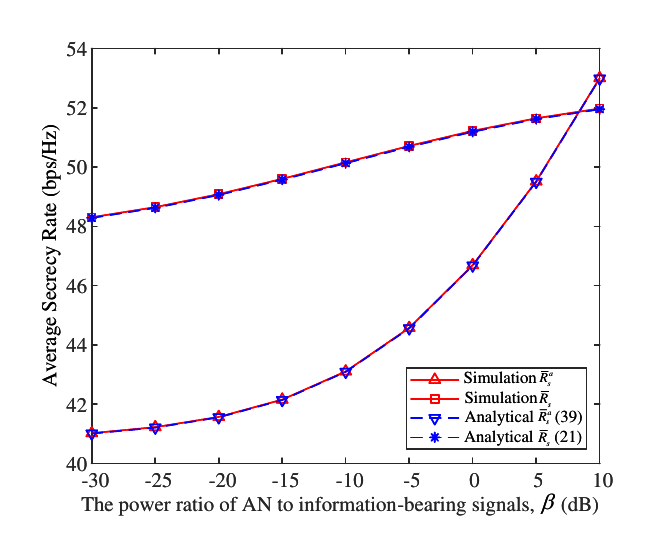}
\vspace{+3mm}
\caption{$\bar R_s^a$ and $\bar R_s$ versus $\beta$ (dB) with $\alpha = 6$ dB, $\gamma = 2$, ${N_a} = 16$, ${N_b} = 8$, and ${N_e} = 2$.}
\label{fig_corollary11}
\vspace{-0em}
\end{figure}

\begin{example}
Fig. \ref{fig_corollary11} presents $\bar R_s^a$ and $\bar R_s$ as functions of $\beta$ (dB) with parameters $\alpha = 6$ dB, $\gamma = 2$, ${N_a} = 16$, ${N_b} = 8$, and ${N_e} = 2$.

Both rates increase with $\beta$, but for different reasons: $\bar R_s^a$ increases due to higher total transmit power, while $\bar R_s$ benefits additionally from the interference caused by residual AN at Eve. In the region of low $\beta$, the relationship $0 < \bar R_s^a < {{\bar R}_s}$ holds, thereby confirming the conclusions of Corollary 11.
\end{example}

\section{Conclusion}

In this paper, we have studied the performance of PLS using AN in situations where Eve employs advanced ANE to mitigate the impact of the AN. In particular, we have redefined the scaling laws for both average and instantaneous secrecy rates in the context of AN versus ANE. The resulting corollaries provide insights into interdependencies among the numbers of antennas at Alice, Bob, and Eve. Moreover, by comparing the cases where ANE counters AN with those where AN is not used, we have identified conditions under which AN remains effective. It is hoped that the analysis provided here will provide design guidance for secure communication when facing ANE countermeasures.

\section*{Appendix A\\
Proof of Theorem 1}

By definition, we have
\begin{equation}
{\bar R_s} = {{\rm E}_{\mathbf{H}}}\left[ {I\left( {{\mathbf{s}};{\mathbf{y}}\left| {\mathbf{H}} \right.} \right)} \right] - {{\rm E}_{{\mathbf{H}},{\mathbf{G}}}}\left[ {I\left( {{\mathbf{s}};{\mathbf{z}}\left| {{\mathbf{H}},{\mathbf{G}}} \right.} \right)} \right],
\end{equation}
where ${\mathbf{H}}$ and ${\mathbf{G}}$ are treated as Gaussian random matrices.

On writing Eq. (\ref{svd:1}) as ${\mathbf{H}} = {\mathbf{U\Lambda V}}_1^H$, we obtain
\begin{equation}
\begin{gathered}
\begin{aligned}
  {\mathbf{H}}{{\mathbf{V}}_1}{\mathbf{V}}_1^H{{\mathbf{H}}^H} &= \left( {{\mathbf{U\Lambda V}}_1^H} \right){{\mathbf{V}}_1}{\mathbf{V}}_1^H{\left( {{\mathbf{U\Lambda V}}_1^H} \right)^H} \hfill \\
   &= {\mathbf{U\Lambda V}}_1^H{{\mathbf{V}}_1}{{\mathbf{\Lambda }}^H}{{\mathbf{U}}^H} = {\mathbf{H}}{{\mathbf{H}}^H}. \hfill \\
\end{aligned}
\end{gathered}
\end{equation}

From (56) in \cite{ANE2} and (8) in \cite{SR1}, we arrive at
\begin{equation}\label{isy:1}
  \begin{gathered}
  \begin{aligned}
  {{\rm E}_{\mathbf{H}}}\left[ {I\left( {{\mathbf{s}};{\mathbf{y}}\left| {\mathbf{H}} \right.} \right)} \right] &= {{\rm E}_{\mathbf{H}}}\left( {\log \left| {{{\mathbf{I}}_{{N_b}}} + \alpha \gamma {\mathbf{H}}{{\mathbf{V}}_1}{\mathbf{V}}_1^H{{\mathbf{H}}^H}} \right|} \right) \hfill \\
   &= {{\rm E}_{\mathbf{H}}}\left( {\log \left| {{{\mathbf{I}}_{{N_b}}} + \alpha \gamma {\mathbf{H}}{{\mathbf{H}}^H}} \right|} \right) \hfill \\
   &= f\left( {{N_a},{N_b},\alpha \gamma {N_a} } \right), \hfill \\
   \end{aligned}
\end{gathered}
\end{equation}
where $f\left( {t,r,x} \right)$ is defined in (\ref{func:1}).

As inferred, the objective function of (\ref{W:1}) is non-negative. In the case of ${N_e} > {N_a} - {N_b}$, a minimum zero objective function ${\text{tr}}\left( {{\mathbf{WG}}{{\mathbf{V}}_0}{\mathbf{V}}_0^H{{\mathbf{G}}^H}{{\mathbf{W}}^H}} \right)=0$ can be achieved by applying the optimal projection matrix
\begin{equation}\label{optW:1}
{\mathbf{W}} = {\mathbf{V}}_2^H \in {\mathbb{C}^{\left( {{N_e} - {N_a} + {N_b}} \right) \times {N_e}}},
\end{equation}
where ${{{\mathbf{V}}_2}}$ is the null space of ${\mathbf{V}}_0^H{{\mathbf{G}}^H}$ that can be obtained by SVD, i.e.,
\begin{equation}
{\mathbf{V}}_0^H{{\mathbf{G}}^H} = {{\mathbf{U}}_2}\left[ {\begin{array}{*{20}{c}}
  {{{\mathbf{\Lambda }}_2}}&{\mathbf{0}}
\end{array}} \right]{\left[ {\begin{array}{*{20}{c}}
  {{{\mathbf{V}}_3}}&{{{\mathbf{V}}_2}}
\end{array}} \right]^H}.
\end{equation}

The optimal projection matrix in (\ref{optW:1}) ensures a perfect elimination of AN, and thus we have
\begin{equation}\label{isz:1}
{{\rm E}_{{\mathbf{H}},{\mathbf{G}}}}\left[ {I\left( {{\mathbf{s}};{\mathbf{z}}\left| {{\mathbf{H}},{\mathbf{G}}} \right.} \right)} \right] = {{\rm E}_{{{\mathbf{G}}_1}}}\left( {\log \left| {{{\mathbf{I}}_{{N_e} - {N_a} + {N_b}}} + \alpha {{\mathbf{G}}_1}{\mathbf{G}}_1^H} \right|} \right),
\end{equation}
where ${{\mathbf{G}}_1} = {\mathbf{V}}_2^H{\mathbf{G}}{{\mathbf{V}}_1} \in {\mathbb{C}^{\left( {{N_e} - {N_a} + {N_b}} \right) \times {N_b}}}$ is complex Gaussian random matrices with i.i.d. entries $ \sim \mathcal{C}\mathcal{N}(0,1)$.

According to (56) in \cite{ANE2}, the expression of (\ref{isz:1}) can be simplified to
\begin{equation}\label{isz:2}
{{\rm E}_{{\mathbf{H}},{\mathbf{G}}}}\left[ {I\left( {{\mathbf{s}};{\mathbf{z}}\left| {{\mathbf{H}},{\mathbf{G}}} \right.} \right)} \right] = f\left( {{N_b},{N_e} - {N_a} + {N_b},\alpha {N_b}} \right).
\end{equation}

Substituting (\ref{isy:1}) and (\ref{isz:2}) into (\ref{vr:1}), the proof of Theorem 1 is completed.

\section*{Appendix B\\
Proof of Theorem 2}

In the case of ${N_e} \leqslant {N_a} - {N_b}$, the objective function of (\ref{W:1}) cannot reach zero via (\ref{optW:1}). Instead, its minimum value corresponds to the minimum eigenvalue of ${\mathbf{V}}_0^H{{\mathbf{G}}^H}$ when the optimal projection vector ${{\mathbf{w}}^H} \in {\mathbb{C}^{1 \times {N_e}}}$ satisfies
\begin{equation}\label{optw:2}
{\mathbf{V}}_0^H{{\mathbf{G}}^H}{\mathbf{w}} = {\lambda _{\min }}{\mathbf{w}},
\end{equation}
where ${\lambda _{\min }}$ denotes the minimum eigenvalue of ${\mathbf{V}}_0^H{{\mathbf{G}}^H}$ and ${\mathbf{w}}$ represents the eigenvector corresponding to the minimum eigenvalue.

Based on the optimal projection vector in (\ref{optw:2}), the average achievable rate of Eve is expressed as
\begin{equation}\label{aare:1}
{{\rm E}_{{\mathbf{H}},{\mathbf{G}}}}\left[ {I\left( {{\mathbf{s}};{\mathbf{z}}\left| {{\mathbf{H}},{\mathbf{G}}} \right.} \right)} \right] = {{\text{E}}_{\mathbf{g}}}\left( {\log \left( {1 + \frac{{\sigma _s^2{{\mathbf{g}}^H}{\mathbf{g}}}}{{\sigma _r^2\lambda _{\min }^2 + \sigma _n^2}}} \right)} \right),
\end{equation}
where ${{\mathbf{g}}^H} = {{\mathbf{w}}^H}{\mathbf{G}}{{\mathbf{V}}_1} \in {\mathbb{C}^{{N_b}}}$ denotes the equivalent wiretap channel through which the information-bearing signal passes.
{Since the wiretap channel ${\mathbf{G}}$ is independent of the legitimate channel ${\mathbf{H}}$ and its subspace basis ${{\mathbf{V}}_1}$, the product ${\mathbf{G}}{{\mathbf{V}}_1}$ remains a Gaussian random matrix. Moreover, ${\mathbf{G}}{{\mathbf{V}}_0}$ and ${\mathbf{G}}{{\mathbf{V}}_1}$ are statistically independent due to the orthogonality of ${{\mathbf{V}}_0}$ and ${{\mathbf{V}}_1}$, as shown in (57) of \cite{ANE2}. Since ${\mathbf{w}}$ is a unit eigenvector of ${\mathbf{G}}{{\mathbf{V}}_0}$ and thus independent of ${\mathbf{G}}{{\mathbf{V}}_1}$, the equivalent wiretap channel ${{\mathbf{g}}^H}$ forms a Gaussian random vector.}

We denote $A$ as the square of the norm of ${\mathbf{g}}$, then we have
\begin{equation}\label{ll:1}
A \triangleq {{\mathbf{g}}^H}{\mathbf{g}} = \sum\limits_{i = 1}^{{N_b}} {{{\left| {{g_i}} \right|}^2}}  = \sum\limits_{i = 1}^{{N_b}} {\operatorname{Re} {{\left( {{g_i}} \right)}^2}}  + \sum\limits_{i = 1}^{{N_b}} {\operatorname{Im} {{\left( {{g_i}} \right)}^2}},
\end{equation}
where ${g_i}$ represents the $i$-th element of ${\mathbf{g}}$, following i.i.d. $ \mathcal{C}\mathcal{N}(0,1)$. Note that both its real and imaginary parts are independently distributed as $ \mathcal{C}\mathcal{N}(0,1/2)$, the variable $A$ follows a scaled chi-squared distribution with $2{N_b}$ DoFs, i.e., $A \sim {\chi ^2}(2{N_b})/2$. Therefore, the PDF of $A$ can be expressed as
\begin{equation}\label{chi:1}
{f_A}\left( a \right) = \frac{{{a^{{N_b} - 1}}{e^{ - a}}}}{{\Gamma \left( {{N_b},0} \right)}}.
\end{equation}

Additionally, we denote $B$ as the square of the minimum eigenvalue, then we have
\begin{equation}
B \triangleq \lambda _{\min }^2 = {\lambda _{\min }}\left( {{\mathbf{G}}{{\mathbf{V}}_0}{\mathbf{V}}_0^H{{\mathbf{G}}^H}} \right),
\end{equation}
where ${\lambda _{\min }}\left( {{\mathbf{G}}{{\mathbf{V}}_0}{\mathbf{V}}_0^H{{\mathbf{G}}^H}} \right)$ denotes the minimum eigenvalue of the matrix ${{\mathbf{G}}{{\mathbf{V}}_0}{\mathbf{V}}_0^H{{\mathbf{G}}^H}}$.

{Since the matrix ${{\mathbf{G}}{{\mathbf{V}}_0}{\mathbf{V}}_0^H{{\mathbf{G}}^H}}$ follows a Wishart distribution, the PDF of its minimum eigenvalue is given by \cite{Wishart1}, as follows:
\begin{equation}\label{Wishart:1}
{f_B}\left( b \right) = K\sum\limits_{n = 1}^{{N_e}} {\sum\limits_{m = 1}^{{N_e}} {{{\left( { - 1} \right)}^{n + m}}{b^{n + m - 2 + {N_a} - {N_b} - {N_e}}}{e^{ - b}}} } \left| {\mathbf{\Omega }} \right|,
\end{equation}
where the constant $K$ is given by
\begin{equation}
K = {\left[ {\prod _{i = 1}^{{N_e}}\left( {{N_a} - {N_b} - i} \right)!\prod _{j = 1}^{{N_e}}\left( {{N_e} - j} \right)!} \right]^{ - 1}},
\end{equation}
and the $\left( {i,j} \right)$-th element of ${\mathbf{\Omega }}$ is given by
\begin{equation}\label{ll:3}
{\Omega _{i,j}} = \Gamma \left( {\alpha _{i,j}^{\left( n \right)\left( m \right)} + {N_a} - {N_b} - {N_e} + 1,b} \right),
\end{equation}
with
\begin{equation}
\alpha _{i,j}^{\left( n \right)\left( m \right)} = \left\{ \begin{gathered}
  i + j - 2,{\text{   if }}i < n{\text{ and }}j < m, \hfill \\
  i + j,{\text{ \ \ \ \ \       if }}i \geqslant n{\text{ and }}j \geqslant m, \hfill \\
  i + j - 1,{\text{    otherwise}}{\text{.}} \ \ \ \ \ \ \ \ \ \ \ \\
\end{gathered}  \right.
\end{equation}}

Combining (\ref{chi:1}) and (\ref{Wishart:1}), the achievable rate of Eve can be written as
\begin{equation}\label{isz:3}
\begin{gathered}
  {{\rm E}_{{\mathbf{H}},{\mathbf{G}}}}\left[ {I\left( {{\mathbf{s}};{\mathbf{z}}\left| {{\mathbf{H}},{\mathbf{G}}} \right.} \right)} \right] \hfill \\
   = \int_0^\infty  {\int_0^\infty  {\log \left( {1 + \frac{{\alpha A}}{{\alpha \beta B + 1}}} \right){f_A}\left( a \right){f_B}\left( b \right){{\text{d}}a}{{\text{d}}b}} } . \hfill \\
\end{gathered}
\end{equation}

{Note that although both $A$ and $B$ are functions of the same channel matrix $\mathbf{G}$, their statistical relationship can be decoupled. Specifically, the quantity $B$ and the combining vector $\mathbf{w}$ are fully determined by the projected channel $\mathbf{G}\mathbf{V}_0$, while $A$ depends on $\mathbf{G}\mathbf{V}_1$ through the equivalent channel $\mathbf{g}=\mathbf{w}^H\mathbf{G}\mathbf{V}_1$. Due to the orthogonality between $\mathbf{V}_0$ and $\mathbf{V}_1$, the projected matrices $\mathbf{G}\mathbf{V}_0$ and $\mathbf{G}\mathbf{V}_1$ are statistically independent under the i.i.d. Rayleigh fading assumption (see (57) in \cite{ANE2}).

Although the vector $\mathbf{w}$ is correlated with $B$, this correlation only affects the direction of the projection applied to $\mathbf{G}\mathbf{V}_1$. Since $\mathbf{G}\mathbf{V}_1$ is isotropically distributed, its projection onto any unit-norm vector yields an identical statistical distribution. As a result, conditioned on $\mathbf{w}$, the distribution of $A=\|\mathbf{w}^H\mathbf{G}\mathbf{V}_1\|^2$ remains invariant and is independent of $B$.

This property allows the expectation in (\ref{isz:3}) to be evaluated via iterated averaging, i.e., by first averaging over $A$ conditioned on $\mathbf{w}$, and subsequently averaging over $B$. For notational simplicity, this procedure is expressed using the marginal PDFs of $A$ and $B$.}

Substituting (\ref{isy:1}) and (\ref{isz:3}) into (\ref{vr:1}), the proof of Theorem 2 is completed.

\section*{Appendix C\\
Proof of Theorem 3}

From (68) in \cite{ANE2}, we have
\begin{equation}\label{infi:1}
\frac{{I\left( {{\mathbf{s}};{\mathbf{y}}} \right)}}{{{N_b}}} \to \Phi \left( {{\alpha \gamma {N_b}},{\delta _2}} \right),
\end{equation}
where $\Phi \left( {x,y} \right)$ is given in (\ref{fidef:1}) and $I\left( {{\mathbf{s}};{\mathbf{y}}} \right)$ can be found in part of (\ref{isy:1}) as
\begin{equation}
I\left( {{\mathbf{s}};{\mathbf{y}}} \right) = \log \left| {{{\mathbf{I}}_{{N_b}}} + \alpha \gamma {\mathbf{H}}{{\mathbf{H}}^H}} \right|.
\end{equation}

As ${N_a},{N_b},{N_e} \to \infty $ with ${N_e}/{N_b} \to {\delta _1}$, ${N_a}/{N_b} \to {\delta _2}$, and ${\delta _1} > {\delta _2} - 1$, we arrive at
\begin{equation}
\begin{gathered}
  \frac{{I\left( {{\mathbf{s}};{\mathbf{z}}} \right)}}{{{N_e} - {N_a} + {N_b}}} \hfill \\
   \to \Phi \left( {\alpha \left( {{N_e} - {N_a} + {N_b}} \right),\frac{{{N_b}}}{{{N_e} - {N_a} + {N_b}}}} \right) \hfill \\
   = \Phi \left( {\frac{{\alpha {N_b}\left( {{N_e} - {N_a} + {N_b}} \right)}}{{{N_b}}},\frac{{{N_b}}}{{{N_e} - {N_a} + {N_b}}}} \right) \hfill \\
   = \Phi \left( {\alpha {N_b}\left( {{\delta _1} - {\delta _2} + 1} \right),\frac{1}{{{\delta _1} - {\delta _2} + 1}}} \right), \hfill \\
\end{gathered}
\end{equation}
where $I\left( {{\mathbf{s}};{\mathbf{y}}} \right)$ can be found in part of (\ref{isz:1}) as
\begin{equation}\label{ISZ:1}
I\left( {{\mathbf{s}};{\mathbf{z}}} \right) = \log \left| {{{\mathbf{I}}_{{N_e} - {N_a} + {N_b}}} + \alpha {{\mathbf{G}}_1}{\mathbf{G}}_1^H} \right|.
\end{equation}

On defining ${\delta _3} = {\delta _1} - {\delta _2} + 1$, we get
\begin{equation}\label{infi:2}
\frac{{I\left( {{\mathbf{s}};{\mathbf{z}}} \right)}}{{{N_b}}} \to {\delta _3}\Phi \left( {\alpha {N_b}{\delta _3},\frac{1}{{{\delta _3}}}} \right).
\end{equation}

Since Eq. (\ref{Fdef:1}) reveals the property of $\mathcal{F}\left( {x,y} \right) = \mathcal{F}\left( {xy,{y^{ - 1}}} \right)$, it is straightforward to establish the following property:
\begin{equation}\label{phipro:2}
\Phi \left( {x,y} \right) = y\Phi \left( {xy,{y^{ - 1}}} \right).
\end{equation}

Using (\ref{phipro:2}), (\ref{infi:2}) can be simplified to
\begin{equation}\label{infi:3}
\frac{{I\left( {{\mathbf{s}};{\mathbf{z}}} \right)}}{{{N_b}}} \to \Phi \left( {\alpha {N_b},{\delta _3}} \right).
\end{equation}

Combining (\ref{infi:1}) and (\ref{infi:3}), the proof of Theorem 3 is completed.

\section*{Appendix D\\
Proof of Theorem 4}

Based on the optimal projection vector in (\ref{optw:2}), the instantaneous achievable rate of Eve can be expressed as
\begin{equation}\label{appd:1}
I\left( {{\mathbf{s}};{\mathbf{z}}} \right) = \log \left( {1 + \frac{{\sigma _s^2{{\mathbf{g}}^H}{\mathbf{g}}}}{{\sigma _r^2{\lambda _{{{\min }^2}}} + \sigma _n^2}}} \right).
\end{equation}

As ${N_a},{N_b},{N_e} \to \infty $ with ${N_e}/{N_b} \to {\delta _1}$, ${N_a}/{N_b} \to {\delta _2}$, and ${\delta _1}  \leqslant  {\delta _2} - 1$, we have $\lambda _{\min }^2 \to 0$. Since $\beta  < M$, the instantaneous achievable rate of Eve becomes
\begin{equation}
I\left( {{\mathbf{s}};{\mathbf{z}}} \right) = \log \left( {1 + \alpha {{\mathbf{g}}^H}{\mathbf{g}}} \right),
\end{equation}
where ${\mathbf{g}} \in {\mathbb{C}^{{N_b}}}$ is a complex Gaussian vector with the distribution of $\mathcal{C}\mathcal{N}({\mathbf{0}},{{\mathbf{I}}_{{N_b}}})$ when the wiretap channel ${\mathbf{G}}$ experiences Rayleigh fading. As ${N_b} \to \infty$, we arrive at
\begin{equation}\label{ARE:2}
\frac{{I\left( {{\mathbf{s}};{\mathbf{z}}} \right)}}{{{N_b}}} \to \mathop {\lim }\limits_{{N_b} \to \infty } \frac{1}{{{N_b}}}\log \left( {1 + \alpha {N_b}} \right) = 0.
\end{equation}

Combining (\ref{infi:1}) and (\ref{ARE:2}), the proof of Theorem 4 is completed.

\section*{Appendix E\\
Proof of Corollary 1}

The definition of $f\left( {t,r,x} \right)$ in (\ref{func:1}) exhibits the property of
\begin{equation}\label{pp:1}
f\left( {t,r,x} \right) = f\left( {r,t,x} \right).
\end{equation}

Meanwhile, the last equation of (\ref{isz:1}) reveals the following two properties. For ${{x_1}} \leqslant {{x_2}}$, it follows that
\begin{equation}\label{pp:2}
f\left( {t,r,{x_1}} \right)  \leqslant  f\left( {t,r,{x_2}} \right),
\end{equation}
since a higher SNR results in a larger achievable rate.

For ${{t_1}} \leqslant {{t_2}}$, it follows that
\begin{equation}\label{pp:3}
f\left( {{t_1},r,x} \right)  \leqslant  f\left( {{t_2},r,x} \right),
\end{equation}
as increasing the number of antennas enhances the achievable rate.

If $\gamma  \leqslant {N_b}/{N_a}$, then we have
\begin{equation}\label{scaling:1}
f\left( {{N_a},{N_b},\alpha \gamma {N_a}} \right) \leqslant f\left( {{N_a},{N_b},\alpha {N_b}} \right).
\end{equation}

If ${N_e} \geqslant 2{N_a} - {N_b}$, using (\ref{pp:1}) and (\ref{pp:3}), we have
\begin{equation}\label{scaling:2}
\begin{gathered}
  f\left( {{N_b},{N_e} - {N_a} + {N_b},\alpha {N_b}} \right) \hfill \\
   = f\left( {{N_e} - {N_a} + {N_b},{N_b},\alpha {N_b}} \right) \hfill \\
   \geqslant f\left( {{N_a},{N_b},\alpha {N_b}} \right). \hfill \\
\end{gathered}
\end{equation}

Substituting (\ref{scaling:1}) and (\ref{scaling:2}) into (\ref{theo:1}), the proof of Corollary 2 is completed.

\section*{Appendix F\\
Proof of Corollary 3}

Define ${\delta _3} = {\delta _1} - {\delta _2} + 1$, given a fixed ${{\delta _2}}$, as ${\delta _1} \to \infty $, we get ${\delta _3} \to \infty $.

{
Applying (\ref{phipro:2}) on the second term of (\ref{theo:3}), we arrive at
\begin{equation}\label{comment:1}
\mathop {\lim }\limits_{{\delta _3} \to \infty } \Phi \left( {\alpha {N_b},{\delta _3}} \right) = \mathop {\lim }\limits_{{\delta _3} \to \infty } {\delta _3}\Phi \left( {\alpha {N_b}{\delta _3},\frac{1}{{{\delta _3}}}} \right).
\end{equation}

Substituting the definition in (\ref{fidef:1}) into (\ref{comment:1}), the above limit can be expanded as
\begin{equation}\label{comment:2}
\begin{gathered}
  \mathop {\lim }\limits_{{\delta _3} \to \infty } {\delta _3}\Phi \left( {\alpha {N_b}{\delta _3},\frac{1}{{{\delta _3}}}} \right) \hfill \\
   = \mathop {\lim }\limits_{{\delta _3} \to \infty } \log \left( {1 + \alpha {N_b}{\delta _3} - \mathcal{F}\left( {\alpha {N_b}{\delta _3},\frac{1}{{{\delta _3}}}} \right)/4} \right) \hfill \\
   + {\delta _3}\log \left( {1 + \alpha {N_b} + \mathcal{F}\left( {\alpha {N_b}{\delta _3},\frac{1}{{{\delta _3}}}} \right)/4} \right) \hfill \\
   - \frac{{\log e}}{{4\alpha {N_b}}}\mathcal{F}\left( {\alpha {N_b}{\delta _3},\frac{1}{{{\delta _3}}}} \right). \hfill \\
\end{gathered}
\end{equation}

From the definition in (\ref{Fdef:1}), we further have
\begin{equation}\label{comment:3}
\begin{gathered}
  \mathop {\lim }\limits_{{\delta _3} \to \infty } \mathcal{F}\left( {\alpha {N_b}{\delta _3},\frac{1}{{{\delta _3}}}} \right) = \mathop {\lim }\limits_{{\delta _3} \to \infty } \mathcal{F}\left( {\alpha {N_b},{\delta _3}} \right) \hfill \\
   = \mathop {\lim }\limits_{{\delta _3} \to \infty } {\left( {\sqrt {\alpha {N_b}{{\left( {1 + \sqrt {{\delta _3}} } \right)}^2} + 1}  - \sqrt {\alpha {N_b}{{\left( {1 - \sqrt {{\delta _3}} } \right)}^2} + 1} } \right)^2} \hfill \\
   = \mathop {\lim }\limits_{{\delta _3} \to \infty } {\left( {\frac{{4\alpha {N_b}\sqrt {{\delta _3}} }}{{\sqrt {\alpha {N_b}{{\left( {1 + \sqrt {{\delta _3}} } \right)}^2} + 1}  + \sqrt {\alpha {N_b}{{\left( {1 - \sqrt {{\delta _3}} } \right)}^2} + 1} }}} \right)^2} \hfill \\
   = 4\alpha {N_b}. \hfill \\
\end{gathered}
\end{equation}

Substituting (\ref{comment:3}) into (\ref{comment:2}), we finally obtain
\begin{equation}
\begin{gathered}
  \mathop {\lim }\limits_{{\delta _3} \to \infty } {\delta _3}\Phi \left( {\alpha {N_b}{\delta _3},\frac{1}{{{\delta _3}}}} \right) \hfill \\
   = \mathop {\lim }\limits_{{\delta _3} \to \infty } \log \left( {1 + \alpha {N_b}\left( {{\delta _3} - 1} \right)} \right) \hfill \\
   + {\delta _3}\log \left( {1 + 2\alpha {N_b}} \right) - \log e \to \infty . \hfill \\
\end{gathered}
\end{equation}
}

Given fixed ${{P_s}}$ and ${{\delta _2}}$, it follows
\begin{equation}
\mathop {\lim }\limits_{{\delta _1} \to \infty } \Phi \left( {\alpha \gamma {N_b},{\delta _2}} \right) - \Phi \left( {\alpha {N_b},{\delta _3}} \right) < 0.
\end{equation}

According to the definition of ${R_s}$ in (\ref{vr:1}), ${R_s}/{N_b} \to 0$ is proved.

\section*{Appendix G\\
Proof of Corollary 5}

When ${N_e} > {N_a} - {N_b}$ is established, the approximations for average and instantaneous secrecy rates are provided by (\ref{theo:3}), expressed as
\begin{equation}\label{appro:1}
R_s^{app} = {{N_b}{{\left[ {\Phi \left( {\alpha \gamma {N_b},\frac{{{N_a}}}{{{N_b}}}} \right) - \Phi \left( {\alpha {N_b},\frac{{{N_e} - {N_a} + {N_b}}}{{{N_b}}}} \right)} \right]}^ + }}.
\end{equation}

When it comes to $ {N_e} \leqslant {N_a} - {N_b}$, the instantaneous achievable rate of Eve is given by (\ref{appd:1}), shown as
\begin{equation}\label{appro:2}
I\left( {{\mathbf{s}};{\mathbf{z}}} \right) = \log \left( {1 + \frac{{\alpha {{\mathbf{g}}^H}{\mathbf{g}}}}{{\alpha \beta \lambda _{\min }^2 + 1}}} \right).
\end{equation}

Since ${{N_a}}$, ${{N_b}}$, ${{N_e}}$ are finite, the result $\lambda _{\min }^2 \to 0$ no longer holds. Based on the PDF provided in (\ref{Wishart:1}), $\lambda _{\min }^2$ can be approximated by its expectation as
\begin{equation}
\mu  = \int_0^\infty  {B{f_B}\left( b \right){\text{d}}b}.
\end{equation}

According to (\ref{infi:1}) and (\ref{appro:2}), the approximations for average and instantaneous secrecy rates are written as
\begin{equation}\label{appro:3}
{R_s^{app} = {N_b}\Phi \left( {\alpha \gamma {N_b},\frac{{{N_a}}}{{{N_b}}}} \right) - \log \left( {1 + \frac{{\alpha {N_b}}}{{\alpha \beta \mu  + 1}}} \right){\text{.}}}
\end{equation}

Combining (\ref{appro:1}) and (\ref{appro:3}), the proof of Corollary 5 is completed.

\section*{Appendix H\\
Proof of Corollary 6}

Based on (\ref{aare:1}), the average achievable rate of Eve can be rewritten as
\begin{equation}
{{\text{E}}_{{\mathbf{H}},{\mathbf{G}}}}\left[ {I\left( {{\mathbf{s}};{\mathbf{z}}\left| {{\mathbf{H}},{\mathbf{G}}} \right.} \right)} \right] = {{\rm E}_{\mathbf{g}}}\left( {\log \left( {1 + \frac{{\alpha {{\mathbf{g}}^H}{\mathbf{g}}}}{{\alpha \beta \lambda _{\min }^2 + 1}}} \right)} \right).
\end{equation}

In the case of ${N_e} \leqslant {N_a} - {N_b}$, the matrix ${\mathbf{G}}{{\mathbf{V}}_0}{\mathbf{V}}_0^H{{\mathbf{G}}^H}$ has full rank, indicating that ${\lambda _{\min }} \neq 0$. As $\beta  \to \infty $, we arrive at
\begin{equation}\label{betainf:1}
\mathop {\lim }\limits_{\beta  \to \infty } {{\text{E}}_{{\mathbf{H}},{\mathbf{G}}}}\left[ {I\left( {{\mathbf{s}};{\mathbf{z}}\left| {{\mathbf{H}},{\mathbf{G}}} \right.} \right)} \right] = 0.
\end{equation}

Substituting (\ref{betainf:1}) into (\ref{vr:2}), we have
\begin{equation}\label{rscb:1}
{\bar R_s} = {{\text{E}}_{\mathbf{H}}}\left[ {I\left( {{\mathbf{s}};{\mathbf{y}}\left| {\mathbf{H}} \right.} \right)} \right] = {\bar C_b},
\end{equation}
where ${\bar C_b}$ denotes the average channel capacity for Bob, and can be achieved  the input $\mathbf{s}$ is a complex Gaussian random vector with the distribution of $\mathcal{C}\mathcal{N}({\mathbf{0}},\sigma _s^2{{\mathbf{I}}_{{N_b}}})$ (see Theorem 1 in \cite{CC1}).

Eq. (\ref{cs:1}) leads to the fact that
\begin{equation}\label{rscb:2}
{\bar R_s} \leqslant {\bar C_s} \leqslant {\bar C_b}.
\end{equation}

The combination of (\ref{rscb:1}) and (\ref{rscb:2}) results in
\begin{equation}
{\bar R_s} = {\bar C_s} = {\bar C_b},
\end{equation}
as $\beta  \to \infty $ in the case of ${N_e} \leqslant {N_a} - {N_b}$.

\section*{Appendix I\\
Proof of Theorem 5}

Without using AN, Alice transmits only information-bearing signals, and to ensure the signal power remains normalized, the total transmission power is
\begin{equation}\label{Pt:1}
P = {P_s} + {P_r} = \alpha \gamma \sigma _u^2\left[ {{N_b} + \beta \left( {{N_a} - {N_b}} \right)} \right].
\end{equation}

Similar to (\ref{isy:1}), the average achievable rate at Bob can be expressed as
\begin{equation}\label{noansc:1}
\begin{gathered}
  \bar R_b^a = {{\text{E}}_{\mathbf{H}}}\left[ {I\left( {{\mathbf{s}};{\mathbf{y}}\left| {\mathbf{H}} \right.} \right)} \right] \hfill \\
  \ \ \ \ = {{\text{E}}_{\mathbf{H}}}\left( {\log \left| {{{\mathbf{I}}_{{N_b}}} + \alpha \gamma \left( {1 + \frac{{\beta \left( {{N_a} - {N_b}} \right)}}{{{N_b}}}} \right){\mathbf{H}}{{\mathbf{H}}^H}} \right|} \right) \hfill \\
  \ \ \ \ = f\left( {{N_a},{N_b},\alpha \gamma {N_a}\left( {1 + \frac{{\beta \left( {{N_a} - {N_b}} \right)}}{{{N_b}}}} \right)} \right). \hfill \\
\end{gathered}
\end{equation}

Since Eve is not disturbed by AN, the average achievable rate at Eve can be given by
\begin{equation}\label{noansc:2}
\begin{gathered}
  \bar R_e^a = {{\text{E}}_{\mathbf{G}}}\left[ {I\left( {{\mathbf{s}};{\mathbf{y}}\left| {\mathbf{G}} \right.} \right)} \right] \hfill \\
  \ \ \ \  = {{\text{E}}_{\mathbf{G}}}\left( {\log \left| {{{\mathbf{I}}_{{N_e}}} + \alpha \left( {1 + \frac{{\beta \left( {{N_a} - {N_b}} \right)}}{{{N_b}}}} \right){\mathbf{G}}{{\mathbf{V}}_1}{\mathbf{V}}_1^H{{\mathbf{G}}^H}} \right|} \right) \hfill \\
  \ \ \ \  = f\left( {{N_a},{N_e},\alpha {N_a}\left( {1 + \frac{{\beta \left( {{N_a} - {N_b}} \right)}}{{{N_b}}}} \right)} \right). \hfill \\
\end{gathered}
\end{equation}

Substituting (\ref{noansc:1}) and (\ref{noansc:2}) into (\ref{vr:1}), the proof of Theorem 5 is completed.

\section*{Appendix J\\
Proof of Theorem 6}

{Without using AN, Alice transmits only information-bearing signals, and the total transmission power is given in (\ref{Pt:1}). As such, the instantaneous achievable rate at Bob can be expressed as
\begin{equation}
R_b^a = I\left( {{\mathbf{s}};{\mathbf{y}}\left| {\mathbf{H}} \right.} \right)\; = \log \left| {{{\mathbf{I}}_{{N_b}}} + \frac{\eta }{{{N_a}}}{\mathbf{H}}{{\mathbf{H}}^H}} \right|,
\end{equation}
where $\eta $ is defined in (\ref{theo:5}).

As ${N_a},{N_b} \to \infty $, from (68) in \cite{ANE2}, we have
\begin{equation}\label{arba:1}
\frac{{R_b^a}}{{{N_b}}} \to \Phi \left( {\frac{\eta }{{{\delta _2}}},{\delta _2}} \right).
\end{equation}

Since Eve is not affected by AN, the instantaneous achievable rate at Eve can be given by
\begin{equation}
R_e^a = I\left( {{\mathbf{s}};{\mathbf{y}}\left| {\mathbf{G}} \right.} \right) = \log \left| {{{\mathbf{I}}_{{N_e}}} + \frac{\eta }{{\gamma {N_a}}}{\mathbf{G}}{{\mathbf{V}}_1}{\mathbf{V}}_1^H{{\mathbf{G}}^H}} \right|.
\end{equation}

Similarly, as ${N_a},{N_e} \to \infty $, we obtain
\begin{equation}\label{area:1}
\frac{{R_e^a}}{{{N_e}}} \to \Phi \left( {\frac{{\eta {\delta _1}}}{{\gamma {\delta _2}}},\frac{{{\delta _2}}}{{{\delta _1}}}} \right).
\end{equation}

Using (\ref{area:1}), we arrive at
\begin{equation}\label{area:2}
\frac{{R_e^a}}{{{N_b}}} = \frac{{{N_e}}}{{{N_b}}}\frac{{R_e^a}}{{{N_e}}} \to {\delta _1}\Phi \left( {\frac{{\eta {\delta _1}}}{{\gamma {\delta _2}}},\frac{{{\delta _2}}}{{{\delta _1}}}} \right).
\end{equation}

Combining (\ref{arba:1}) and (\ref{area:2}), the proof of Theorem 6 is completed.}

\section*{Appendix K\\
Proof of Corollary 10}

According to Corollary 7, $\gamma  \leqslant 1$ and ${N_e} \geqslant {N_b}$ lead to $\bar R_s^a = 0$.

According to Theorem 1, ${N_a} - {N_b} < {N_e} < 2{N_a} - {N_b}$ and $\gamma  > {N_b}/{N_a}$ result in $\bar R_s > 0$.

Due to the increase of ${N_e}$ reduces the value of $\bar R_s$, ${N_e} \leqslant {N_a} - {N_b}$ also leads to $\bar R_s > 0$.

In summary, ${N_b}/{N_a} < \gamma  \leqslant 1$ combined with ${N_b} \leqslant {N_e} < 2{N_a} - {N_b}$ result in $0 = \bar R_s^a < {{\bar R}_s}$.

\section*{Appendix L\\
Proof of Corollary 11}

According to (\ref{noansc:1}), as $\beta  \to 0$, the average achievable rate at Bob without the AN scheme is expressed as
\begin{equation}
\bar R_b^a = {{\rm E}_{\mathbf{H}}}\left[ {I\left( {{\mathbf{s}};{\mathbf{y}}\left| {\mathbf{H}} \right.} \right)} \right] = {{\text{E}}_{\mathbf{H}}}\left( {\log \left| {{{\mathbf{I}}_{{N_b}}} + \alpha \gamma {\mathbf{H}}{{\mathbf{H}}^H}} \right|} \right),
\end{equation}
which is the same as the one with the AN scheme $\bar R_b$ shown in (\ref{isy:1}).

According to (\ref{noansc:2}), as $\beta  \to 0$, the average achievable rate at Eve without the AN scheme is expressed as
\begin{equation}
\bar R_e^a = {{\rm E}_{\mathbf{G}}}\left[ {I\left( {{\mathbf{s}};{\mathbf{y}}\left| {\mathbf{G}} \right.} \right)} \right] = {{\text{E}}_{\mathbf{G}}}\left( {\log \left| {{{\mathbf{I}}_{{N_e}}} + \alpha {\mathbf{G}}{{\mathbf{V}}_1}{\mathbf{V}}_1^H{{\mathbf{G}}^H}} \right|} \right).
\end{equation}

According to (\ref{isz:1}), if ${N_e} > {N_a} - {N_b}$ is established, the average achievable rate at Eve with the AN scheme is shown as
\begin{equation}
\begin{gathered}
  {{\bar R}_e} = {{\text{E}}_{{\mathbf{H}},{\mathbf{G}}}}\left[ {I\left( {{\mathbf{s}};{\mathbf{z}}\left| {{\mathbf{H}},{\mathbf{G}}} \right.} \right)} \right] \hfill \\
  \ \ \ \ = {{\text{E}}_{{{\mathbf{G}}_1}}}\left( {\log \left| {{{\mathbf{I}}_{{N_e} - {N_a} + {N_b}}} + \alpha {{\mathbf{G}}_1}{\mathbf{G}}_1^H} \right|} \right). \hfill \\
\end{gathered}
\end{equation}

Since both ${{\mathbf{G}}{{\mathbf{V}}_1}}$ and ${{\mathbf{G}}_1} = {\mathbf{V}}_2^H{\mathbf{G}}{{\mathbf{V}}_1}$ are complex
Gaussian random matrices with i.i.d. entries distributed as $ \sim \mathcal{C}\mathcal{N}(0,1)$, the inequality ${{\bar R}_e} < \bar R_e^a$ can be obtained by applying the requirement of AN generation ${{N_a}} > {{N_b}}$, indicating that an increased number of antennas at the eavesdropper enhances Eve's ability to intercept information.

According to (\ref{aare:1}), if ${N_e}  \leqslant  {N_a} - {N_b}$ is established, the average achievable rate at Eve with the AN scheme is also upper bounded by
\begin{equation}
\begin{gathered}
  {{\bar R}_e} = {{\text{E}}_{{\mathbf{H}},{\mathbf{G}}}}\left[ {I\left( {{\mathbf{s}};{\mathbf{z}}\left| {{\mathbf{H}},{\mathbf{G}}} \right.} \right)} \right] \hfill \\
  \ \ \ \ = {{\text{E}}_{\mathbf{g}}}\left( {\log \left( {1 + \frac{{\sigma _s^2{{\mathbf{g}}^H}{\mathbf{g}}}}{{\sigma _r^2{\lambda _{{{\min }^2}}} + \sigma _n^2}}} \right)} \right) \hfill \\
  \ \ \ \ < {{\text{E}}_{\mathbf{g}}}\left( {\log \left( {1 + \alpha {{\mathbf{g}}^H}{\mathbf{g}}} \right)} \right) < \bar R_e^a. \hfill \\
\end{gathered}
\end{equation}

Based on (\ref{theo:5}), ${N_e} < {N_b}$ and $\gamma>1$ lead to $\bar R_s^a > 0$. Therefore, according to the definition of (\ref{vr:2}), we arrive at
\begin{equation}
0 < \bar R_s^a < {{\bar R}_s}.
\end{equation}
\ifCLASSOPTIONcaptionsoff
  \newpage
\fi


\begin{thebibliography}{00}
\bibitem{CM1}
W. Diffie and M. E. Hellman, ``New directions in cryptography,'' \emph{IEEE Trans. Inf. Theory}, vol. 22, no. 6, pp. 644-654, Nov. 1976.
\bibitem{CM2}
M. Hellman, ``An extension of the Shannon theory approach to cryptography,'' \emph{IEEE Trans. Inf. Theory}, vol. 23, no. 3, pp. 289-294, May 1977.
\bibitem{CM3}
R. Ahlswede and I. Csisz\'{a}r, ``Common randomness in information theory and cryptography. I. Secret sharing,'' \emph{IEEE Trans. Inf. Theory},, vol. 39, no. 4, pp. 1121-1132, Jul. 1993.
\bibitem{PLS1}
Mukherjee, S. A. A. Fakoorian, J. Huang, and A. L. Swindlehurst, ``Principles of physical layer security in multiuser wireless networks: A survey,'' \emph{IEEE Commun. Surv. Tut.}, vol. 16, no. 3, pp. 1550-1573, 3rd Quart. 2014.
\bibitem{PLS2}
D. Wang, B. Bai, W. Zhao, and Z. Han, ``A survey of optimization approaches for wireless physical layer security,'' \emph{IEEE Commun. Surv. Tut.}, vol. 21, no. 2, pp. 1878-1911, 2nd Quart. 2019.
\bibitem{PLS3}
Y. Zou, J. Zhu, X. Wang, and L. Hanzo, ``A survey on wireless security: Technical challenges, recent Advances, and future trends,'' \emph{Proc. IEEE Inst. Electr. Electron. Eng.}, vol. 104, no. 9, pp. 1727-1765, Sep. 2016.
\bibitem{PLS4}
J. M. Hamamreh, H. M. Furqan, and H. Arslan, ``Classifications and applications of physical layer security techniques for confidentiality: A comprehensive survey,'' \emph{IEEE Commun. Surv. Tut.}, vol. 21, no. 2, pp. 1773-1828, 2nd Quart. 2019.
\bibitem{PLS6}
H. Niu, \emph{et al.}, `` A survey on artificial noise for physical layer security: Opportunities, technologies, guidelines, advances, and trends,'' \emph{IEEE Commun. Surv. Tut.}, vol. 28, pp. 341-381, Sep. 2025.
\bibitem{PLS7}
J. Chen \emph{et al.}, ``A survey on directional modulation: Opportunities, challenges, recent advances, implementations, and future trends,'' \emph{IEEE Internet Things J.}, vol. 12, no. 16, pp. 32581-32615, Aug. 2025.
\bibitem{PLS5}
F. Naeem, M. Ali, G. Kaddoum, C. Huang, and C. Yuen, ``Security and privacy for reconfigurable intelligent surface in 6G: A review of prospective applications and challenges,'' \emph{IEEE Open J. Commun. Soc.}, vol. 4, pp. 1196-1217, May 2023.
\bibitem{wyner1}
A. D. Wyner, ``The wire-tap channel,'' \emph{Bell Syst. Tech. J.}, vol. 54, no. 8, pp. 1355-1387, Oct. 1975.
\bibitem{AN1}
R. Negi and S. Goel, ``Secret communication using artificial noise,'' in \emph{Proc. IEEE 62nd Veh. Technol. Conf. (VTC-Fall)}, Dallas, TX, USA, Sep. 2005, pp. 1906-1910.
\bibitem{AN2}
S. Goel and R. Negi, ``Guaranteeing secrecy using artificial noise,'' \emph{IEEE Trans. Wireless Commun.}, vol. 7, no. 6, pp. 2180-2189, Jun. 2008.
\bibitem{AN3}
Z. Wang, M. Xiao, M. Skoglund, and H. V. Poor, ``Secure degrees of freedom of wireless X networks using artificial noise alignment,'' \emph{IEEE Trans. Commun.}, vol. 63, no. 7, pp. 2632-2646, Jul. 2015.
\bibitem{AN4}
S. Lin, K. Huang, W. Luo, and Y. Zou, ``Analysis of pilot contamination on the security performance of artificial noise in MIMO systems,'' in \emph{Proc. IEEE 81st Veh. Technol. Conf. (VTC-Spring)}, Glasgow, UK, May. 2015, pp. 1-5.
\bibitem{AN5}
N. Nguyen, H. Q. Ngo, T. Q. Duong, H. D. Tuan, and K. Tourki, ``Secure massive MIMO with the artificial noise-aided downlink training,'' \emph{IEEE J. Sel. Areas Commun.}, vol. 36, no. 4, pp. 802-816, Apr. 2018.
\bibitem{AN6}
L. Sun, L. Cao, Z. Tang, and Y. Feng,``Artificial-noise-aided secure multi-user multi-antenna transmission with quantized CSIT: a comprehensive design and analysis,'' \emph{IEEE Trans. Inf. Forensics and Secur.}, vol. 15, pp. 3734-3748, 2020.
\bibitem{AN7}
W. Wang, K. C. Teh, and K. H. Li, ``Artificial noise aided physical layer security in multi-antenna small-cell networks,'' \emph{IEEE Trans. Inf. Forensics Secur.}, vol. 12, no. 6, pp. 1470-1482, Jun. 2017.
\bibitem{AN9}
C. Song, ``Leakage rate analysis for artificial noise assisted massive MIMO with non-coherent passive eavesdropper in block-fading,'' \emph{IEEE Trans. Wireless Commun.}, vol. 18, no. 4, pp. 2111-2124, Apr. 2019.
\bibitem{AN10}
L. Lv, Z. Ding, Q. Ni, and J. Chen, ``Secure MISO-NOMA transmission with artificial noise,'' \emph{IEEE Trans. Veh. Technol.}, vol. 67, no. 7, pp. 6700-6705, Jul. 2018.
\bibitem{AN11}
F. Wu, L. Yang, W. Wang, and Z. Kong, ``Secret precoding-aided spatial modulation,'' \emph{IEEE Commun. Lett.}, vol. 19, no. 9, pp. 1544-1547, Sep. 2015.
\bibitem{AN12}
F. Wu, R. Zhang, L. Yang, and W. Wang, ``Transmitter precoding-aided spatial modulation for secrecy communications,'' \emph{IEEE Trans. Veh. Technol.}, vol. 65, no. 1, pp. 467-471, Jan. 2016.
\bibitem{AN13}
H. Niu, X. Lei, Y. Xiao, Y. Li, and W. Xiang, ``Performance analysis and optimization of secure generalized spatial modulation,'' \emph{IEEE Trans. on Commun.}, vol. 68, no. 7, pp. 4451-4460, Jul. 2020.
\bibitem{AN14}
Y. Zhu, Y. Zhou, S. Patel, X. Chen, L. Pang, and Z. Xue, ``Artificial noise generated in MIMO scenario: Optimal power design,'' \emph{IEEE Signal Process. Lett.}, vol. 20, no. 10, pp. 964-967, Oct. 2013.
\bibitem{AN8}
N. Li, X. Tao, H. Wu, J. Xu, and Q. Cui, ``Large-system analysis of artificial-noise-assisted communication in the multiuser downlink: Ergodic secrecy sum rate and optimal power allocation,'' \emph{IEEE Trans. Veh. Technol.}, vol. 65, no. 9, pp. 7036-7050, Sep. 2016.
\bibitem{AN15}
S. Tsai and H. V. Poor, ``Power allocation for artificial-noise secure MIMO precoding systems,'' \emph{IEEE Trans. Signal Process.}, vol. 62, no. 13, pp. 3479-3493, Jul. 2014.
\bibitem{AN16}
Q. Li and W. Ma, ``Spatially selective artificial-noise aided transmit optimization for MISO multi-Eves secrecy rate maximization,'' \emph{IEEE Trans. Signal Process.}, vol. 61, no. 10, pp. 2704-2717, May 2013.
\bibitem{AN17}
F. Zhou, Z. Chu, H. Sun, R. Q. Hu, and L. Hanzo, ``Artificial noise aided secure cognitive beamforming for cooperative MISO-NOMA using SWIPT,'' \emph{IEEE J. Sel. Areas Commun.}, vol. 36, no. 4, pp. 918-931, Apr. 2018.
\bibitem{AN18}
H. Wang, T. Zheng, and X. Xia, ``Secure MISO wiretap channels with multiantenna passive eavesdropper: Artificial noise vs. artificial fast fading,'' \emph{IEEE Trans. Wirel. Commun.}, vol. 14, no. 1, pp. 94-106, Jan. 2015.
\bibitem{AN19}
H. Niu, X. Lei, Y. Xiao, D. Liu, Y. Li, and H. Zhang, ``Power minimization in artificial noise aided generalized spatial modulation,'' \emph{IEEE Commun. Lett.}, vol. 24, no. 5, pp. 961-965, May 2020.
\bibitem{MIMO1}
E. Shi et al., ``RIS-aided cell-free massive MIMO systems for 6G: Fundamentals, system design, and applications,'' \emph{Proc. IEEE}, vol. 112, no. 4, pp. 331-364, Apr. 2024.
\bibitem{AN20}
W. Wang, X. Chen, L. You, X. Yi, and X. Gao, ``Artificial noise assisted secure massive MIMO transmission exploiting statistical CSI,'' \emph{IEEE Commun. Lett.}, vol. 23, no. 12, pp. 2386-2389, Dec. 2019.
\bibitem{AN21}
W. Xu, B. Li, L. Tao, and W. Xiang, ``Artificial noise assisted secure transmission for uplink of massive MIMO systems,'' \emph{IEEE Trans. Veh. Technol.}, May 2021.
\bibitem{AN23}
K. Liu, J. Chen, and X. Lei, ``Artificial noise aided directional modulation: A transmitter architecture perspective,'' \emph{IEEE Trans. Commun.}, vol. 73, no. 5, pp. 3046-3060, May 2025.
\bibitem{AN24}
Y. Ding and V. F. Fusco, ``A vector approach for the analysis and synthesis of directional modulation transmitters,'' \emph{IEEE Trans. Antennas Propag.}, vol. 62, no. 1, pp. 361-370, Oct. 2013.
\bibitem{AN22}
J. Chen, Y. Xiao, K. Liu, Y. Zhong, X. Lei, and M. Xiao, ``Physical layer security for near-field communications via directional modulation,'' \emph{IEEE Trans. Veh. Technol.}, vol. 73, no. 8, pp. 12242-12246, Aug. 2024.
\bibitem{ANE2}
S. Liu, Y. Hong, and E. Viterbo, ``Artificial noise revisited,'' \emph{IEEE Trans. Inf. Theory.}, vol. 61, no. 7, pp. 3901-3911, Jul. 2015.
\bibitem{ANE1}
S. Liu, Y. Hong, and E. Viterbo, ``Artificial noise revisited: When Eve has more antennas than Alice,'' in \emph{Proc. 2014 Int. Conf. Signal Process. Commun. (SPCOM)}, Bangalore, India, Jul. 2014, pp. 1-5.
pp. 2996-3000.
\bibitem{ANE3}
H. Niu, Y. Xiao, X. Lei, and M. Xiao, ``Artificial noise elimination: From the perspective of eavesdroppers,'' \emph{IEEE Trans. Commun.}, vol. 70, no. 7, pp. 4745-4754, Jul. 2022.
\bibitem{ANE4}
H. Niu, Y. Xiao, X. Lei, G. Wang, M. Xiao, and S. Mumtaz, ``When the CSI from Alice to Bob is unavailable: What can Eve do to eliminate the artificial noise?,'' in \emph{Proc. IEEE 96th Veh. Technol. Conf. (VTC-Fall)}, London, United Kingdom, Sep. 2022, pp. 1-5.
\bibitem{Fisher1}
H. Niu, X. Lei, G. Wu, G. Wang, C. Yuen, and F. Adachi, ``Artificial noise elimination without the transmitter receiver link CSI,'' \emph{IEEE Trans. Veh. Technol.}, vol. 73, no. 9, pp. 13206-13218, Sep. 2024.
\bibitem{HC}
L. Liu, J. Liang, and K. Huang, ``Eavesdropping against artificial noise: Hyperplane clustering,'' in \emph{Proc. IEEE Third Int. Conf. Inf. Sci. Technol. (ICIST)}, Yangzhou, China, Feb. 2013, pp. 1571-1575.
\bibitem{PLI1}
M. M\'{e}dard and K. R. Duffy, ``Physical layer insecurity,'' in \emph{Proc. 57th Annu. Conf. Inf. Sci. Syst. (CISS)}, Baltimore, MD, USA, Mar. 2023, pp. 1-6.
\bibitem{PLI2}
K. R. Duffy and M. M\'{e}dard, ``Soft detection physical layer insecurity,'' in \emph{Proc. 2023 IEEE Global Commun. Conf.}, Kuala Lumpur, Malaysia, Dec. 2023, pp. 1747-1752.
\bibitem{SR1}
H. Shin and J. H. Lee, ``Closed-form formulas for ergodic capacity of MIMO Rayleigh fading channels,'' in \emph{Proc. IEEE Int.
Conf. Commun. (ICC)}, Anchorage, AK, USA, May 2003.
\bibitem{Wishart1}
A. Zanella, M. Chiani, and M. Z. Win, ``On the marginal distribution of the eigenvalues of wishart matrices,'' \emph{IEEE Trans. Commun.}, vol. 57, no. 4, pp. 1050-1060, Apr. 2009.
\bibitem{CC1}
E. Telatar, ``Capacity of multi-antenna Gaussian channels,'' \emph{Eur. Trans. Telecommun.}, vol. 10, no. 6, pp. 585-595, 1999.

\end{thebibliography}
\end{document}